\definecolor{myblue}{rgb}{0.2,0.2,0.8}
\definecolor{myblack}{rgb}{0,0,0}
\definecolor{myurl}{rgb}{0.1,0.1,0.4}
\edef\restoreparindent{\parindent=\the\parindent\relax}
\newtheorem{definition}{Definition}
\newcommand{\defref}[1]{Definition~\ref{#1}}
\newtheorem{lemma}{Lemma}
\numberwithin{lemma}{section}
\newcommand{\lemref}[1]{Lemma~\ref{#1}}
\newtheorem{prop}{Proposition}
\numberwithin{prop}{section}
\newcommand{\propref}[1]{Proposition~\ref{#1}}
\newtheorem{corollary}{Corollary}
\numberwithin{corollary}{section}
\newcommand{\corref}[1]{Corollary~\ref{#1}}
\newtheorem{example}{Example}
\numberwithin{example}{section}
\newtheorem{theorem}{Theorem}
\numberwithin{theorem}{section}
\newcommand{\thmref}[1]{Theorem~\ref{#1}}
\newcommand{\cmark}{\ding{51}}%
\newcommand{\xmark}{\ding{55}}%
\newcommand{\<}{\langle}
\renewcommand{\>}{\rangle}
\newcommand{\rr}{{\mathcal{R}}}
\newcommand{\lo}{{\mathcal{L}}}
\newcommand{\s}{{\mathcal{S}}}
\newcommand{\ee}{{\mathcal{E}}}
\newcommand{\xx}{{\mathcal{X}}}
\newcommand{\yy}{{\mathcal{Y}}}
\renewcommand{\aa}{{\mathcal{A}}}
\newcommand{\ff}{{\mathcal{F}}}
\newcommand{\mm}{{\mathcal{M}}}
\newcommand{\ii}{{\mathcal{I}}}
\newcommand{\co}{\mathds{C}}
\newcommand{\re}{\mathds{R}}
\newcommand{\h}{{\mathcal{H}}}
\newcommand{\kk}{{\mathcal{K}}}
\newcommand{\hs}{{\mathcal{H}\sub{\s}}}
\newcommand{\ha}{{\mathcal{H}\sub{\aa}}}
\newcommand{\E}{\mathsf{E}}
\newcommand{\F}{\mathsf{F}}
\newcommand{\Z}{\mathsf{Z}}
\newcommand{\G}{\mathsf{G}}
\renewcommand{\P}{\mathsf{P}}
\renewcommand{\nat}{\mathds{N}}
\newcommand{\one}{\mathds{1}}
\newcommand{\onesys}{\mathds{1}\sub{\s}}
\newcommand{\oneapp}{\mathds{1}\sub{\aa}}
\newcommand{\zero}{\mathds{O}}
\newcommand{\imag}{\mathfrak{i}}
\newcommand{\tr}{\mathrm{tr}}
\newcommand{\tra}{\mathrm{tr}\sub{\aa}}
\newcommand{\av}{\mathrm{av}}
\newcommand{\sub}[1]{_{\!\mathsmaller{\, #1}}}
\newcommand{\eq}[1]{Eq.~\eqref{#1}}
\newcommand{\app}[1]{Appendix~(\ref{#1})}
\newcommand{\ket}[1]{|{#1}\rangle}
\newcommand{\pr}[1]{P_{#1}}
\newcommand{\rank}[1]{\mathrm{rank}\left( {#1}\right)}
\begin{document}

\title{Quantum measurements constrained by the third law of thermodynamics}
\author{M. Hamed Mohammady  }
\email{mohammad.mohammady@ulb.be}
\affiliation{QuIC, \'{E}cole Polytechnique de Bruxelles, CP 165/59, Universit\'{e} Libre de Bruxelles, 1050 Brussels, Belgium}

\author{Takayuki Miyadera}
\email{miyadera@nucleng.kyoto-u.ac.jp}
\affiliation{Department of Nuclear Engineering, Kyoto University, Nishikyo-ku, Kyoto 615-8540, Japan}


\begin{abstract}
In the quantum regime, the third law of thermodynamics implies the unattainability of pure states. As shown recently, such unattainability implies that a  unitary interaction between the measured system and a measuring apparatus can never implement an ideal projective measurement. In this paper, we introduce an operational formulation of the third law for the most general class of physical transformations, the violation of which is both necessary and sufficient for the preparation of pure states. Subsequently, we investigate how such a law constrains measurements of general observables, or positive operator valued measures. We identify several desirable properties of measurements which are simultaneously enjoyed by ideal projective measurements---and are hence all ruled out by the third law in such a case---and determine if the third law allows for these properties to obtain for general measurements of general observables and, if so, under what conditions. It is shown that while the third law rules out some of these properties for all observables, others may be enjoyed by observables that are sufficiently ``unsharp''.
\end{abstract}

\maketitle

\section{introduction}

One of the standard assumptions of textbook quantum mechanics is the  ``L\"uders rule'' which states that when an observable---represented by a self-adjoint operator with a non-degenerate spectrum---is measured in a system, the state of the  system collapses to the eigenstate of the observable associated with the observed eigenvalue  \cite{Luders2006, Busch2009a, Hegerfeldt2012}. But assuming the universal validity of quantum theory, such a state change must be consistent with a  description of the measurement process as a physical interaction between the system to be measured and a given  (quantum) measuring apparatus.  In his groundbreaking contribution to quantum theory in 1932 \cite{Von-Neumann-Foundations}, von Neumann introduced just such a model for the measurement process, where the system and apparatus interact unitarily.

While L\"uders measurements, and von Neumann's model for their realisation, are always available within the formal framework of quantum theory, they may not always be feasible in practice---technological obstacles and fundamental physical principles must also be accounted for. One such  principle is that of conservation laws and, as shown by the   Wigner-Araki-Yanase theorem \cite{E.Wigner1952,Busch2010, Araki1960,Ozawa2002,Miyadera2006a,Loveridge2011,Ahmadi2013b,Loveridge2020a,Mohammady2021a,Kuramochi2022}, only observables commuting with the conserved quantity admit a L\"uders measurement.  This observation naturally raises the following question: do other physical principles constrain quantum measurements, and if so, how? Given that von Neumann's model for the measurement process assumes that the measuring apparatus is initially prepared in a pure state, an obvious candidate for consideration immediately presents itself: the third law of thermodynamics, or Nernst's unattainability principle, which states that a system cannot be  cooled to absolute zero temperature with finite  time, energy, or control complexity;  in the quantum regime, the third law  prohibits the preparation of pure states \cite{Schulman2005, Allahverdyan2011a, Reeb2013a, Masanes2014, Ticozzi2014, Scharlau2016a, Wilming2017a, Freitas2018, Clivaz2019, Taranto2021, Buffoni2022}.  As argued by  Guryanova \emph{et al.}, such unattainability rules out   L\"uders measurements for  any self-adjoint operator with a non-degenerate spectrum \cite{Guryanova2018}.  

The more modern quantum theory of measurement \cite{PaulBuschMarianGrabowski1995, Busch1996, Heinosaari2011, Busch2016a} states that the properties of a quantum system are not exhausted by its sharp observables, i.e., observables represented by self-adjoint operators. Indeed, observables can be fundamentally unsharp, and are properly represented as positive operator valued measures (POVM) \cite{Busch2010a, Jaeger2019}. Similarly, the state change that results from measurement is more properly captured by the notion of instruments \cite{Davies1970}, which need not obey the L\"uders rule.  Moreover, the interaction between system and apparatus during the measurement process is not necessarily  unitary, and is more generally described as a channel, which   more accurately describes situations where the interaction with the environment cannot be neglected. Therefore, how the third law constrains  general measurements should be addressed; in this paper, we shall thoroughly examine this in the finite-dimensional setting. 

First, we provide a minimal operational formulation of the third law by constraining the class of permissible channels so that the availability of a channel not so constrained is both necessary and sufficient for the preparation of pure states.   The considered class of channels include  those whose input and output spaces are not the same, which is the case when the process considered involves composing and discarding systems, and is more general  than the class of rank non-decreasing channels, such as  unitary channels. Indeed, the rank non-decreasing concept can only be properly applied to the limited cases where the input and output systems of a channel are the same.
 
 Subsequently, we consider the most general class of measurement schemes that are constrained by the third law. That is, we  do not assume that  the measured observable is sharp, or that the pointer observable is sharp, or that the measurement interaction is  rank non-decreasing.  Next, we determine if the instruments realised by such measurement schemes may satisfy several desirable properties and, if so, under what conditions. These properties are:
\begin{enumerate}[\bf(i)]
    \item {\bf Non-disturbance:}  a non-selective measurement does not affect the subsequent measurement statistics of any observable that commutes with the measured observable.
    
    \item {\bf First-kindness:}  a non-selective measurement of an observable does not affect its  subsequent measurement statistics.
    
    \item {\bf  Repeatability:}  successive measurements of an observable are guaranteed to produce the same outcome.
    
    \item {\bf Ideality:}  whenever an outcome is certain from the outset, the measurement does not change the state of the measured system.
    
    \item {\bf  Extremality:}  the instrument cannot be  written as a probabilistic mixture of distinct instruments.
\end{enumerate}
  L\"uders measurements of sharp observables  simultaneously satisfy the above properties.  
In general, however, these properties can be satisfied by instruments   that do not obey the L\"uders rule, and also for observables that are not necessarily sharp. Moreover, they  are in general not equivalent: an instrument can enjoy one while not another \cite{Lahti1991, Heinosaari2010, DAriano2011}. We therefore investigate each such property individually,  providing necessary and sufficient conditions for their fulfilment by a measurement constrained by the third law.

We show that the third law prohibits a measurement of any \emph{small-rank} observable---an observable that has at least one rank-1 effect, or POVM element---from satisfying any of the above properties. On the other hand, extremality is shown to be permitted for an observable if each effect has sufficiently large rank, but only if the interaction between the system and apparatus is non-unitary. Finally, we show that while repeatbility and ideality are  forbidden for all observables, non-disturbance and first-kindness are permitted for observables that are \emph{completely unsharp}: the effects of such observables do not have either eigenvalue 1 or 0, and so such observables do not enjoy the ``norm-1'' property. That is, non-disturbance and first-kindness are only permitted for observables  that cannot have a definite value in any state.  Our results are summarised in Table \ref{table:results}.

\begin{table}[!htb]
\begin{tabular}{ |p{0.5cm}||p{1.5cm}|p{1.5cm}|p{1.5cm}| p{1.5cm}| }
 \hline
 \multicolumn{1}{|c||}{} &  \multicolumn{4}{c|}{Observable} \\
  \hline
& Small-rank  & Sharp & Norm-1 &  Completely unsharp\\
 \hline
 \bf{(i)}   & \xmark    & \xmark &  \xmark & \cmark\\
  \hline
 \bf{(ii)}  &   \xmark   & \xmark  & \xmark & \cmark\\
  \hline
\bf{(iii)}   & \xmark & \xmark &  \xmark & \xmark\\
  \hline
 \bf{(iv)}     & \xmark & \xmark &  \xmark & \xmark \\
  \hline
 \bf{(v)}  &   \xmark  &  \cmark & \cmark &  \cmark\\
 
 \hline 

\end{tabular}
\caption{The possibility (\cmark) or impossibility (\xmark) of an observable to admit the properties  {\bf(i) - (v)} outlined above are indicated  for four classes of observables: small-rank observables have at least one rank-1 effect; sharp observables are such that all effects are projections; norm-1 observables are such that every effect has eigenvalue 1; and completely unsharp observables are such that no effect has eigenvalue  1 or 0. }
\label{table:results}
\end{table}

\section{Operational formulation of the third law for channels}
The third law of thermodynamics states that in the absence of infinite resources of time, energy, or control complexity, a system cannot be cooled to absolute zero temperature. Assuming the universal validity of this law, then it must also hold in the quantum regime \cite{Schulman2005, Allahverdyan2011a, Reeb2013a, Masanes2014, Ticozzi2014, Scharlau2016a, Wilming2017a, Freitas2018, Clivaz2019, Taranto2021}. Throughout, we shall only consider quantum systems with a finite-dimensional Hilbert space $\h$. When such a system is in thermal equilibrium at some temperature, it is in a Gibbs state, and whenever the temperature is non-vanishing, such states are full-rank. Conversely, at absolute zero temperature the system will be in a low-rank state, i.e., it will not have full rank. In the special case of a non-degenerate Hamiltonian, the system will in fact be in a pure state.  A minimal operational formulation of the third law in the quantum regime can therefore be phrased as follows:  the possible transformations of quantum systems must be constrained so that the only attainable states have full rank.

 In the Schr\"odinger picture, the most general transformations of quantum systems are represented by  channels $\Phi: \lo(\h) \to \lo(\kk)$, i.e., completely positive trace-preserving maps from the algebra of linear operators on an input Hilbert space $\h$ to that of an output space $\kk$. In the special case where $\h = \kk$, we say that $\Phi$ acts in $\h$. But in general  $\h$ need not be identical to  $\kk$, and the two systems may have different dimensions. This is because physically permissible transformations include the composition of multiple systems, and discarding of subsystems.  

Previous formulations   of the third law (see, for example,  Proposition 5 of Ref. \cite{Reeb2013a} and Appendix B of Ref. \cite{Taranto2021}) have  restricted the class of available channels to those with the same input and output system, and where the channel does not reduce the rank of the input state of such a system: these are referred to as rank non-decreasing channels, with unitary channels constituting a simple example.  An intuitive argument for such restriction is as follows. Consider the case where we wish to cool the system of interest by an interaction with an infinitely large heat bath. But to utilise all degrees of freedom of such a bath one must either manipulate them all at once, which requires an infinite resource of control complexity, or one must approach the quasistatic limit, which requires an infinite resource of time. It stands to reason that,  in a realistic protocol,  only finitely many degrees of freedom of the bath can be accessed and so the system of interest effectively interacts with a finite, bounded,  thermal bath. Such a bath is represented by a Gibbs state with a non-vanishing temperature, which has full rank. It is a simple task to show that if the interaction between the system of interest and the finite thermal bath is a rank non-decreasing channel---such as a unitary channel---acting in the compound of system-plus-bath,  then the rank of the system  cannot be reduced unless infinite energy is spent.  It follows that if the input state of the system is full-rank, for example if it is a Gibbs state with a non-vanishing temperature, then the third law thus construed will only allow for such a state to be transformed to another full-rank state. 

The above formulation has some drawbacks, however. First, the argument relies on the strong assumption that the system interacts with a thermal environment, which is not justified under purely operational grounds; the environment may in fact be an out of equilibrium system.  Second, the rank non-decreasing condition can only be properly applied to channels  with an identical input and output: the rank of a state  on $\h$ only has meaning in relation to the dimension of $\h$. Indeed,  the partial trace channel (describing the process by which one subsystem is discarded) and the composition channel (describing the process by which the system of interest is joined with an auxiliary system initialised in some fixed state) are physically relevant transformations that must also be addressed, but lead to absurdities when the change in the state's rank is examined. The partial trace channel is  rank-decreasing,  but tracing out one subsystem of a global  full-rank state can only prepare a state that has full rank in the remaining subsystem. On the other hand, the composition channel  is rank-increasing. But it is simple to show that if the rank of the auxiliary state is sufficiently small, then a unitary channel can be applied on the compound so as to purify the system of interest. We thus propose the following minimal definition for channels constrained by the third law, which is conceptually sound, and which does not rely on any assumptions regarding the environment and how it interacts with the system under study, and accounts for the most general class of channels:

\begin{definition}\label{defn:third-law}
A channel $\Phi : \lo(\h) \to \lo(\kk)$ is constrained by the third law if for every full-rank state $\rho$ on $\h$, $\Phi(\rho)$ is a full-rank state on $\kk$. 
\end{definition}
Properties of channels obeying the above definition are given in \app{app:channel-third-law-properties}, and  as shown in \app{app:third-law-channel-proof},  if we are able to implement any channel that is constrained by the third law, then the added ability to implement a channel  not so constrained, that is, a channel that may map some full-rank state to a low-rank state, is both necessary and sufficient for preparing a system in a pure state, given any unknown initial state $\rho$. Moreover, note that while a rank non-decreasing channel acting in $\h$ satisfies \defref{defn:third-law}, a channel acting in $\h$ and which satisfies such a definition need not be rank non-decreasing: a channel constrained by the third law may reduce the rank of some input state, but only if such a state is not full-rank. Finally, \defref{defn:third-law} has the  benefit that for any pair of channels $\Phi_1$ and $\Phi_2$ satisfying such property, where the output of the former corresponds with the input of the latter, so too does their composition $\Phi_2 \circ \Phi_1$; the set of channels constrained by the third law is thus closed under composition.  

\defref{defn:third-law} also allows us to re-examine the constraints imposed by the third law on state preparations, without modeling a finite-dimensional environment prepared in a Gibbs state, or assuming that the system interacts with such an environment by a rank non-decreasing channel. A state preparation is a physical process so that, irrespective of what input state is given, the output is prepared in a unique state $\rho$; indeed, an operational definition of a state is precisely the specification of procedures, or transformations, that produce it.  As stated in  Ref. \cite{Gour2020b}, \emph{``A quantum state can be understood
as a preparation channel, sending a trivial quantum system
to a non-trivial one prepared in a given state''}. That is, state preparations on a Hilbert space $\h$ may be identified with the  set of preparation channels
\begin{align*}
\mathscr{P}(\h) := \{\Phi : \lo(\co^1) \to \lo(\h)   \}   . 
\end{align*} 
Here, the input space is a 1-dimensional Hilbert space $\co^1 \equiv \co |\Omega\>$, and the only state on such a space is the rank-1 projection $|\Omega\>\<\Omega|$. The triviality of the input space captures the notion that the output of the channel $\Phi$ is independent of the input, and so the prepared state $\rho = \Phi(|\Omega\>\<\Omega|)$ is uniquely identified with the channel  itself.    Without any constraints, all states $\rho$ on $\h$ may be prepared by some $\Phi \in \mathscr{P}(\h)$. But now we may restrict the class of preparations by the third law as follows:  $\mathscr{P}(\h)$ is constrained by the third law if all $\Phi \in \mathscr{P}(\h)$ map full-rank states to full-rank states as per \defref{defn:third-law}. But note that $|\Omega\>\<\Omega|$ has full rank in $\co^1$, and so $\rho $ is guaranteed to be full-rank in $\h$.

\section{Quantum measurement}
Before investigating how the third law constrains quantum measurements, we shall first cover briefly some basic elements of quantum measurement theory which will be used in the sequel  \cite{PaulBuschMarianGrabowski1995, Busch1996, Heinosaari2011, Busch2016a}.  

\subsection{Observables}
Consider a quantum system $\s$ with a Hilbert space $\hs$ of finite dimension $2 \leqslant \dim(\hs) < \infty$. We denote by   $\zero$ and $\onesys$ the null and identity operators on $\hs$, respectively, and an operator $E$ on $\hs$ is called an \emph{effect} if it holds that $\zero \leqslant E \leqslant \onesys$. An observable of  $\s$  is represented by  a normalised positive operator valued measure (POVM) $\E : \Sigma \to \mathscr{E}(\hs)$, where $\Sigma$ is a sigma-algebra of some value space $\xx$, representing the possible measurement outcomes, and $\mathscr{E}(\hs)$ is the space of effects on $\hs$.  We  restrict ourselves to discrete observables for which  $\xx := \{x_1, x_2, \dots \}$ is countable.   In such a case we may identify an observable with the set  $\E:= \{\E_x: x \in \xx\}$, where  $\E_x \equiv \E(\{x\})$ are the (elementary) effects of $\E$ (also called POVM elements) which satisfy  $\sum_{x\in \xx} \E_x = \onesys$. The probability of observing outcome $x$ when measuring $\E$ in the state $\rho$ is given by the Born rule as $p^\E_\rho(x) := \tr[\E_x \rho]$.

Without loss of generality, we shall always assume that $\E_x \ne \zero$, since for any $x$ such that $\E_x = \zero$, the outcome $x$ is never observed, i.e., it is observed with probability zero; in such a case we may simply replace $\xx$ with the smaller value space $\xx \backslash \{x\}$. Additionally, we shall always assume that the observable is non-trivial, as trivial observables cannot distinguish between any states, and are thus uninformative; an effect is trivial if it is proportional to the identity, and an  observable  is non-trivial if at least one of its effects is not trivial.

We shall employ the short-hand notation $[\E, A]=\zero$ to indicate that the operator $A$ commutes with all effects of $\E$, and  $[\E, \F]=\zero$ to indicate that all the effects of observables $\E$ and $\F$ mutually commute. An observable $\E$ is commutative if  $[\E,\E]=\zero$, and a commutative observable is also sharp if additionally  $\E_x \E_y = \delta_{x,y} \E_x$, i.e.,  if  $\E_x$ are mutually orthogonal projection operators. Sharp observables are also referred to as projection valued measures, and by the spectral theorem a sharp observable may be represented by a self-adjoint operator   $A = \sum_x \lambda_x \E_x$, where $\{\lambda_x\} \subset \re$ satisfies $\lambda_x \neq \lambda_y$ for $x\neq y$. 
An observable that is not sharp will be called unsharp. An observable $\E$ has the norm-1 property  if it holds that $\|\E_x\|=1$ for all $x$, where $\| \cdot \| $ denotes the operator norm. In finite dimensions,  each effect of a norm-1 observable has at least one eigenvector with eigenvalue 1. While sharp observables are trivially norm-1, this property may also be enjoyed by some unsharp observables.

We now introduce definitions for  classes of observables that are of particular significance to our results:

\begin{definition}\label{defn:small-rank}
An observable $\E:= \{\E_x : x\in \xx\}$ is called  ``small-rank'' if there exists some $x\in \xx$ such that $\E_x$ has rank 1. An observable is called ``large-rank'' if it is not small-rank.  
\end{definition}
In particular, a sub-class of small-rank observables are called rank-1, for which every effect has rank 1 \cite{Holland1990, Pellonpaa2014}. For example, the effects of a sharp observable represented by a non-degenerate self-adjoint operator $A = \sum_x \lambda_x |\psi_x\>\<\psi_x|$ are the rank-1 projections $\E_x = |\psi_x\>\<\psi_x|$. Such observables are therefore rank-1, and hence small-rank. On the other hand,  a sharp observable represented by a degenerate self-adjoint operator such that the eigenspace corresponding to each (distinct) eigenvalue has dimension larger than 1 is large-rank, as each effect is a projection with rank larger than 1. 

\begin{definition}\label{defn:non-degenerate}
An observable $\E:= \{\E_x : x\in \xx\}$ is called ``non-degenerate'' if there exists some $x\in \xx$ such that  there are no multiplicities in the strictly positive eigenvalues of $\E_x$. An observable is called ``degenerate'' if it is not non-degenerate. 
\end{definition}
An example of a non-degenerate observable is a small-rank observable, since in such a case there exists an effect that has exactly one strictly positive eigenvalue. On the other hand,  a large-rank sharp observable  is degenerate, since in such a case  each effect has more than one eigenvector with eigenvalue 1.

\begin{definition}\label{defn:complete-unsharp}
An observable $\E:= \{\E_x : x\in \xx\}$ is called ``completely unsharp'' if for each $x\in \xx$, the spectrum of $\E_x$ does not contain either 1 or 0.
\end{definition}
Completely unsharp observables evidently do not have the norm-1 property, since it holds that $\| \E_x\| < 1$ for all $x$. But since the effects  also do not have eigenvalue 0, then  the effects are in fact full-rank. It follows that completely unsharp observables are also large-rank. But a completely unsharp observable may be either degenerate or  non-degenerate.

As a simple illustrative example of when an observable may or may not satisfy the aforementioned properties, let us consider the case where the system is a qubit, $\hs = \co^2$, with the family of binary observables $\E^{(\lambda)}:= \{\E^{(\lambda)}_+, \E^{(\lambda)}_-\}$ defined by
\begin{align}\label{eq:example-qubit-binary}
\E^{(\lambda)}_\pm := \frac{1}{2} \left(\onesys \pm \lambda \sigma_z \right),   
\end{align}
where $0< \lambda  \leqslant 1$ and $\sigma_z$ is the Pauli-Z operator. Note that if $\lambda =0$, then $\E^{(\lambda)}$ is a trivial observable since in such a case $\E^{(\lambda)}_\pm  = \onesys /2$.  Since $\E^{(\lambda)}$ are binary, so that $\E^{(\lambda)}_- = \onesys - \E^{(\lambda)}_+$,  then they are always commutative, i.e., $[\E^{(\lambda)}_+, \E^{(\lambda)}_-]=\zero$. Now note that the spectrum of each effect is $\{(1 + \lambda)/2, (1 - \lambda )/2\}$. If $\lambda =1$, then the spectrum simplifies to  $\{1,0\}$, i.e., each effect has one eigenvector with eigenvalue 1, and one eigenvector with eigenvalue 0. In such a case,  $\E^{(\lambda)}$ is a norm-1, sharp, small-rank, and non-degenerate observable. On the other hand, for any $0 < \lambda < 1$,  $\E^{(\lambda)}$ is a completely unsharp, large-rank, and non-degenerate observable. Note that for qubits, the only situation where an effect can be degenerate is when the effect is trivial, i.e., when $\lambda = 0$.

\subsection{Instruments}

An instrument \cite{Davies1970}, or operation valued measure, describes how a system is transformed upon measurement, and is  given as a collection of operations (completely positive trace non-increasing linear maps) $\ii:= \{\ii_x : x\in \xx\}$ such that $\ii_\xx(\cdot) := \sum_{x\in \xx} \ii_x(\cdot)$ is a channel. Throughout, we shall always assume that the instrument acts in $\hs$, i.e.,  that both the input and output space of $\ii_x$ is $\hs$. An instrument $\ii$ is identified with a unique observable  $\E $ via the relation  $\tr[\ii_x(\rho)] = \tr[\E_x\rho]$ for all outcomes $x$ and states $\rho$, and we shall refer to such $\ii$ as an $\E$-compatible instrument, or an $\E$-instrument for short, and to $\ii_\xx$ as the corresponding $\E$-channel \cite{Heinosaari2015}. Note that while every instrument is identified with a unique observable, every observable $\E$ admits infinitely many $\E$-compatible instruments; the operations of the L\"uders instrument $\ii^L$ compatible with $\E$ are written as  
\begin{align}\label{eq:Luders}
\ii^L_x(\cdot) := \sqrt{\E_x} \cdot \sqrt{\E_x},    
\end{align} 
and it holds that the operations of every $\E$-compatible instrument $\ii$  can be constructed as $\ii_x = \Phi_x \circ \ii^L_x$, where $\Phi_x$ are arbitrary channels acting in $\hs$ that may depend on outcome $x$ \cite{Ozawa2001,Pellonpaa2013a}.

\subsection{Measurement schemes}

A quantum system is measured when it undergoes an appropriate physical interaction with a measuring apparatus so that the transition of some variable of the apparatus---such as the position of a pointer along a scale---registers the outcome of the measured observable. The most general description of the measurement process is given by a \emph{measurement scheme}, which is a  tuple $\mm:= (\ha, \xi, \ee, \Z)$ where $\ha$ is the Hilbert space for (the probe of)  the apparatus $\aa$ and  $\xi$ is a fixed state of  $\aa$, $\ee$ is a channel acting in  $\hs\otimes \ha$ which serves to correlate   $\s$ with $\aa$, and $\Z := \{\Z_x : x\in \xx\}$ is a POVM acting in $\ha$ which is referred to as a   ``pointer observable''. Throughout, we shall always assume that $2 \leqslant \dim(\ha) < \infty$. For all outcomes $x$,  the operations of the instrument $\ii$ implemented by $\mm$ can be written as
\begin{align}\label{eq:instrument-dilation}
    \ii_x (\cdot) = \tra[(\one\sub{\s}\otimes \Z_x) \ee(\cdot \otimes \xi)],
\end{align}
where  $\tra[\cdot]$   is the partial trace  over $\aa$. The channel implemented by $\mm$ is thus $\ii_\xx(\cdot)  = \tra[\ee(\cdot \otimes \xi)]$.  Every $\E$-compatible instrument admits infinitely many \emph{normal} measurement schemes, where $\xi$ is chosen to be a pure state, $\ee$ is chosen to be a unitary channel, and $\Z$ is chosen to be sharp \cite{Ozawa1984}. Von Neumann's model for the measurement process is one such example of a normal measurement scheme. However, unless stated otherwise, we shall consider the most general class of measurement schemes, where $\xi$ need not be pure, $\ee$ need not be unitary, and $\Z$ need not be sharp. 

\section{Measurement schemes  constrained by the third law}
We  now consider how the third law  constrains measurement schemes, and subsequently examine how such constraints limit the possibility of a measurement to satisfy the proprieties  {\bf (i) - (v)}  outlined in the introduction. 

Since the third law only pertains to channels and state preparations,  the only elements of a measurement scheme $\mm:= (\ha, \xi, \ee, \Z)$ that will be limited by the third law are the interaction channel $\ee$, and the apparatus state preparation $\xi$.  By \defref{defn:third-law} and the proceeding discussion,   we therefore introduce the following definition:
\begin{definition}\label{defn:third-law-measurement}
A measurement scheme $\mm:= (\ha, \xi, \ee, \Z)$  is constrained by the third law if the following hold:
\begin{enumerate}[(i)]

    \item $\xi$ is a full-rank state on $\ha$.
    
    \item For every full-rank state $\varrho$ on $\hs \otimes \ha$,  $\ee(\varrho)$ is also a full-rank state.
\end{enumerate}
\end{definition}
Properties of measurement schemes constrained by the third law are given in \app{app:third-law-measurement} and \app{app:fixed-point-measurement}. Note that the third law does not impose any constraints on the measurability of  observables; we may always choose $\mm$ to be a ``trivial'' measurement scheme, where $\ha \simeq \hs$ and $\ee$ is a unitary swap channel, in which case the observable $\E$ measured in the system is identified with the pointer observable $\Z$ of the apparatus, which can be chosen arbitrarily. This is in contrast to the case where a measurement is constrained by conservation laws; by the Yanase condition, the pointer observable is restricted so that it commutes with the apparatus part of the conserved quantity, and it follows that an observable not commuting with the system part of the conserved quantity is measurable  only if it is unsharp, and only if the apparatus preparation has a large coherence in the conserved quantity \cite{Mohammady2021a}. 
 
The measurability of observables notwithstanding, let us note that an instrument implemented by a trivial measurement scheme is also trivial, i.e., it will hold that for all outcomes $x$ and states $\rho$, the operations of $\ii$ satisfy $\ii_x(\rho) = \tr[\E_x \rho] \xi$. Irrespective of what outcome is observed and what the initial state is, the final state is always   $\xi$. In such a case, $\ii$  fails all the properties {\bf (i) - (v)} that are the subject of our investigation. Therefore, whether or not an observable admits an instrument---realisable by a measurement scheme constrained by the third law as per \defref{defn:third-law-measurement}---with such properties remains to be seen: we shall now investigate this.

\subsection{Non-disturbance}

\begin{figure}[htbp!]
\begin{center}
\includegraphics[width=0.45\textwidth]{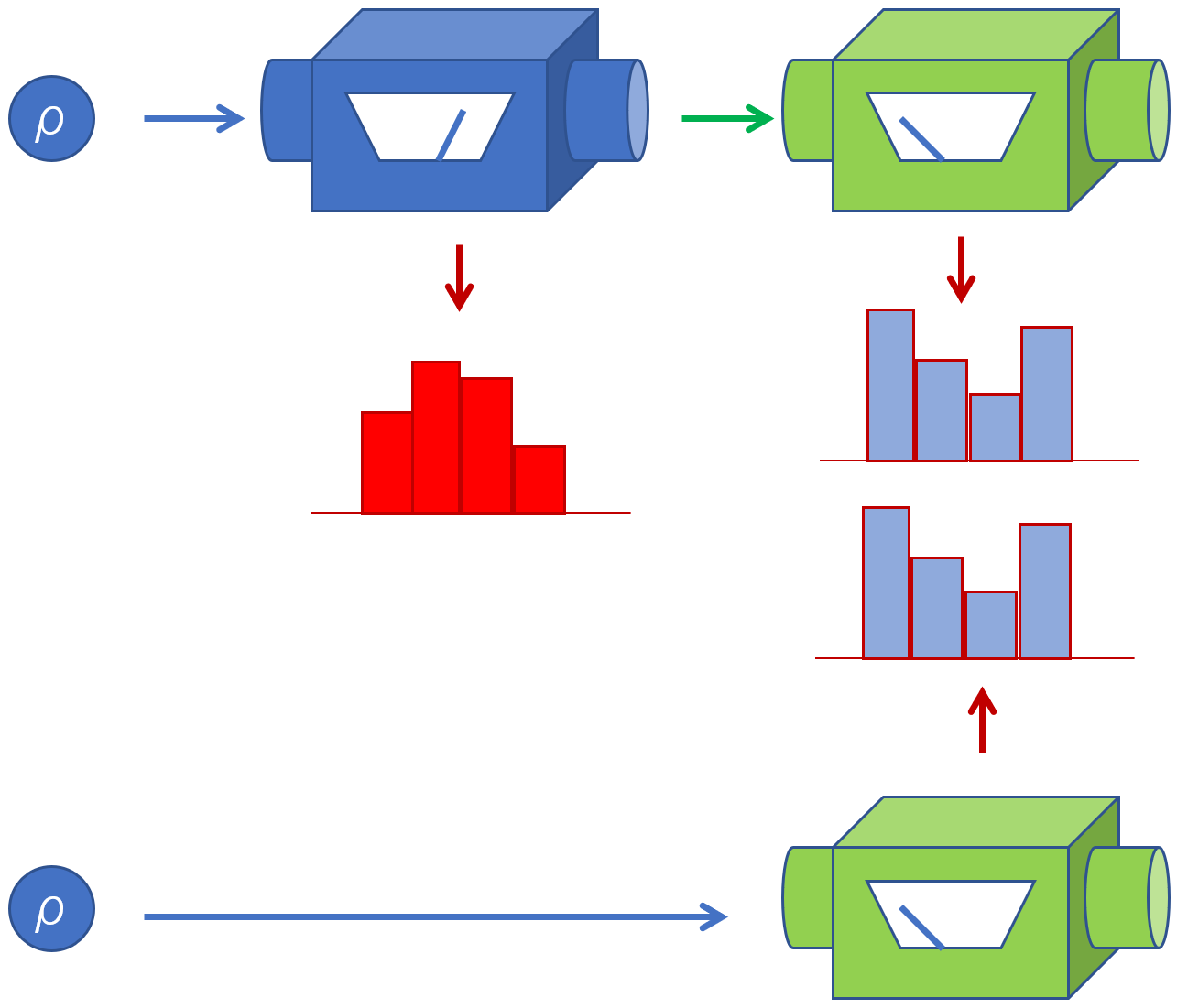}
\vspace*{-0.2cm}
\caption{The top half of the figure represents a sequential measurement of possibly different observables in a system initially prepared in state $\rho$, with the  histograms representing the  statistics obtained for each measurement in the sequence. The bottom half shows the case where the first measurement in the sequence is removed, and only the second measurement takes place. When the statistics of such a measurement are the same in both scenarios, for all states $\rho$, then the first measurement is said to not disturb the second. }\label{fig:non-disturbance}
\vspace*{-0.5cm}
\end{center}
\end{figure}

An $\E$-compatible instrument $\ii$ does not disturb an observable $\F:= \{\F_y : y \in \yy\}$ if it holds that 
\begin{align*}
    \tr[\F_y \ii_\xx(\rho)] = \tr[\F_y \rho]
\end{align*} 
for all states $\rho$ and outcomes $y$ \cite{Heinosaari2010}. In other words,  $\ii$ does not disturb $\F$ if the statistics of $\F$ are not affected by a prior non-selective measurement of $\E$ by $\ii$. Non-disturbance is only possible for \emph{jointly measurable} observables, since in such a case the sequential measurement of $\E$ by $\ii$, followed by a measurement of $\F$, defines a joint observable for $\E$ and $\F$ \cite{Heinosaari2015}. In the absence of any constraints, commutation of $\E$ with $\F$ is sufficient for non-disturbance. That is, if $\F$ commutes with $\E$, 
there exists an $\E$-instrument $\ii$ that does not disturb $\F$. Moreover, the L\"uders $\E$-instrument $\ii^L$ does not disturb \emph{all} $\F$ commuting with $\E$ \cite{Busch1998}. This can be easily shown by the following: if all effects of $\F$ and $\E$ mutually commute, then by \eq{eq:Luders} we may write
\begin{align*}
    \tr[\F_y \ii_\xx^L(\rho)] &= \sum_x \tr[\F_y \sqrt{\E_x} \rho \sqrt{\E_x}] \\
    & = \sum_x \tr[\sqrt{\E_x} \F_y \sqrt{\E_x} \rho ] \\
    & = \sum_x \tr[\E_x \F_y  \rho ]  = \tr[\F_y \rho].
\end{align*} 
In the second line we have used the cyclicity of the trace, and in the third line we use $[\E_x, \F_y] = \zero \iff [\sqrt{\E_x}, \F_y] = \zero$.

While commutation is sufficient for non-disturbance, it is in general not necessary; if $\E$ and $\F$ do not commute but are both sufficiently unsharp so as to be jointly measurable \cite{Miyadera2008},   then it \emph{may} be possible  for a measurement of $\E$ to not disturb $\F$, but not always: while non-disturbance requires joint measurability, joint-measurability does not guarantee non-disturbance. Let us consider an example where non-disturbance is permitted for two non-commuting observables. Consider the case that $\hs = \co^2 \otimes  \co^2$, with the  orthonormal basis $\{|k\rangle \otimes |m\rangle : k,m = 0,1\}$, and define the following family of operators on $\co^2$:
\begin{align*}
A_0 &=|0\rangle \langle 0|, \qquad  A_1=\frac{1}{2}|0\rangle \langle 0|, \qquad  A_2 =\frac{1}{2}|1\rangle \langle 1|,\\
A_3 &=\frac{1}{2} |+\rangle \langle + |, \qquad  A_4 =\frac{1}{2} |-\rangle \langle -|, \qquad  A_5= |1\rangle \langle 1|,  
\end{align*}
where $\ket{\pm} := \frac{1}{\sqrt{2}}(\ket{0} \pm \ket{1})$. Now consider the binary observables $\E:= \{\E_0, \E_1\}$ and $\F:= \{\F_0, \F_1\}$ acting in $\hs$, defined by 
\begin{align*}
\E_0 &= A_0 \otimes |0\rangle \langle 0|
+ (A_2+ A_4) \otimes |1\rangle \langle 1| , \\
\E_1&=(A_1+A_3) \otimes |1\rangle \langle 1|
+A_5 \otimes |0\rangle \langle 0|,
\end{align*}
and 
\begin{align*}
\F_0 &= A_0\otimes |0\rangle \langle 0| 
+ (A_1+A_4)\otimes |1\rangle \langle 1|, \\
\F_1 &= (A_2+A_3)\otimes |1\rangle \langle 1|
+A_5 \otimes |0\rangle \langle 0|.
\end{align*}
One can confirm that $[\E, \F] \ne \zero$. But, we can construct an $\E$-instrument $\ii$ with operations
\begin{align*}
 \ii_0(\rho) &=\tr[\rho (A_0\otimes |0\>\<0| + A_4 \otimes |1\>\<1|)]
 |0 \>\< 0|\otimes |0 \>\< 0| \\
& \quad + \tr[\rho(A_2 \otimes |1\>\<1|)]|1\>\<1|\otimes  |0\>\<0|,
\\
 \ii_1(\rho) &= \tr[\rho( A_5\otimes |0\rangle \langle 0| + A_3 \otimes |1\rangle \langle 1|)]
 |1\rangle \langle 1|\otimes |0\rangle \langle 0|
 \\
 &\quad + \tr[\rho(A_1 \otimes |1\rangle \langle 1|)]|0\rangle \langle 0|\otimes |0\rangle \langle 0|   , 
 \end{align*}
which does not disturb $\F$.

However, we show that under the third law constraint, commutation is in fact necessary for non-disturbance. That is, if an $\E$-instrument $\ii$ can  be 
implemented by a measurement scheme constrained by the third law, such that $\ii$ does not disturb $\F$, then 
$[\E, \F]=\zero$ must be satisfied. In \app{app:third-law-measurement} we show that for any instrument $\ii$, implemented by a measurement scheme constrained by the third law,  there exists at least one full-rank state $\rho_0$ such that $\ii_\xx(\rho_0) = \rho_0$. In such a case, non-disturbance of $\F_y$ (i.e., $\tr[\F_y \ii_\xx(\rho)] = \tr[\F_y \rho]$ for all $\rho$)
implies non-disturbance of a sharp observable $\P=\{\P_z\}$, where $\P_z$ are the spectral projections of $\F_y$. That is, a sequential measurement of $\E$ by the instrument $\ii$, followed by a measurement of $\P$, is a joint measurement of $\E$ and $\P$. Since joint measurability implies commutation when either observable is sharp, it follows that $\E$ must commute with $\P$, and hence with $\F_y$, for all $y$. In other words, given the existence of a full-rank fixed state $\rho_0$, then a measurement of $\E$ does not disturb $\F$ only if they commute. See also Proposition 4 of Ref. \cite{Heinosaari2010}.

But  when the measurement of $\E$ is constrained by the third law, we show that  $[\E,\F]=\zero$ is  not sufficient for non-disturbance: the properties of $\E$ impose further constraints. We now present our first main result:

\begin{theorem}
Under the third law constraint, a completely unsharp observable $\E$ admits a measurement that does not disturb any  observable $\F$ that commutes with $\E$. On the other hand, if an observable $\E$ satisfies  $\|\E_x\| = 1$ for any outcome $x$, then there exists $\F$ which commutes with $\E$ but is disturbed by any measurement of $\E$ that is constrained by the third law. 
\end{theorem}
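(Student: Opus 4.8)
The plan is to prove the two halves of the theorem separately, using the fixed-point result from \app{app:third-law-measurement} as the common engine for the negative (``norm-1'') direction and a direct construction for the positive (``completely unsharp'') direction.

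\textbf{Positive direction (completely unsharp $\E$).} Given a completely unsharp observable $\E = \{\E_x : x \in \xx\}$, every effect $\E_x$ is full-rank (its spectrum avoids $0$), so $\E_\xx^{1/2} := \sum_x \E_x$... wait, rather: each $\sqrt{\E_x}$ is invertible is false in general, but each $\E_x$ being full rank is what matters. The idea is to build an $\E$-instrument whose channel $\ii_\xx$ has a full-rank fixed point \emph{and} which does not disturb \emph{all} $\F$ commuting with $\E$, and then realise it by a third-law measurement scheme. The natural candidate is the L\"uders instrument $\ii^L_x(\cdot) = \sqrt{\E_x}\cdot\sqrt{\E_x}$, which by the computation already displayed in the excerpt does not disturb any $\F$ with $[\E,\F]=\zero$. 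So the real work is showing $\ii^L$ (or some $\E$-instrument sharing its non-disturbance property) admits a measurement scheme satisfying \defref{defn:third-law-measurement}. I would take a normal measurement scheme $(\ha,\xi_0,\uu,\Z)$ realising $\ii^L$ with $\xi_0$ pure and $\uu$ unitary, then ``smear'' it: replace $\xi_0$ by a full-rank state $\xi = (1-\epsilon)\xi_0 + \epsilon\,\onesys_\aa/\dim\ha$ and $\uu$ by $\ee = (1-\delta)\,\uu(\cdot)\uu^\dagger + \delta\,\mathcal{D}$ for a suitable full-rank-output channel $\mathcal{D}$, choosing the perturbation so that (a) $\ee$ maps full-rank states to full-rank states, and (b) the resulting instrument still has the non-disturbance property for all $\F$ commuting with $\E$. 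The cleanest route to (b): because $\E$ is completely unsharp, I expect one can instead directly exhibit an $\E$-instrument of the form $\ii_x(\rho) = \sum_j K_{x,j}\rho K_{x,j}^\dagger$ with all $K_{x,j}$ invertible, and whose channel $\ii_\xx$ is unital or at least has a full-rank fixed point; then Stinespring dilation with a full-rank ancilla state and a careful check that the dilating isometry can be completed to a full-rank-preserving channel gives the scheme. Verifying that the dilation can be arranged to be third-law-constrained (not merely unitary-with-pure-ancilla) is the part requiring care; one uses that appending a full-rank ancilla and then discarding it is a third-law channel, and composition of third-law channels is third-law (both noted in the excerpt).

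\textbf{Negative direction ($\|\E_x\|=1$ for some $x$).} Fix the outcome $x_0$ with $\|\E_{x_0}\|=1$. In finite dimensions this means $\E_{x_0}$ has an eigenvector $\psi$ with eigenvalue $1$; equivalently the complementary effect $\one - \E_{x_0} = \sum_{x\ne x_0}\E_x$ has eigenvalue $0$, i.e. is not full rank, with $\psi$ in its kernel. Now suppose for contradiction that some $\E$-instrument $\ii$, realised by a third-law measurement scheme, does not disturb some given $\F$ commuting with $\E$; I will choose $\F$ at the end to derive a contradiction. By the appendix result there is a full-rank state $\rho_0$ with $\ii_\xx(\rho_0)=\rho_0$. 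Following the excerpt's own argument, non-disturbance of $\F$ plus the full-rank fixed point forces $[\E,\F]=\zero$; so the only way to get a contradiction is to pick $\F$ commuting with $\E$ for which non-disturbance nonetheless fails. Here is the mechanism: consider the projection $\P_\psi = |\psi\>\<\psi|$ onto the eigenvalue-$1$ eigenvector of $\E_{x_0}$. If $\psi$ is an eigenvector of every $\E_x$ (automatic, since $\sum_x\E_x=\onesys$ and $\E_{x_0}\psi=\psi$ force $\E_x\psi=0$ for $x\ne x_0$), then $\F := \{\P_\psi, \onesys - \P_\psi\}$ commutes with $\E$. The claim is that no third-law-constrained measurement of $\E$ can leave the statistics of $\P_\psi$ invariant. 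Intuitively: to preserve $\tr[\P_\psi\ii_\xx(\rho)] = \<\psi|\rho|\psi\>$ for \emph{all} $\rho$, the channel $\ii_\xx$ would have to map the full-rank state $\rho_0$ to something still giving weight $\<\psi|\rho_0|\psi\>$ to $\psi$, and more strongly, iterating / combining with the fixed-point structure, $\ii_\xx$ would have to act on the ``$\psi$ sector'' like the identity in a way that leaks purity; formally one shows non-disturbance of the sharp $\P_\psi$ together with $\ii$ being an $\E$-instrument forces one of the operations $\ii_{x_0}$ to preserve the rank-1 projection $\P_\psi$ (because $\tr[\ii_{x_0}(\P_\psi)] = \tr[\E_{x_0}\P_\psi] = 1$ means $\ii_{x_0}(\P_\psi)$ is a genuine state, and non-disturbance of $\P_\psi$ pins it to be $\P_\psi$ itself), and then the Stinespring dilation of $\ii_{x_0}$ through the third-law scheme would force $\ee$ to map the full-rank input $\P_\psi\otimes\xi$... no: $\P_\psi\otimes\xi$ is not full rank. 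The correct contradiction: $\ii_{x_0}(\P_\psi) = \P_\psi$ with $\ii_{x_0}$ completely positive and $\mathrm{tr}$-preserving on $\P_\psi$ means $\ii_{x_0}$ has pure output on a pure input, so in its dilation $\ii_{x_0}(\rho) = \tra[(\onesys_\s\otimes\Z_{x_0})\ee(\rho\otimes\xi)]$ the post-measurement system state is pure while the apparatus carried away no correlation — and one shows this is incompatible with $\xi$ full-rank and $\ee$ full-rank-preserving by a purification / rank-counting argument (appending the full-rank $\xi$ and then having the conditional output be pure forces $\ee$ to destroy rank somewhere a full-rank input sits).

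\textbf{Main obstacle.} The crux is the negative direction's final contradiction: translating ``$\ii_{x_0}$ maps the rank-1 projection $\P_\psi$ to itself'' into a violation of \defref{defn:third-law-measurement}. The subtlety is that $\ee$ is only required to preserve full rank on \emph{full-rank} inputs, and $\P_\psi\otimes\xi$ is rank-deficient, so a naive rank count does not bite directly; one must instead perturb — feed $\ee$ the full-rank input $((1-\eta)\P_\psi + \eta\,\onesys_\s/\dim\hs)\otimes\xi$, use continuity of the conditional state together with the exact purity at $\eta=0$, and conclude that $\ee$ collapses rank in a neighbourhood, contradicting condition (ii) of \defref{defn:third-law-measurement}. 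Making this perturbative argument rigorous — controlling how the conditional (selective, hence non-trace-preserving) post-measurement state depends on $\eta$, and ruling out cancellations among outcomes $x\ne x_0$ — is where the real effort lies; I expect one needs the non-disturbance hypothesis for the \emph{full} two-outcome $\F$, not just the $\P_\psi$ effect, precisely to control the complementary sector. I would also double-check the edge case where $\E$ itself has $\E_{x_0}$ with eigenvalue $1$ of multiplicity $>1$: then $\F$ should be taken with $\P_\psi$ any rank-1 sub-projection of the eigenvalue-$1$ eigenspace, and one must verify it still commutes with every $\E_x$ (it does, by the same $\E_x\psi=0$ argument for $x\ne x_0$, and $\E_{x_0}$ restricted to its eigenvalue-$1$ eigenspace is the identity there).
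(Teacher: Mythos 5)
There are genuine gaps in both halves. For the positive direction your strategy coincides with the paper's (realise the L\"uders instrument under the third law, then invoke the standard fact that L\"uders measurements do not disturb commuting observables), but the one step that carries all the weight---exhibiting a third-law-constrained dilation of $\ii^L$---is left at ``I expect one can''. Your first idea (smearing a normal scheme by mixing $\xi_0$ and $\uu$ with noise) changes the implemented instrument, so neither the L\"uders identity nor the non-disturbance property survives, as you yourself note; the fallback (Stinespring with a full-rank ancilla ``completed to a full-rank-preserving channel'') is precisely the nontrivial claim and is not carried out. The paper proves it constructively (\propref{prop:luders-completely-unsharp}): pointer space $\co^N$, maximally mixed $\xi$, Kraus operators $K_x=\sum_a\sqrt{\E_{x\oplus a}}\otimes|x\oplus a\>\<a|$, sharp pointer $\Z_x=|x\>\<x|$; the verification that $\ee$ obeys \defref{defn:third-law} reduces to every $\E_x$ being full rank, which is exactly where complete unsharpness enters.

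For the negative direction your choice $\F=\{P,\onesys-P\}$, with $P$ the eigenvalue-$1$ projection of $\E_{x_0}$, is the right one, and the deduction that non-disturbance forces $\ii_{x_0}(P)=P$ is correct; but the contradiction you then aim for is not there. The statement ``a third-law-constrained scheme cannot have an operation fixing the pure state $P$'' is false: on $\hs=\ha=\co^2$ take Kraus operators $P\otimes|0\>\<0|$, $P\otimes|0\>\<1|$ together with a piece that sends the complementary input subspace $(\onesys-P)\otimes\ha$ to a fixed full-rank state, and pointer $\Z_0=|0\>\<0|$; this scheme satisfies \defref{defn:third-law-measurement}, measures the norm-1 observable $\E_0=P+c(\onesys-P)$, and has $\ii_0(P)=P$ exactly (of course it disturbs $\F$, but that is the point: fixing $P$ as a state is compatible with the third law, so no contradiction can be extracted from it). Your proposed perturbative rescue also fails in principle: rank is only lower semicontinuous, so $\ee$ may map the full-rank inputs $\rho_\eta\otimes\xi$ to full-rank outputs for every $\eta>0$ while the limit at $\eta=0$ drops rank---condition (ii) of \defref{defn:third-law-measurement} constrains nothing ``in a neighbourhood''. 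The contradiction genuinely requires the operator identity $\ii_\xx^*(P)=P$ together with the structural consequences of the third law that the paper develops: a full-rank fixed state of $\ii_\xx$ makes $\ff(\ii_\xx^*)$ a von Neumann algebra of the form \eq{eq:factor-fixed-point-set}, and faithfulness of each operation (\lemref{lemma:measurement-third-law}) forces $\E_x=\oplus_\alpha\one\sub{\kk_\alpha}\otimes E_{x,\alpha}$ with $\zero<E_{x,\alpha}<\one\sub{\rr_\alpha}$ strictly (\lemref{lemma:effect-factor-decomposition}); a nonzero projection $P\in\ff(\ii_\xx^*)$ with $\E_{x_0}P=P$ (equivalently, via the multiplicative-domain identity $\ii_x^*(P)=\E_xP$, with $\ii_x^*(P)=\zero$ for $x\neq x_0$) then contradicts either the strict bound $E_{x_0,\alpha}<\one\sub{\rr_\alpha}$ or faithfulness. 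Your sketch never brings in this algebraic machinery, and its proposed substitute does not close the argument.
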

That is, an $\E$-compatible instrument $\ii$ admits a measurement scheme $\mm$ that is constrained by the third law, such that $[\E,\F] = \zero \implies \tr[\F_y \ii_\xx(\rho)] = \tr[\F_y \rho]$ for all $y$ and $\rho$, if $\E$ is completely unsharp and only if $\|\E_x\| <1$ for all outcomes $x$. Note that an observable can satisfy $\| \E_x\| <1$ for all $x$ without being completely unsharp, since such effects can still have 0 in their spectrum.  The proof is presented in \app{app:non-disturbance} (\propref{prop:non-disturbance}). To show sufficiency of complete unsharpness we prove that, given the third law constraint,  an observable admits a L\"uders instrument if and only if it is completely unsharp (\propref{prop:luders-completely-unsharp}). But since L\"uders measurements are guaranteed to not disturb any commuting observable, the claim immediately follows. On the other hand, the necessity that the effects have norm smaller than 1 follows from the following: if any effect of $\E$ has eigenvalue 1, the projection onto such   eigenspace commutes with $\E$ but is shown to be disturbed. In particular, this implies that  when a norm-1 observable (such as a sharp observable) is measured under the third law constraint, then there exists some observable $\F$ that commutes with $\E$ but is nonetheless disturbed.

Of course, even if a sharp or norm-1 observable $\E$ fails the strict non-disturbance condition, this does not imply that some non-disturbed observables do not exist. In  \app{app:non-disturbance},  we show that if an observable is small-rank as per \defref{defn:small-rank}, then it holds that a third-law constrained measurement of such an observable will disturb all  observables, even if they commute.  Second, we show that if $\E$ is a non-degenerate observable as per \defref{defn:non-degenerate},  then the class of non-disturbed observables will be commutative, and  any pair of non-disturbed observables will commute. That is, for any $\F$ and $\G$ that are non-disturbed, then it will hold that $[\F, \G] = [\F,\F] = [\G,\G] = \zero$.  In other words, non-degeneracy of the measured observable will spoil the ``coherence'' of the measured system. Therefore, to ensure that a measurement of $\E$ does not disturb a non-trivial class of (possibly non-commutative) observables, then   $\E$ must be a large rank (and degenerate) observable.  

For the binary qubit observables $\E^{(\lambda)}$ introduced in \eq{eq:example-qubit-binary},  there exists a third law constrained  measurement of $\E^{(\lambda)}$ such that all commuting observables $\F$ are non-disturbed if and only if $\lambda < 1$, in which case $\E^{(\lambda)}$ are completely unsharp.  But note that since $\E^{(\lambda)}$ is always a degenerate observable when $0 < \lambda <1$, then the non-disturbed observables $\F$ must also be commutative. On the other hand, if $\lambda = 1$, then $\E^{(\lambda)}$ is a small-rank observable and so its measurement will disturb all observables.   In \app{app:non-disturbance},  we construct an explicit example  where the measurement of a sharp observable that is large-rank, and hence degenerate,  will not disturb a non-trivial class of possibly non-commutative observables. This is a binary observable $\E$ acting in a two-qubit system $\hs = \co^2 \otimes \co^2$, defined by $\E_x = \one \otimes |x\>\<x|$. Note that these effects have rank 2, and are hence also degenerate.  In such a case, any observable $\F$ with effects $\F_y \otimes \one$ will be non-disturbed, and it may be the case that $[\F,\F] \ne \zero$.

\subsection{First-kindness}

\begin{figure}[htbp!]
\begin{center}
\includegraphics[width=0.45\textwidth]{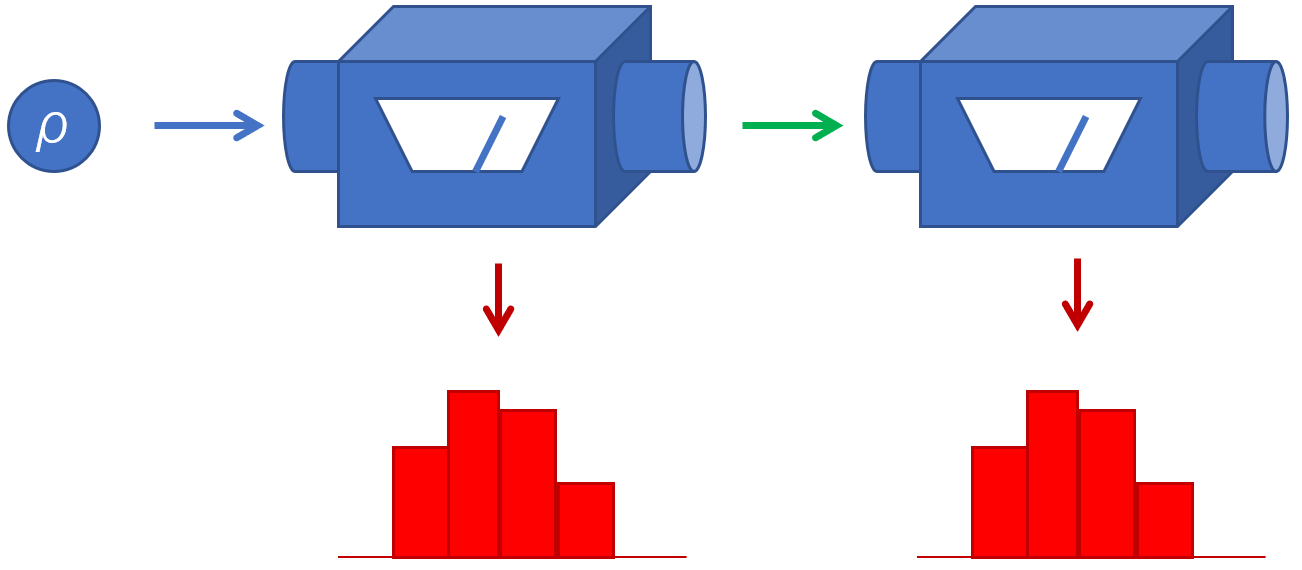}
\vspace*{-0.2cm}
\caption{When the same observable is measured in succession, and when the statistics of the second measurement are the same as those of the first, for all input states $\rho$, then such a measurement is said to be of the first kind.  }\label{fig:first-kindness}
\vspace*{-0.5cm}
\end{center}
\end{figure}

An $\E$-compatible instrument $\ii$ is a measurement of the first kind if $\ii$ does not disturb $\E$ itself, i.e., if it holds that
\begin{align*}
    \tr[\E_x \ii_\xx(\rho)] = \tr[\E_x \rho]
\end{align*} 
for all states $\rho$ and outcomes $x$ \cite{Lahti1991}.  In the absence of any constraints, commutativity of an observable is sufficient for it to admit a first-kind measurement; for any  observable $\E$ such that $[\E,\E]=\zero$ holds, the corresponding  L\"uders instrument is a  measurement of the first kind. This follows from analogous reasoning to that given above.  But we show that,  under the third law constraint, commutativity is necessary for first-kindness, but not sufficient. We now present our second main result:
\begin{theorem}\label{thm:first-kind}
Under the third law constraint, an observable $\E$ admits a measurement of the first kind if and only if  $\E$ is  commutative and completely unsharp.
\end{theorem}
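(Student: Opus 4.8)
The plan is to prove the two implications separately, drawing on \propref{prop:luders-completely-unsharp} (under the third law constraint an observable admits a L\"uders instrument precisely when it is completely unsharp) and on the non-disturbance analysis already set up above. \emph{Sufficiency:} suppose $\E$ is commutative and completely unsharp. By \propref{prop:luders-completely-unsharp} the L\"uders $\E$-instrument $\ii^L$ of \eq{eq:Luders} is then realisable by a measurement scheme constrained by the third law. Since $[\E,\E]=\zero$, the computation displayed in the non-disturbance subsection, run with $\F$ replaced by $\E$, gives $\tr[\E_y\,\ii^L_\xx(\rho)]=\sum_x\tr[\sqrt{\E_x}\,\E_y\sqrt{\E_x}\,\rho]=\sum_x\tr[\E_x\E_y\rho]=\tr[\E_y\rho]$ for every $y$ and every state $\rho$, using $[\E_x,\E_y]=\zero\iff[\sqrt{\E_x},\E_y]=\zero$. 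Hence $\ii^L$ is a measurement of the first kind, so $\E$ admits one under the third law.

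\emph{Necessity:} let $\ii$ be an $\E$-compatible first-kind instrument realisable by a measurement scheme constrained by the third law. Commutativity of $\E$ is immediate: first-kindness is exactly the statement that $\ii$ does not disturb $\E$, and since any such $\ii$ has a full-rank fixed state $\rho_0$ with $\ii_\xx(\rho_0)=\rho_0$ (shown in \app{app:third-law-measurement}), the argument recalled in the non-disturbance subsection---in the presence of a full-rank fixed state, $\ii$ does not disturb $\F$ only if $[\E,\F]=\zero$---applied with $\F=\E$ yields $[\E,\E]=\zero$. I would then record an algebraic fact: because $\ii_\xx$ has the full-rank fixed state $\rho_0$, the set $\nn:=\{A:\ii_\xx^*(A)=A\}$ of non-disturbed operators, with $\ii_\xx^*$ the Heisenberg-picture dual ($\tr[\ii_\xx^*(A)\rho]=\tr[A\,\ii_\xx(\rho)]$), is a $*$-subalgebra of $\lo(\hs)$. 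Indeed, for self-adjoint $A\in\nn$ the Kadison--Schwarz inequality gives $\ii_\xx^*(A^2)\geqslant\ii_\xx^*(A)^2=A^2$, while $\tr[\rho_0\,\ii_\xx^*(A^2)]=\tr[\ii_\xx(\rho_0)A^2]=\tr[\rho_0 A^2]$, so faithfulness of $\rho_0$ forces $A^2\in\nn$; polarisation closes $\nn$ under products. In particular $\nn$ is closed under the functional calculus of its self-adjoint elements, so---since first-kindness puts every $\E_x$ into $\nn$---every spectral projection of every $\E_x$ is non-disturbed by $\ii_\xx$.

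It remains to exclude the possibility that some $\E_x$ has $0$ or $1$ in its spectrum. If some $\E_x$ has eigenvalue $1$, then \propref{prop:non-disturbance} (the appendix form of our first main result) shows that any third-law-constrained measurement of $\E$ disturbs the projection onto the eigenvalue-$1$ eigenspace of $\E_x$; but that projection is a spectral projection of $\E_x$, hence non-disturbed by $\ii_\xx$ by the previous paragraph---a contradiction. If instead some $\E_x$ has eigenvalue $0$ (and no effect has eigenvalue $1$), I pass to the binary coarse-graining $\E':=\{\E'_+,\E'_-\}$ with $\E'_+:=\E_x$, $\E'_-:=\onesys-\E_x$, which is non-trivial since $\E_x$ is neither $\zero$ nor $\onesys$ nor a multiple of $\onesys$, together with the instrument $\ii'$ with $\ii'_+:=\ii_x$, $\ii'_-:=\sum_{y\ne x}\ii_y$. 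One checks that $\ii'_\xx=\ii_\xx$, that $\ii'$ is $\E'$-compatible and first-kind for $\E'$ (using first-kindness of $\ii$ for $\E$), and that $\ii'$ is realised by the original scheme with the pointer coarse-grained to $\Z'_+=\Z_x$, $\Z'_-=\sum_{y\ne x}\Z_y$, hence is still constrained by the third law. Since $\onesys-\E_x$ now has eigenvalue $1$, the eigenvalue-$1$ case just treated applies to $\E'$: it disturbs the projection onto $\ker\E_x$, which is again a spectral projection of an effect of $\E$ and therefore non-disturbed by $\ii_\xx=\ii'_\xx$---a contradiction. Hence no effect of $\E$ has $0$ or $1$ in its spectrum, i.e.\ $\E$ is completely unsharp.

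I expect the main obstacle to lie not in any one step but in the bookkeeping of the necessity direction: one must verify carefully that the coarse-grained instrument $\ii'$ is still first-kind and still realisable by a measurement scheme constrained by the third law, and one must extract from the proof of \propref{prop:non-disturbance} the sharper statement that it is specifically the eigenvalue-$1$ eigenspace projection of the offending effect that is disturbed, rather than merely ``some commuting observable''. The closure of $\nn$ under spectral projections is standard but deserves to be stated explicitly, since it is exactly what promotes non-disturbance of the effects $\E_x$ to non-disturbance of their spectral projections---the fact that collides with \propref{prop:non-disturbance}.
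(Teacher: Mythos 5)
Your proof is correct, and while the sufficiency half coincides with the paper's (both invoke \propref{prop:luders-completely-unsharp} and the commutation computation for the L\"uders instrument), your necessity argument takes a genuinely different route. The paper's appendix proof is a one-step computation: first-kindness means $\E_x\in\ff(\ii_\xx^*)=\ff(\ii_\av^*)$, and applying the averaging projection of \eq{eq:factor-av-channel} together with \lemref{lemma:effect-factor-decomposition} gives $\E_x=\ii_\av^*(\E_x)=\oplus_\alpha\tr[E_{x,\alpha}\omega_\alpha]\,\one\sub{\kk_\alpha}\otimes\one\sub{\rr_\alpha}$ with $\zero<E_{x,\alpha}<\one\sub{\rr_\alpha}$ and $\omega_\alpha$ full-rank, so commutativity and complete unsharpness drop out simultaneously. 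You instead exploit that $\ff(\ii_\xx^*)$ is a von Neumann algebra (which the paper cites rather than reproves), deduce that all spectral projections of the effects are non-disturbed, and then collide this with the eigenvalue-$1$ impossibility, reducing the eigenvalue-$0$ case to it by coarse-graining the pointer. This is logically sound---your checks that the coarse-grained scheme is still third-law constrained, that $\E'$ is non-trivial (an effect with eigenvalue $0$ cannot be proportional to $\onesys$), and that $\ii'$ is first-kind for $\E'$ all go through---but note that the step you yourself flag is a real obligation: \propref{prop:non-disturbance}(i) as stated only asserts the existence of \emph{some} commuting projection outside $\ff(\ii_\xx^*)$, whereas you need that the eigenvalue-$1$ eigenprojection of the offending effect itself is not fixed. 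That sharper statement is true (and is what the main text asserts informally), but proving it requires exactly the block structure $\ff(\ii_\xx^*)=\oplus_\alpha\lo(\kk_\alpha)\otimes\one\sub{\rr_\alpha}$, $\E_x=\oplus_\alpha\one\sub{\kk_\alpha}\otimes E_{x,\alpha}$ of \lemref{lemma:effect-factor-decomposition}---at which point the paper's direct evaluation of $\ii_\av^*(\E_x)$ is shorter. If you want to keep your spectral-projection strategy while avoiding both the sharpening and the coarse-graining, use the identity $\ii_y^*(A)=A\,\E_y=\E_y A$ for $A\in\ff(\ii_\xx^*)$, established in the proof of \lemref{lemma:effect-factor-decomposition}: if $Q\neq\zero$ is the eigenvalue-$1$ eigenprojection of $\E_x$, then $Q\E_y=\zero$ for $y\neq x$ gives $\ii_y^*(Q)=\zero$, and if $Q_0\neq\zero$ is the kernel projection of $\E_x$, then $\ii_x^*(Q_0)=Q_0\E_x=\zero$; either way item (iii) of \lemref{lemma:measurement-third-law} forces the projection to vanish, a contradiction. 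What your route buys is an explicit display of how Theorem 2 rests on the same mechanism as Theorem 1 (non-disturbance of spectral projections); what the paper's route buys is economy, since the factor decomposition is used once and yields both conclusions at a stroke.
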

In particular, note that a third law constrained measurement of any norm-1 observable, such as a  sharp observable, necessarily  disturbs itself. The proof is given in \app{app:first-kindness} (\propref{prop:first-kindness}). The sufficiency follows from the fact that any completely unsharp observable admits a L\"uders instrument, as discussed above. On the other hand,  the following is a sketch of the proof for the necessity of such a condition: a non-selective measurement constrained by the third law always leaves some full-rank state $\rho_0$ invariant.  Non-disturbance of $\E$ therefore demands commutativity, as discussed above. But every commutative observable $\E$ is a classical post processing of a sharp observable $\P$, i.e., we may write $\E_x = \sum_y p(x|y) \P_y$ where $\{p(x|y)\}$ is a family of non-negative numbers satisfying $\sum_x p(x|y) = 1$ for every $y$ \cite{Heinosaari2011a}. Given that $\ii_\xx$ has a full-rank fixed state, then if $\ii$ is a first-kind measurement,  $\P$ is also not disturbed \cite{Mohammady2021a}. Therefore,  a sequential measurement of $\E$ by $\ii$ followed by measurement of $\P$ defines a joint measurement of $\E$ and $\P$. By \eq{eq:instrument-dilation}, we obtain for every $\rho$ the following: 
\begin{align*}
  \tr[\P_y \E_x \P_y  \rho]  = \tr[\P_y \otimes \Z_x \ee(\rho \otimes \xi)].
\end{align*}
Now assume that $\rho$ is full-rank. Given that a third law constrained measurement employs a full-rank apparatus preparation $\xi$, while $\ee$ obeys \defref{defn:third-law}, then $\ee(\rho \otimes \xi)$ is full-rank. It follows that the term on the right hand side is strictly positive, and hence so too is the term on the left. But this implies that    $\P_y \E_x \P_y > \zero$, and so  $0 < p(x|y) < 1$, for all $x,y$. Therefore,  $\E$ is completely unsharp.
 
For the binary qubit observables $\E^{(\lambda)}$ introduced in \eq{eq:example-qubit-binary},   there exists a third law constrained  measurement of the first kind  if and only if  $\lambda < 1$, in which case $\E^{(\lambda)}$ are commutative and completely unsharp.  
 In  \app{app:first-kindness}, we construct an explicit example of a first-kind measurement (not given by a L\"uders instrument) of a  commutative and completely unsharp observable.   We consider a system  $\hs = \co^N$ with orthonormal basis $\{|n\>: n=1, \dots, N\}$, and an observable $\E:=\{\E_x: x=1,\dots,N\}$ acting in $\hs$ given by the effects $\E_x = \sum_n p(n|x) |n\>\<n|$. Here,  $ p(n|x) = q(x \ominus n)$, where $\ominus$ denotes subtraction modulo $N$, with $q(n)$ some arbitrary probability distribution satisfying $0<q(n)<1$ for all $n$.   Such an observable is commutative and completely unsharp.

\subsection{Repeatability}

\begin{figure}[htbp!]
\begin{center}
\includegraphics[width=0.45\textwidth]{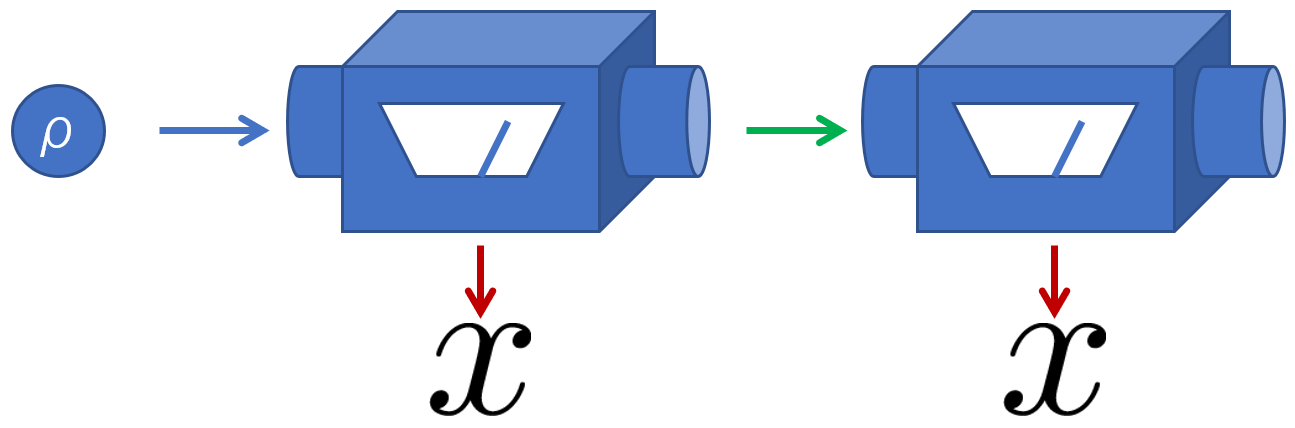}
\vspace*{-0.2cm}
\caption{When the same observable is measured  in succession, and when the outcome obtained by the second is guaranteed with probabilistic certainty to coincide with that of the first, for all input states $\rho$, then such a measurement is said to be repeatable.}\label{fig:repeatability}
\vspace*{-0.5cm}
\end{center}
\end{figure}

An $\E$-compatible instrument $\ii$ is  repeatable if it holds that
 \begin{align*}
\tr[\E_y \ii_x(\rho)] = \delta_{x,y}\tr[\E_x \rho]    
 \end{align*}
 for all states $\rho$ and outcomes $x,y$ \cite{Busch1995,Busch1996b}. In other words, an instrument $\ii$ is a repeatable measurement of $\E$ if a second measurement of $\E$ is guaranteed (with probabilistic certainty) to produce the same outcome as $\ii$. It is simple to verify that repeatability implies first-kindness, since if $\ii$ is repeatable, then we have
 \begin{align*}
\tr[\E_y \ii_\xx(\rho)] = \sum_x \tr[\E_y \ii_x(\rho)] = \tr[\E_y \rho].    
 \end{align*}
While a first-kind measurement need not be repeatable in general, repeatability and first-kindness coincide for the class of sharp observables (Theorem 1 in Ref. \cite{Lahti1991}). For example, if $\E$ is commutative then the corresponding L\"uders instrument is a measurement of the first kind, but such an instrument is repeatable if and only if $\E$ is sharp; note that $\tr[\E_x \ii^L_x(\rho)] = \tr[\E_x^2 \rho]$, which satisfies the repeatability condition if and only if $\E_x^2 = \E_x$. 

An observable $\E$ admits a repeatable instrument only if it is   norm-1,  and in the absence of any constraints, all  norm-1 observables admit a repeatable instrument. For example,  if $\E$ is a possibly unsharp observable with the norm-1 property, and if $\ket{\psi_x}$ are eigenvalue-1 eigenvectors of the effects $\E_x$, then an instrument with operations $\ii_x(\rho) = \tr[\E_x \rho] |\psi_x\>\<\psi_x|$ is repeatable. Note that if the system is a qubit, then only sharp observables admit repeatable measurements. For example,    the binary qubit observables $\E^{(\lambda)}$ introduced in \eq{eq:example-qubit-binary} are norm-1  if and only if     $\lambda = 1$, in which case the observable is also sharp.   Now we present  our third main result:
\begin{theorem}
Under the third law constraint, no observable admits a repeatable measurement.
\end{theorem}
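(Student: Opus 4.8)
The plan is to combine two facts established earlier in the excerpt: first, that any instrument $\ii$ realised by a measurement scheme constrained by the third law possesses a full-rank fixed state $\rho_0$ with $\ii_\xx(\rho_0)=\rho_0$ (shown in \app{app:third-law-measurement}); and second, the elementary observation that repeatability forces the underlying observable $\E$ to be norm-1, so that each effect $\E_x$ has an eigenvalue-$1$ eigenvector. I would argue by contradiction: suppose $\ii$ is a repeatable $\E$-instrument implementable under the third law. Repeatability means $\tr[\E_y \ii_x(\rho)] = \delta_{x,y}\tr[\E_x\rho]$ for all $\rho$; in particular, taking $\rho=\rho_0$ and summing over $x$ shows $\ii$ is first-kind on $\rho_0$, but more importantly the full strength of repeatability will be used to pin down where the post-measurement states are supported.

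The key step is to locate a \emph{pure} (or at least low-rank) state that must lie in the range of some operation $\ii_x$, contradicting the full-rank fixed-point property. Concretely, pick an outcome $x$ with $\|\E_x\|=1$ and let $\ket{\psi}$ be a unit eigenvector with $\E_x\ket{\psi}=\ket{\psi}$; then $\tr[\E_x\,\ket{\psi}\bra{\psi}]=1$. Feed a full-rank state $\rho$ into $\ii$ and consider the normalised post-selected state $\sigma_x := \ii_x(\rho)/\tr[\ii_x(\rho)]$. Repeatability gives $\tr[\E_x\sigma_x]=1$ and $\tr[\E_y\sigma_x]=0$ for $y\neq x$; since $\zero\le\E_x\le\onesys$, $\tr[\E_x\sigma_x]=1$ forces $\sigma_x$ to be supported on the eigenvalue-$1$ eigenspace of $\E_x$. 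Now I want to show this is incompatible with the third-law constraint. Using the dilation \eqref{eq:instrument-dilation}, $\ii_x(\rho) = \tra[(\one\sub{\s}\otimes\Z_x)\,\ee(\rho\otimes\xi)]$, and since $\xi$ is full-rank and $\ee$ maps full-rank states to full-rank states, $\ee(\rho\otimes\xi)$ is full-rank on $\hs\otimes\ha$; its marginal on $\s$ after the (generally non-normalised) pointer filtering need not be full-rank, however, so a direct marginal argument is not quite enough. Instead I would apply repeatability to the fixed state: the argument in \app{app:third-law-measurement} gives $\ii_\xx(\rho_0)=\rho_0$ full-rank, and then iterate — the second measurement applied to the branch $\sigma_x$ must again reproduce outcome $x$ with certainty and, by the same support argument, keeps the state inside the eigenvalue-$1$ eigenspace of $\E_x$, a proper subspace (since $\E$ is non-trivial, at least one effect is a proper sub-identity). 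This traps the dynamics of a full-rank input inside a strict subspace, contradicting the existence of a full-rank fixed state that the third-law scheme guarantees.

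More cleanly, I expect the cleanest route is: the third-law constraint yields a full-rank fixed state $\rho_0 = \ii_\xx(\rho_0) = \sum_x \ii_x(\rho_0)$; each nonzero term $\ii_x(\rho_0)$, once normalised, is (by repeatability) supported on $\mathrm{ran}(\P_x)$ where $\P_x$ is the spectral projection of $\E_x$ onto eigenvalue $1$. Hence $\mathrm{ran}(\rho_0) \subseteq \spann\{\mathrm{ran}(\P_x): x\in\xx\}$. But $\rho_0$ is full-rank, so $\spann\{\mathrm{ran}(\P_x)\} = \hs$, which is consistent on its own; the contradiction comes one level deeper — apply the non-selective channel again, $\rho_0 = \ii_\xx(\ii_\xx(\rho_0))$, and use that $\ii_x(\P_y \cdot \P_y)$ is supported on $\mathrm{ran}(\P_x)$ while also, by repeatability applied to states in $\mathrm{ran}(\P_y)$ with $y\neq x$, $\tr[\ii_x(\sigma)] = \tr[\E_x\sigma] = 0$ whenever $\sigma$ is supported on $\mathrm{ran}(\P_y)$ and $\P_x\P_y=\zero$. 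When the $\P_x$ are mutually orthogonal this decouples $\rho_0$ into a block-diagonal form with strictly smaller blocks being separately fixed, and one reduces to the known impossibility of a full-rank fixed state inside each block under a non-trivial repeatable channel; when they are not mutually orthogonal one shows norm-1 repeatable instruments nonetheless force $\tr[\E_x\ii_x(\rho)] = \tr[\E_x\rho]$ with the post states in $\mathrm{ran}(\P_x)$, and full-rankness of $\ee(\rho\otimes\xi)$ versus the rank collapse on $\s$ gives the contradiction directly.

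The main obstacle is handling observables whose eigenvalue-$1$ eigenspaces $\mathrm{ran}(\P_x)$ are \emph{not} mutually orthogonal (a genuine possibility for unsharp norm-1 observables): there the naive block-decomposition fails and one must argue more carefully that repeatability still forces each branch into a proper subspace whose union cannot carry a full-rank fixed state compatibly with $\ee(\rho\otimes\xi)$ being full-rank. I anticipate resolving this by combining the full-rank fixed-point lemma of \app{app:third-law-measurement} with the observation that $\sum_x\P_x \ge$ is not, in general, bounded below by $\onesys$, so that a state built entirely from the $\mathrm{ran}(\P_x)$-supported branches is annihilated by $\onesys - \E_x$ on each branch yet must still be full-rank — forcing every $\P_x = \onesys$, i.e. every $\E_x$ having $1$ as an eigenvalue with full multiplicity, which together with $\sum_x\E_x=\onesys$ and non-triviality is impossible.
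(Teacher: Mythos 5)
There is a genuine gap: your proposal never actually closes. The second and third paragraphs are a sequence of partial attempts, you yourself flag the case of non-orthogonal eigenvalue-$1$ eigenspaces as unresolved, and the final ``resolution'' (forcing every $\P_x=\onesys$) is asserted rather than derived. The irony is that the one step you explicitly discard is the step that works. You write that the pointer-filtered marginal ``need not be full-rank'', but under the third-law constraint it \emph{is}: if $\rho$ is full-rank then $\ee(\rho\otimes\xi)$ is full-rank, hence in finite dimension $\ee(\rho\otimes\xi)\geqslant \lambda\, \onesys\otimes\oneapp$ for some $\lambda>0$; moreover $\Z_x\neq\zero$ (since $0<\|\E_x\|\leqslant\|\Z_x\|$), so for every unit vector $\ket{\phi}\in\hs$ one has $\<\phi|\ii_x(\rho)|\phi\> = \tr[(|\phi\>\<\phi|\otimes\Z_x)\,\ee(\rho\otimes\xi)]\geqslant \lambda\,\tr[\Z_x]>0$. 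Thus each \emph{selective} output $\ii_x(\rho)$ is full-rank --- this is precisely item (iv) of \lemref{lemma:measurement-third-law}. Repeatability then fails immediately: for $y\neq x$, $\tr[\E_y\,\ii_x(\rho)]>0$ because $\E_y\neq\zero$ and $\ii_x(\rho)$ is a nonzero full-rank operator, contradicting $\tr[\E_y\,\ii_x(\rho)]=\delta_{x,y}\tr[\E_x\rho]$; equivalently, your branch state $\sigma_x$ cannot be confined to the (proper, by non-triviality of $\E$) eigenvalue-$1$ eigenspace of $\E_x$. So the machinery you build to fight the non-orthogonal-$\P_x$ obstacle (iterating the non-selective channel, block decompositions, span arguments) is unnecessary, and as written it does not amount to a proof.

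For comparison, the paper takes yet another route: repeatability implies first-kindness, and \propref{prop:first-kindness} shows that under the third law a first-kind measurement exists only for commutative, \emph{completely unsharp} observables; since repeatability is possible only for norm-$1$ observables and completely unsharp observables lack the norm-$1$ property, no repeatable measurement is compatible with the third law (\corref{cor:repeatability}). Both that argument and the direct full-rank-output argument above are sound; your proposal sits between them but, as submitted, leaves its decisive step unproved.
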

This is an immediate consequence of  \thmref{thm:first-kind} which shows that, under the third law constraint,  norm-1 observables do not admit a measurement of the first kind.  Since repeatability is only admitted for norm-1 observables, and since repeatability  implies first-kindness, then the statement follows. In fact, we can show that for every sequence of outcomes $x$ and $y$, there exists a state $\rho$ such that $\tr[\E_y \ii_x(\rho)] >0$. See \corref{cor:repeatability} for further details.

\subsection{Ideality}

\begin{figure}[htbp!]
\begin{center}
\includegraphics[width=0.45\textwidth]{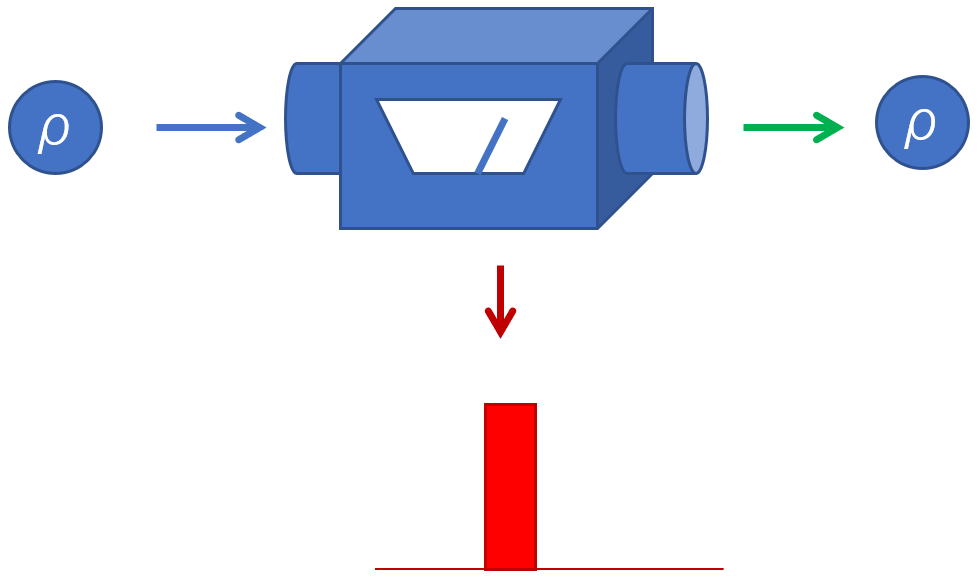}
\vspace*{-0.2cm}
\caption{When an observable is measured in a system such that whenever an outcome can be predicted with certainty, the state of the measured system is unperturbed, then such a measurement is said to be ideal. }\label{fig:ideality}
\vspace*{-0.5cm}
\end{center}
\end{figure}

An instrument  $\ii$ is said to be an ideal measurement of $\E$ if for every outcome $x$ there exists a state $\rho$ such that $\tr[\E_x \rho]=1$, and if for every outcome $x$ and every state $\rho$    the following implication holds: 
\begin{align*}
\tr[\E_x \rho]=1 \implies \ii_x(\rho) = \rho.    
\end{align*} 
That is, $\ii$ is an ideal measurement if it does not change the state of the system whenever the outcome can be predicted with certainty \cite{Busch1990}. Note that ideality can only be enjoyed by norm-1 observables; since  $\tr[\E_x \rho] \leqslant \| \E_x\|$, then any $\E$ that does not enjoy the norm-1 property fails the antecedent of the ideality condition, in which case such condition becomes void. Conversely, in the absence of any constraints all norm-1 observables admit an ideal measurement;  the condition $\tr[\E_x \rho] = 1$ holds if and only if $\rho$ only has support in the eigenvalue-1 eigenspace of $\E_x$, which implies that $\E_x \rho = \rho \E_x = \rho$. But in such a case, we obtain $\ii^L_x(\rho) = \sqrt{\E_x} \rho \sqrt{\E_x} = \E_x \rho = \rho$, and so the L\"uders measurement of a norm-1 observable is ideal.

For the class of sharp observables, the ideal measurements are precisely the L\"uders instruments (see Theorem 10.6 in Ref. \cite{Busch2016a}). Since the third law only permits L\"uders instruments  for completely unsharp observables, then we may immediately infer that ideal measurements of any sharp observable, even those represented by a possibly degenerate self-adjoint operator, are prohibited by the third law.    For example,    the binary qubit observables $\E^{(\lambda)}$ introduced in \eq{eq:example-qubit-binary} are norm-1 if and only if $\lambda=1$, in which case the observable is also sharp. Therefore, such observables never admit an ideal measurement when constrained by the third law.  

However, unsharp observables admit ideal measurements that are not given by the L\"uders instrument. For example, consider a system $\hs = \co^3$ with orthonormal basis $\{\ket{-1}, \ket{0}, \ket{1}\}$. Let $\E := \{\E_+, \E_-\}$ be a binary norm-1 observable acting in $\hs$,   defined by  $\E_\pm = |\pm 1\rangle \langle \pm 1| + \frac{1}{2}|0\rangle \langle 0 |$. It can easily be verified that an instrument with operations 
\begin{align*}
\ii_\pm (\cdot) = \<\pm 1| \cdot |\pm 1\> |\pm 1\>\< \pm 1| + \< 0 | \cdot |0 \> \frac{\onesys }{6}    
\end{align*} 
is an ideal measurement of $\E$.  Therefore, the restriction imposed by the third law on the realisability of  L\"uders instruments does not by itself rule out the possibility of ideal measurements for unsharp norm-1 observables. Now we present our fourth main result:
\begin{theorem}
Under the third law constraint, no observable admits an ideal measurement.
\end{theorem}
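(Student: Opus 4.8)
The plan is to argue by contradiction using the measurement-scheme dilation \eq{eq:instrument-dilation}. Suppose $\ii$ were an ideal $\E$-instrument implemented by a measurement scheme $\mm := (\ha, \xi, \ee, \Z)$ constrained by the third law. Since ideality requires that for each outcome $x$ some state $\rho$ satisfies $\tr[\E_x \rho] = 1$, the observable $\E$ must have the norm-$1$ property: each $\E_x$ has a non-zero eigenvalue-$1$ eigenspace $M_x$, with projection $\pr x$. A state supported in $M_x$ has $\tr[\E_x \rho] = 1$, hence $\tr[\E_y \rho] = 0$ for $y \neq x$ (as $\sum_y \E_y = \onesys$ with all effects positive), so $\E_y \pr x = \zero$ and the $M_x$ are mutually orthogonal; putting $\pr 0 := \onesys - \sum_x \pr x$ one gets $\E_x = \pr x + D_x$ with $D_x := \pr 0 \E_x \pr 0 \geqslant \zero$, $\sum_x D_x = \pr 0$ and $\|D_x\| < 1$. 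For every state $\rho$ supported in $M_x$, ideality gives $\ii_x(\rho) = \rho$, while the calibration $\tr[\ii_y(\rho)] = \tr[\E_y \rho] = 0$ forces $\ii_y(\rho) = \zero$ for $y \neq x$; hence $\ii_\xx(\rho) = \rho$, so $\ii_\xx$ fixes every state supported in any of the $M_x$.

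Next I would read off the structure of the dilation on these fixed states. For a unit vector $\psi \in M_x$, $\tra[\ee(\ket{\psi}\bra{\psi} \otimes \xi)] = \ii_\xx(\ket{\psi}\bra{\psi}) = \ket{\psi}\bra{\psi}$; as a bipartite state with a pure marginal is necessarily a product state, $\ee(\ket{\psi}\bra{\psi} \otimes \xi) = \ket{\psi}\bra{\psi} \otimes \tau_\psi$ for some state $\tau_\psi$ on $\ha$. Moreover $\ii_y(\ket{\psi}\bra{\psi}) = \zero$ ($y \neq x$) reads $\tra[(\onesys \otimes \Z_y)(\ket{\psi}\bra{\psi} \otimes \tau_\psi)] = \zero$, i.e.\ $\Z_y \tau_\psi = \zero$, so $\supp(\tau_\psi) \subseteq N_x := \bigcap_{y \neq x} \ker \Z_y$; a short polarisation over the unit sphere of $M_x$ shows $\tau_\psi$ is independent of $\psi$, write it $\tau_x$. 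On the other hand, since $\xi$ is full-rank, $\onesys\sub{\s} \otimes \xi \geqslant \mu\, \onesys\sub{\s} \otimes \oneapp$ for some $\mu > 0$, and $(\dim\hs\dim\ha)^{-1}\,\onesys\sub{\s}\otimes\oneapp$ is a full-rank state, so by \defref{defn:third-law-measurement}, $\ee(\onesys\sub{\s}\otimes\xi) \geqslant \mu\, \ee(\onesys\sub{\s}\otimes\oneapp) > \zero$.

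The core of the argument is to propagate the product identity $\ee(\ket{\psi}\bra{\psi}\otimes\xi) = \ket{\psi}\bra{\psi}\otimes\tau_x$ from the eigenvalue-$1$ eigenspaces to all of $\hs$. For this I would use the further fact, established for third-law measurements, that $\ii_\xx$ has a full-rank fixed state $\rho_0$: combined with $\lo(M_x) \subseteq \mathrm{Fix}(\ii_\xx)$ and the structure of the fixed-point set of a channel with a faithful invariant state, this forces $\ii_\xx$ to fix an orthonormal basis $\{\phi_i\}$ of $\hs$ of pure states, each $\phi_i$ lying in some $M_x$ or in the range of $\pr 0$. The only content is the case $\pr 0 \neq \zero$, i.e.\ $\E$ not sharp (the sharp case being already settled in the main text via the L\"uders argument): a two-dimensional Choi-positivity argument shows $\ii_x(\rho) = \pr x \rho \pr x$ for every $\rho$ supported in $\bigoplus_z M_z$, so the restriction of $\ii$ to $\bigoplus_z M_z$ is the (repeatable) L\"uders instrument of the sharp observable $\{\pr z\}$; the surviving $\pr 0$-contribution is then eliminated by descending to a third-law scheme on $\bigoplus_z M_z$ and applying \propref{prop:luders-completely-unsharp} --- equivalently, \thmref{thm:first-kind} together with the impossibility of repeatable measurements under the third law.

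Granting this, the contradiction is immediate: repeating the product computation for each $\phi_i$ gives $\ee(\onesys\sub{\s}\otimes\xi) = \sum_i \ket{\phi_i}\bra{\phi_i}\otimes\tau_{\phi_i}$, with support $\bigoplus_i \co\ket{\phi_i}\otimes\supp(\tau_{\phi_i})$; since $\ee(\onesys\sub{\s}\otimes\xi)>\zero$, this equals $\hs\otimes\ha$, so every $\tau_{\phi_i}$ --- in particular every $\tau_x$ --- is full-rank. But then $N_x \supseteq \supp(\tau_x) = \ha$, forcing $\Z_y = \zero$ for all $y \neq x$; as $\E$ is non-trivial it has at least two outcomes, so letting $x$ range gives $\Z_y = \zero$ for all $y$, contradicting $\sum_y \Z_y = \oneapp$. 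Hence no ideal instrument can be implemented by a third-law measurement scheme. The step I expect to be the real obstacle is exactly the propagation step: the product form of $\ee(\,\cdot\,\otimes\xi)$ on the eigenvalue-$1$ eigenspaces is straightforward, whereas controlling $\ee$ --- or $\ii$ --- on the complementary subspace requires marrying the full-rank fixed state guaranteed by the third law to the fixed-point structure of $\ii_\xx$ (or to the reduction to a sharp sub-observable and the already-established L\"uders/repeatability obstructions); everything else is bookkeeping with the dilation and positivity.
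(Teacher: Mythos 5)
The front end of your argument is fine: ideality forces $\E$ to be norm-1, every state supported in an eigenvalue-1 eigenspace $M_x$ is fixed by $\ii_\xx$, the dilation output on such states is a product, $\ee(|\psi\>\<\psi|\otimes\xi)=|\psi\>\<\psi|\otimes\tau_x$ with $\tr[\Z_y\tau_x]=0$ for $y\neq x$ (and the independence of $\tau_\psi$ from $\psi$ is a standard no-information-without-disturbance fact), and your endgame would indeed yield a contradiction with $\E_y\neq\zero$ if some $\tau_x$ were full rank. But the propagation step you flag is a genuine gap, and neither of your proposed patches closes it. First, the claim that a full-rank fixed state of $\ii_\xx$ together with $\lo(M_x)\subseteq\ff(\ii_\xx)$ forces $\ii_\xx$ to fix an orthonormal basis of pure states of all of $\hs$ is false: with a faithful invariant state the fixed states have the form $\oplus_\alpha\sigma_\alpha\otimes\omega_\alpha$ with each $\omega_\alpha$ full rank on $\rr_\alpha$ (\eq{eq:factor-fixed-point-set}, \eq{eq:factor-av-channel}), so fixed \emph{pure} states exist only in sectors with $\dim(\rr_\alpha)=1$; ideality says nothing about states supported in $\pr{0}\hs$, so sectors with $\dim(\rr_\alpha)>1$ outside $\oplus_x M_x$ are not excluded, and then no such basis exists and your identity $\ee(\onesys\otimes\xi)=\sum_i|\phi_i\>\<\phi_i|\otimes\tau_{\phi_i}$ is unavailable. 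Second, the alternative ``descend to a third-law scheme on $M=\oplus_z M_z$'' also fails: your Kraus/positivity argument correctly shows the restriction of $\ii$ to states supported in $M$ is the L\"uders instrument of the sharp observable $\{\pr{z}\}$, but \propref{prop:luders-completely-unsharp}, \thmref{thm:first-kind} and \corref{cor:repeatability} apply to instruments implemented by a scheme constrained by the third law \emph{on the measured system itself}. The third-law property of $\ee$ concerns full-rank states on $\hs\otimes\ha$ and gives no control over states full rank only on $M\otimes\ha$; correspondingly the restricted instrument $\jj_x(\rho)=\pr{x}\rho\pr{x}$ does not inherit item (iii) of \lemref{lemma:measurement-third-law} (indeed $\jj_x^*(A^*A)=\pr{x}A^*A\pr{x}$ vanishes for nonzero $A$ supported in $M_{x'}$, $x'\neq x$), so no contradiction can be extracted from the restricted instrument alone; the obstruction is necessarily global.

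The paper closes exactly this gap by staying global and never touching the dilation: by \lemref{lemma:effect-factor-decomposition} the effects decompose as $\E_x=\oplus_\alpha\one\sub{\kk_\alpha}\otimes E_{x,\alpha}$ with the \emph{strict} bounds $\zero<E_{x,\alpha}<\one\sub{\rr_\alpha}$ (non-triviality enters here), while by \eq{eq:factor-av-channel} any ideality-fixed state satisfies $\rho=\ii_\av(\rho)=\oplus_\alpha\sigma_\alpha\otimes\omega_\alpha$ with $\omega_\alpha$ full rank. Taking $\rho=\sigma_\alpha\otimes\mu_\alpha$ with $\tr[E_{x,\alpha}\mu_\alpha]=1$ in a sector where $\|E_{x,\alpha}\|=1$, ideality forces $\mu_\alpha=\omega_\alpha$, and faithfulness of $\omega_\alpha$ then forces $E_{x,\alpha}=\one\sub{\rr_\alpha}$, contradicting the strict bound. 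Equivalently, in your language: every $M_x$ must lie in sectors with $\dim(\rr_\alpha)=1$, where $\E_x$ acts as the scalar $E_{x,\alpha}\in(0,1)$ and hence has no eigenvalue-1 eigenvectors at all. This algebraic input (the identity action of $\E_x$ on the $\kk_\alpha$ factors plus strict unsharpness on the $\rr_\alpha$ factors) is what controls the complement of $\oplus_x M_x$; if you wish to keep your dilation-level endgame you would still need it, so as written the proposal has a real gap at its central step.
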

The proof is given in \app{app:ideality} (\propref{prop:appendix-ideality}), and the following is a rough sketch. If $\ii$ is an ideal measurement of $\E$, and if $\rho$ is a state for which outcome $x$ can be predicted with certainty, then $\ii_y(\rho) = \zero$ for all $y\ne x$, which implies that $\ii_\xx(\rho) = \rho$. But given the third law constraint, for every state $\rho$ such that $\tr[\E_x \rho] = 1$, it is shown that $\rho$ cannot be a fixed state of $\ii_\xx$, and so $\ii$ cannot be ideal.

\subsection{Extremality}

\begin{figure}[htbp!]
\begin{center}
\includegraphics[width=0.45\textwidth]{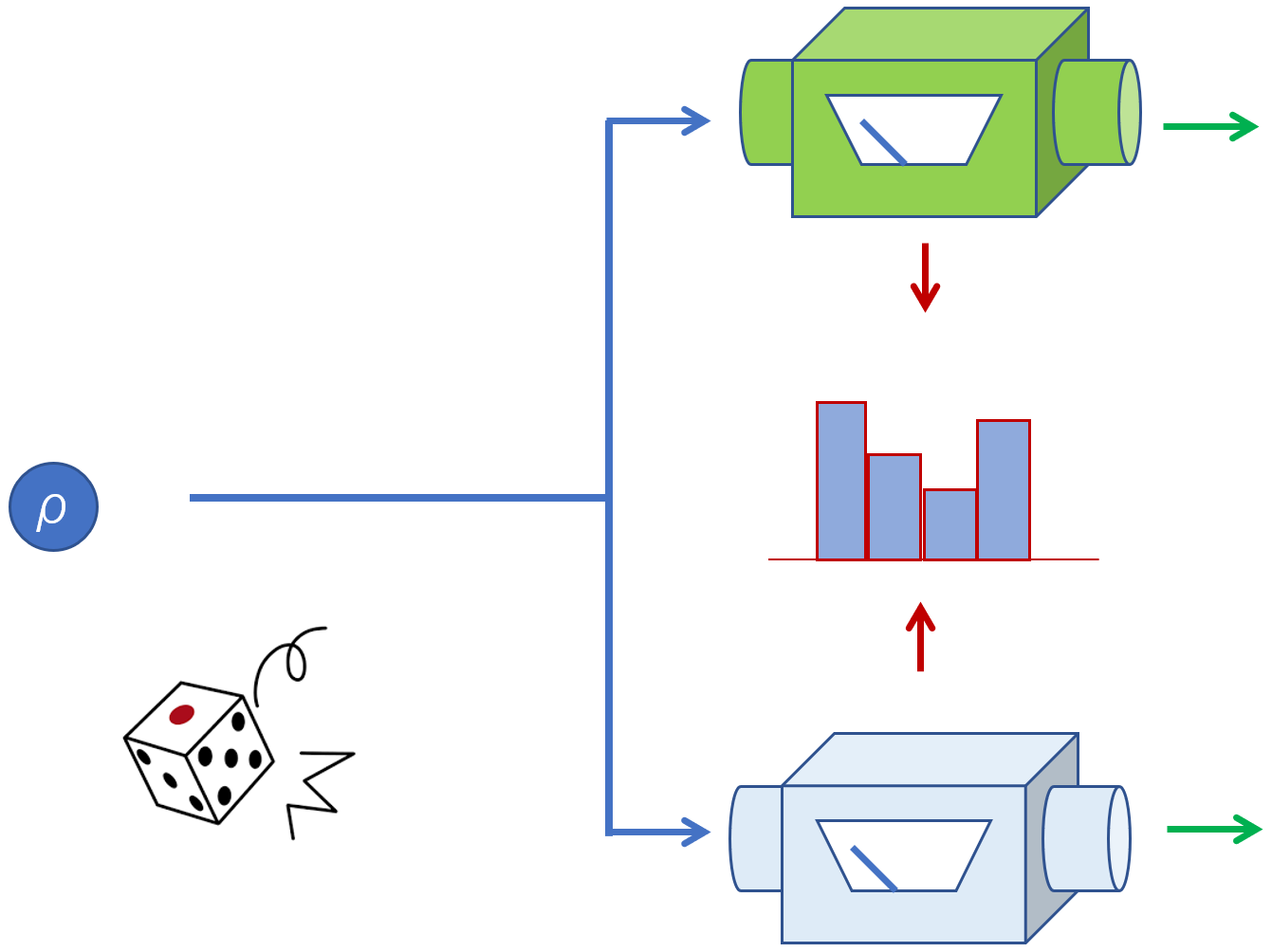}
\vspace*{-0.2cm}
\caption{ A  system may be measured by an instrument  obtained by a probabilistic mixture of two distinct instruments. An extremal instrument is such that   cannot be recovered as a probabilistic mixture of two distinct instruments.     }\label{fig:extremality}
\vspace*{-0.5cm}
\end{center}
\end{figure}

 For any fixed value space $\xx$, the set of  instruments is convex. That is, given any $\lambda \in [0,1]$, and any pair of instruments $\ii^{(i)}:= \{\ii^{(i)}_x : x\in \xx\}$, $i=1,2$,  we can construct an instrument $\ii$ with the operations 
\begin{align*}
    \ii_x(\cdot) = \lambda \,  \ii^{(1)}_x(\cdot) + (1-\lambda) \ii^{(2)}_x(\cdot).
\end{align*}  
An instrument $\ii$ is \emph{extremal} when for any $\lambda \in (0,1)$ such a decomposition is only possible if $\ii = \ii^{(1)} = \ii^{(2)}$. Intuitively, this implies that an extremal instrument is ``pure'', whereas a non-extremal instrument suffers from ``classical noise''. For an in-depth analysis of extremal instruments and their properties, see Refs. \cite{DAriano2011,Pellonpaa2013}.  A simple example of an extremal instrument is the L\"uders instrument compatible with an observable  with linearly independent effects. Since such linear independence is trivially satisfied for norm-1 observables, then  their corresponding  L\"uders instruments  are   extremal.  But it is also possible for the effects of a completely unsharp observable to be linearly independent.  For example, a binary observable $\E:= \{\E_0, \E_1\}$ acting in  $\hs = \co^2$, defined as $\E_0= 3/4|0\rangle \langle 0|+1/4 |1\rangle \langle 1|$ and $\E_1=\one - \E_0$, is completely unsharp with linearly independent effects. Indeed, the effects of the qubit observables $\E^{(\lambda)}$ defined in \eq{eq:example-qubit-binary} are linearly independent for any $0< \lambda < 1$.  Since the L\"uders instruments for such  observables are extremal, and can be implemented under the third law constraint, then we can immediately infer that extremality is permitted by the third law.  Now we present our final main result:
\begin{theorem}
Under the third law constraint, an observable $\E$ acting in $\hs$ admits an extremal instrument only if $\rank{\E_x} \geqslant \sqrt{\dim(\hs)}$ for all outcomes $x$, and a measuring apparatus can only implement an extremal instrument  if it interacts with the system with a non-unitary  channel  $\ee$.
\end{theorem}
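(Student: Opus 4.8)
The plan is to combine the standard characterisation of extremal instruments with a rank‑preservation property that the third law forces onto each individual operation of the instrument, and to use that the third‑law apparatus state, being full rank, admits nontrivial convex decompositions.

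For the rank bound, fix for each outcome $x$ a minimal Kraus decomposition $\ii_x(\cdot)=\sum_{k=1}^{m_x}A_{xk}(\cdot)A_{xk}^\dagger$, so that $\{A_{xk}\}_k$ is linearly independent and $\sum_{x,k}A_{xk}^\dagger A_{xk}=\onesys$, hence $A_{xk}^\dagger A_{xk}\leqslant\E_x$ for every $k$. I will invoke the known fact \cite{DAriano2011,Pellonpaa2013} that $\ii$ is extremal iff the family $\{A_{xk}^\dagger A_{xl}:x\in\xx,\ 1\leqslant k,l\leqslant m_x\}$ is linearly independent in $\lo(\hs)$. Since $A_{xk}^\dagger A_{xk}\leqslant\E_x$ implies that $A_{xk}$ annihilates $\ker(\E_x)$, we get $A_{xk}^\dagger A_{xl}=P_xA_{xk}^\dagger A_{xl}P_x$ with $P_x$ the projection onto $\supp(\E_x)$, so these operators lie in a space of dimension $\rank{\E_x}^2$, and their linear independence forces $m_x\leqslant\rank{\E_x}$. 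For a matching lower bound the key observation is that, under the third law, \emph{every} operation $\ii_x$ — not merely the channel $\ii_\xx$ — maps full‑rank states to full‑rank states: writing $\ii_x(\rho)=\tra[(\onesys\otimes\sqrt{\Z_x})\,\ee(\rho\otimes\xi)\,(\onesys\otimes\sqrt{\Z_x})]$, if $\rho$ is full rank then $\ee(\rho\otimes\xi)$ is positive‑definite by \defref{defn:third-law-measurement}; conjugating by $\onesys\otimes\sqrt{\Z_x}$ gives a positive operator definite on $\hs\otimes\supp(\Z_x)$, and the partial trace over $\ha$ of such an operator is positive‑definite on $\hs$. Consequently, for full‑rank $\rho$, $\sum_k\mathrm{ran}(A_{xk})=\mathrm{ran}(\ii_x(\rho))=\hs$, whence $\dim(\hs)\leqslant\sum_k\rank{A_{xk}}\leqslant m_x\,\rank{\E_x}$, using $\rank{A_{xk}}=\rank{A_{xk}^\dagger A_{xk}}\leqslant\rank{\E_x}$. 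Combining the two bounds gives $\dim(\hs)/\rank{\E_x}\leqslant m_x\leqslant\rank{\E_x}$, i.e. $\rank{\E_x}\geqslant\sqrt{\dim(\hs)}$ for every $x$.

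For the non‑unitarity claim, suppose $\ee(\cdot)=U(\cdot)U^\dagger$ with $U$ unitary; I must show $\ii$ cannot be extremal. The case $\dim(\ha)=1$ is immediate, so assume $\dim(\ha)\geqslant 2$. Then the full‑rank state $\xi$ is mixed and admits splittings $\xi=\lambda\xi_1+(1-\lambda)\xi_2$ with $\xi_1\neq\xi_2$ both full rank (take $\xi_{1,2}=\xi\pm\epsilon\delta$ for a nonzero traceless Hermitian $\delta$ and small $\epsilon$). By linearity $\ii=\lambda\,\ii^{(1)}+(1-\lambda)\,\ii^{(2)}$, where $\ii^{(i)}$ is implemented by $(\ha,\xi_i,U,\Z)$, again a scheme constrained by the third law. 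Extremality of $\ii$ would require $\ii^{(1)}=\ii^{(2)}$ for every choice of $\delta$; since differences of states span the traceless Hermitian operators on $\ha$, this forces $\ii_x(\rho)$ to depend on the apparatus preparation only through its trace, so that $\ii_x(\rho)=\dim(\ha)^{-1}\tra[(\onesys\otimes\Z_x)U(\rho\otimes\oneapp)U^\dagger]$ for all $\rho$. It then remains to rule out this residual case, which I will do by extracting an explicit nontrivial convex decomposition of $\ii$ from the natural Kraus operators $A_{x,a,b}\propto(\onesys\otimes\bra{f^x_b})U(\onesys\otimes\ket{e_a})$ — with $\ket{e_a},\ket{f^x_b}$ the eigenvectors of $\xi$ and $\Z_x$ — which satisfy the outcome‑independent identities $\sum_{x,b}A_{x,a,b}^\dagger A_{x,a,b}=\dim(\ha)^{-1}\onesys$ for each $a$, contradicting the extremality criterion once one passes to a minimal decomposition. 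The details appear in \app{app:extremality} (\propref{prop:extremality}).

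The main obstacle I expect is precisely the last step of the non‑unitarity argument, namely disposing of the residual case in which the pointer is trivially embedded by the interaction; this requires care in tracking how the identities among the $A_{x,a,b}^\dagger A_{x,a,b}$ survive the reduction to a minimal Kraus decomposition. By contrast, the rank bound is essentially dimension counting once the third‑law rank‑preservation property of each individual operation $\ii_x$ has been recognised.
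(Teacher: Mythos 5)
Your rank bound is correct and is essentially the paper's own argument (\propref{prop:extremality-third-law}, via \lemref{lemma:extremality-conditions-general} and the full-rank-output property in \lemref{lemma:measurement-third-law}): extremality forces $m_x\leqslant\rank{\E_x}$, the third law forces $\ii_x$ to map full-rank states to full-rank states, and dimension counting gives $\dim(\hs)\leqslant m_x\rank{\E_x}\leqslant\rank{\E_x}^2$. Your direct verification that each $\ii_x$ preserves full rank, via conjugation by $\onesys\otimes\sqrt{\Z_x}$ and positive-definiteness of $\ee(\rho\otimes\xi)$, is sound.

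The non-unitarity half, however, contains a genuine gap, and it is exactly the one you flag yourself. From full-rankness of $\xi$ and extremality you correctly deduce that the instrument is unchanged when the apparatus preparation is replaced by any other state, in particular by $\oneapp/\dim(\ha)$. But you then stop at the weaker conclusion $\ii_x(\rho)=\dim(\ha)^{-1}\tra[(\onesys\otimes\Z_x)U(\rho\otimes\oneapp)U^*]$ and propose to dispose of this ``residual case'' by exhibiting linear dependences among the products $A_{x,a,b}^*A_{x,a,b}$ of the natural, non-minimal Kraus operators. That does not yet contradict the extremality criterion: the criterion in \lemref{lemma:extremality-conditions-general} (iii) concerns a \emph{minimal} Kraus decomposition, and once you expand $A_{x,a,b}=\sum_i c^{x}_{(a,b),i}K^{(x)}_i$, the identities $\sum_{x,b}A_{x,a,b}^*A_{x,a,b}=\dim(\ha)^{-1}\onesys$ only assert that certain coefficient matrices built from the $c$'s coincide for different values of $a$; if they do coincide, linear independence of $\{K_i^{(x)*}K_j^{(x)}\}$ is not violated, no contradiction follows, and you have not shown that coincidence is impossible. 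The details you defer to an appendix are precisely the missing argument, so as written the second claim of the theorem is unproven. The paper closes this by pushing the state-independence one step further: since $\Gamma_{\xi'}\circ\ee^*(\cdot\otimes\Z_x)$ is the same for \emph{all} apparatus states $\xi'$, a tomography-type argument (\lemref{lemma:Gamma-independent-factorisation}) upgrades this to the operator identity $\ee^*(\cdot\otimes\Z_x)=\ii_x^*(\cdot)\otimes\oneapp$ (\lemref{lemma:extremal-interaction-channel-identity}); summing over $x$, unitarity of $\ee$ then makes $\ii_\xx^*$ multiplicative and hence unitary, say $\ii_\xx^*(\cdot)=V^*\cdot V$, the resulting commutation relation $[A\otimes\oneapp,\,U(V^*\otimes\oneapp)]=\zero$ forces $U=V\otimes W$, and consequently $\E_x=\tr[\Z_x W\xi W^*]\,\onesys$ is trivial, contradicting the standing non-triviality assumption on $\E$. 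Either adopt that factorisation step or supply the missing analysis of the residual case; without one of the two, the non-unitarity statement does not follow from your argument.
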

The proof is given in \app{app:extremality} (\propref{prop:extremality-third-law}). It follows that under the third law constraint, extremality is only permitted for large-rank observables.  Note in particular that since L\"uders measurements of completely unsharp observables may be extremal, then the above result indicates that they are realisable under the third law constraint  only with non-unitary measurement interactions; indeed, our proof for the sufficiency of complete unsharpness for the realisability of L\"uders instruments (\propref{prop:luders-completely-unsharp}) uses a non-unitary interaction channel. Furthermore, note that in contradistinction to the other properties discussed above,  unsharpness of $\E$ is not a necessary condition for extremality. Indeed,  sharp observables with sufficiently large rank admit an extremal instrument, albeit such instruments cannot be L\"uders due to the previous results.  In \app{app:extremality},   we provide a concrete model for an extremal instrument compatible with a binary sharp observable $\E$ acting in $\hs = \co^2 \otimes \co^2$, defined by $\E_x = \one \otimes |x\>\<x|$. Since $\rank{\E_x} = 2 = \sqrt{\dim(\hs)}$, we see that the bound provided in the above theorem is in fact tight.

\section{Discussion}

 We have generalised and strengthened the results of   Ref. \cite{Guryanova2018}, in the finite-dimensional setting, in several ways.  We have considered the most general class of (discrete) observables---both the observable to be measured and the pointer observable for the measuring apparatus---and not just those that are  sharp and rank-1. Moreover, we have considered a more general class of measurement interactions, between the measured system and measuring apparatus, constrained only by our operational formulation of the third law and thus not restricted to the standard  unitary or rank non-decreasing framework. Within the extended setting thus described, we have shown that ideal measurements are categorically prohibited by the third law for all observables and,   \emph{a fortiori}, we showed that the third law dictates that whenever a measurement outcome  can be predicted with certainty, then the state of the measured system is necessarily perturbed upon measurement. Moreover, we showed that the third law also forbids repeatable measurements, where we note that repeatability and ideality coincide only in the case of sharp rank-1 observables. In addition to the aforementioned impossibility statements, however, our results also include possibility statements as regards extremality and non-disturbance:  the third law  allows for an extremal instrument that measures an observable with sufficiently large rank, and for a measurement of a completely unsharp observable so that such a measurement will not disturb any observable that commutes.  
 
Our results have interesting consequences for the role of unsharp observables in the foundations of quantum theory, and  the question: what is real? There are two deeply connected traditional paradigms for the assignment of reality to a system, both of which are formulated with respect to sharp observables: the Einstein-Podolsky-Rosen (EPR) criterion  \cite{Einstein1935}, and the macrorealism criteria of Legget-Garg \cite{Leggett1985}.

The EPR criterion for a  physical property to correspond to an element of reality reads: \emph{``If, without in any way disturbing a system, we can predict with certainty (i.e., with probability equal to unity) the value of a physical quantity, then there exists an element of physical reality corresponding to this physical quantity''}. In other words, the EPR criterion rests on the possibility of ideal measurements: an eigenvalue of some self-adjoint operator exists in a system when the system is in the corresponding eigenstate, so that an ideal measurement of the observable reveals the eigenvalue  while leaving the system in the same state. But the EPR criterion is shown to be in conflict with the third law of thermodynamics: it is in fact  \emph{not} possible to ascertain any property of the system, with certainty,  without changing its state.  As argued by Busch and Jaeger  \cite{Busch2010a, Jaeger2019}, however,  the EPR criterion is  sufficient, but not necessary;  a necessary condition for a property of a system to correspond to an element of reality  is that it must have the capacity of influencing other systems, such as a measuring apparatus, in a way that is characteristic of such property.  Indeed, since the influence the system has on the apparatus may come in degrees---quantified by the probability, or ``propensity'',  for the apparatus to register that such property  obtains a given value in the system---then even an unsharp observable may correspond to an element of ``unsharp reality''. But note that this weaker criterion makes no stipulation as to how the state of the system changes upon measurement, and does not rely on the possibility of ideal measurements:  a property may exist in a system even if its measurement changes the state of the system. Consequently, our results  provide  support for the unsharp reality program of Busch and Jaeger from a thermodynamic standpoint, as it is shown to be compatible with the third law.

On the other hand, Legget and Garg proposed Macrorealism as the conjunction of two postulates:{ \bf(MR)} \emph{Macrorealism per se}, and {\bf (NI)} \emph{Noninvasive measurability}. { \bf(MR)} rests on the notion of \emph{definiteness}, i.e., that at any given  time, a system can only  be in one out of a set of  states that are perfectly distinguishable by measurement of the observable describing the system---for example, an eigenstate corresponding to some eigenvalue  of a self-adjoint operator. On the other hand,  {\bf (NI)} requires that measurement of such observable not influence the statistics of other observables    at later times. In other words, {\bf (NI)} relies on the possibility of a non-disturbing measurement. But we showed that the third law permits non-disturbance only for unsharp observables without the norm-1 property. Since such observables do not admit definite values in any state, i.e., no two states can be perfectly distinguished by a measurement of such observables, the third law is incompatible with the conjunction of  { \bf(MR)} and {\bf (NI)}. It follows that if we want to keep  {\bf (NI)}, then we must drop { \bf(MR)};   once again we are forced to adopt the notion of an unsharp reality.

To be sure, the third law of thermodynamics should not be considered in isolation; a complete analysis of how the laws of thermodynamics constrain channels and  quantum measurements demands that the third law be considered in conjunction with the first  (conservation of energy) and with the second (no perpetual motion of the second kind). Indeed, our operational formulation of the third law is independent of any notion of temperature, energy, or time. We expect that in the complete picture, that is, when the other laws are also taken into account, our generalised formulation will recover the standard notions of the third law in the literature.     It is also an interesting question to ask how our formulation of the third law, and the constraints imposed by such law on measurements, can be extended to the infinite-dimensional setting. A complete operational formulation of channels constrained by the laws of thermodynamics, and for more general systems than those of finite dimension, is thus still an open problem; our work constitutes one part of such a program,  which  extends the research discipline devoted to the ``thermodynamics of quantum measurements''  \cite{Sagawa2009b, Miyadera2011d, Jacobs2012a, Funo2013, Navascues2014a, Miyadera2015a, Abdelkhalek2016, Hayashi2017, Lipka-Bartosik2018, Solfanelli2019, Purves-2020, Hovhannisyan2021, Aw2021, Beyer2021, Danageozian2022, Mohammady2022, Stevens2021}.  While our impossibility results  are expected to carry over to the more complete framework, the question remains as to how our positive claims must be adapted in light of the other laws of thermodynamics: the combined laws may  impose further constraints. Indeed,  as witnessed by the Wigner-Araki-Yanase theorem,  conservation of energy imposes constraints  on the  measurability  of observables that do not commute with the Hamiltonian \cite{E.Wigner1952,Busch2010, Araki1960,Ozawa2002,Miyadera2006a,Loveridge2011,Ahmadi2013b,Loveridge2020a, Mohammady2021a,Kuramochi2022}. This is in contradistinction to the third law, which imposes no constraints on  measurability. We leave such open questions for future work.

\begin{acknowledgments}
The authors wish to thank Leon Loveridge for his invaluable comments, which greatly improved the present manuscript. M.H.M. acknowledges funding from the European Union’s Horizon 2020 research and innovation programme under the Marie Skłodowska-Curie grant agreement No. 801505. T.M. acknowledges financial support from JSPS KAKENHI Grant No. JP20K03732.
\end{acknowledgments}



\appendix

\section{Preliminaries}

Before presenting the proofs for our main results, let us establish some basic notation and definitions. We denote the algebra of linear operators on a finite-dimensional complex Hilbert space $\h$ as $\lo(\h)$. For any subset $\mathscr{A} \subseteq \lo(\h)$, 
we denote  the \emph{commutant} as 
\begin{align*}
\mathscr{A}':=\{B\in \lo(\h): [A, B]=\zero \, \forall \, A\in \mathscr{A}\}.    
\end{align*}

A ``Schr\"odinger picture''  operation is defined as a completely positive (CP), trace non-increasing linear map  $\Phi : \lo(\h) \to \lo(\kk)$, where $\h$ is the input space and $\kk$ is the output space.  When both input and output spaces are the same, i.e.,  $\h = \kk$, we say that $\Phi$ acts in $\h$. The associated  ``Heisenberg picture'' dual operation is  a completely positive  linear map $\Phi^* : \lo(\kk) \to \lo(\h)$, defined by the trace duality $\tr[A \Phi(B)] = \tr[\Phi^*(A) B]$ for all $A\in \lo(\kk)$ and $B \in \lo(\h)$. $\Phi^*$ is sub-unital, i.e., $\Phi^*(\one\sub{\kk}) \leqslant \one\sub{\h}$, and is unital when the equality holds, which is the case exactly when $\Phi$ is a channel, i.e., when $\Phi$ preserves the trace. We shall also refer to unital CP maps $\Phi^*$ as channels.

The rank of an  operator $A \in \lo(\h)$  is defined as $\rank{A}:= \dim (\mbox{Im}A) = \dim (A\h)$, which coincides with the number of strictly positive eigenvalues of $A^*A$. A positive operator $A > \zero$ is said to be full-rank, or to have full rank in $\h$, if all of its eigenvalues are strictly positive. That is,  $A$ is full-rank in $\h$ when  $\rank{A} = \dim(\h)$. A  state on $\h$ is a positive operator of unit trace, and a full-rank state is also faithful, where a state $\rho$ is called faithful if for any $A \in \lo(\h)$ the following implication holds: $\tr[\rho A^* A] = 0 \iff A = \zero$. Now we  show a useful property of full-rank states which shall be frequently employed in this paper.
\begin{lemma}\label{lemma:full-rank-state-inequality}
Consider the states $\rho$ and $\sigma$ on $\h$. If $\rho$ is full-rank, then there exists $\lambda \in (0,1)$ such that $\rho \geqslant \lambda \sigma$.
\end{lemma}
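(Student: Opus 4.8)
The plan is to exploit the fact that a full-rank state $\rho$ is bounded below by a positive multiple of the identity, and that any state $\sigma$ is bounded above by a positive multiple of the identity. Concretely, since $\rho$ is full-rank on a finite-dimensional $\h$, its smallest eigenvalue $\mu := \min \mathrm{spec}(\rho)$ is strictly positive, so $\rho \geqslant \mu \, \one\sub{\h}$. On the other hand, for any state $\sigma$ we have $\sigma \leqslant \| \sigma \| \, \one\sub{\h} \leqslant \one\sub{\h}$, since $\sigma$ is positive with unit trace and hence its operator norm (its largest eigenvalue) is at most $1$. Chaining these gives $\rho \geqslant \mu \, \one\sub{\h} \geqslant \mu \, \sigma$.

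It then remains only to choose $\lambda$ appropriately inside the open interval $(0,1)$. If $\mu < 1$ we may simply take $\lambda = \mu$; if $\mu \geqslant 1$ (which can only happen in the degenerate situation $\dim(\h) = 1$, $\rho = \sigma = \one\sub{\h}$) we take any $\lambda \in (0,1)$, noting $\rho = \one\sub{\h} \geqslant \lambda \, \one\sub{\h} \geqslant \lambda \sigma$ trivially. In all cases $\lambda \in (0,1)$ and $\rho \geqslant \lambda \sigma$, which is the claim.

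There is no real obstacle here; the only mild subtlety is bookkeeping around whether the smallest eigenvalue of $\rho$ could exceed $1$, which is why one states the conclusion with $\lambda \in (0,1)$ rather than $\lambda = \mu$ directly, and handles that edge case separately. The argument uses only finite-dimensionality (to guarantee the infimum of the spectrum is attained and strictly positive) and the elementary operator-monotone facts $A \geqslant B$, $C \geqslant \zero$ $\Rightarrow$ $A + C \geqslant B$ and $0 \leqslant t \leqslant s$, $A \geqslant \zero$ $\Rightarrow$ $sA \geqslant tA$, both of which are standard.
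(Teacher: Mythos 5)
Your proof is correct, and it rests on the same essential fact as the paper's: full-rankness of $\rho$ supplies a strictly positive constant (your $\mu=\min\mathrm{spec}(\rho)$, the paper's $\epsilon$) that serves as $\lambda$. The execution differs, though. The paper works dually, reducing $\rho-\lambda\sigma\geqslant\zero$ to the scalar inequalities $\tr[E(\rho-\lambda\sigma)]\geqslant 0$ over effects $E\in\mathscr{E}(\h)$, and sets $\lambda=\epsilon/\delta$ with $\epsilon:=\inf_E\tr[E\rho]$ and $\delta:=\sup_E\tr[E\sigma]=1$; note that, as literally written, that infimum over all of $\mathscr{E}(\h)$ (which contains $E=\zero$) is $0$, so the paper's claim $\epsilon>0$ implicitly requires reading the infimum over unit vectors or rank-one projections, in which case $\epsilon$ is exactly your $\mu$. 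Your route skips the dual step entirely: the sandwich $\rho\geqslant\mu\,\one\geqslant\mu\,\|\sigma\|\,\one\geqslant\mu\,\sigma$ follows directly from the spectral bound and the fact that a state's largest eigenvalue is at most $1$, which is arguably cleaner and makes the positivity of the constant unambiguous. Your explicit handling of the degenerate case where $\mu\geqslant 1$ (only possible for $\dim(\h)=1$), so as to keep $\lambda$ strictly inside $(0,1)$, is a small point the paper glosses over; both arguments otherwise deliver the same statement, and nothing in your version fails.
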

\begin{proof}
Let  us first note that $\rho - \lambda \sigma \geqslant \zero$ if and only if $\tr[E(\rho - \lambda \sigma)] \geqslant 0$ for all $E \in \mathscr{E}(\h)$, where we define $\mathscr{E}(\h):= \{E \in \lo(\h) : \zero < E \leqslant \one\}$ as the space of (non-vanishing) effects on $\h$. Therefore, let us define 
\begin{align*}
\epsilon &:= \inf_{E \in \mathscr{E}(\h)}\{\tr[E \rho]\} \in [0,1],  \\
\delta &:= \sup_{E \in \mathscr{E}(\h)}\{\tr[E \sigma]\} =1.
\end{align*}
If $\rho$ is full-rank, then it holds that $ \epsilon >0$. Choosing  $\lambda = \epsilon / \delta$,  we thus have $\tr[E(\rho - \lambda \sigma)] \geqslant  \epsilon - \epsilon = 0$ for all $E \in \mathscr{E}(\h)$.  Therefore, there exists $0< \lambda <1$ such that  $\rho \geqslant \lambda \sigma$. 
\end{proof}

\section{Properties of channels constrained by  the third law}\label{app:channel-third-law-properties}

Recall from \defref{defn:third-law} that a channel is constrained by the third law if it maps full-rank states to full-rank states. We now show that the the set of channels constrained by the third law is closed under composition:

\begin{lemma}\label{lemma:third-law-composition}
Consider the channels $\Phi_1 : \lo(\h) \to \lo(\kk)$ and $\Phi_2 : \lo(\kk) \to \lo(\rr)$. If each channel is constrained by the third law, then so too is their composition $\Phi_2 \circ \Phi_1 : \lo(\h) \to \lo(\rr)$.
\end{lemma}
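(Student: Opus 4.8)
The plan is to verify directly from \defref{defn:third-law} that the composition of two third-law channels maps full-rank states to full-rank states. The statement is essentially immediate, so the proof is short: I would take an arbitrary full-rank state $\rho$ on $\h$ and simply chase it through the two channels in turn.

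First I would note that since $\Phi_1 : \lo(\h) \to \lo(\kk)$ is constrained by the third law, the image $\Phi_1(\rho)$ is a full-rank state on $\kk$ whenever $\rho$ is a full-rank state on $\h$. Next, since $\Phi_2 : \lo(\kk) \to \lo(\rr)$ is constrained by the third law, applying it to the full-rank state $\Phi_1(\rho)$ yields a full-rank state $\Phi_2(\Phi_1(\rho))$ on $\rr$. Hence $(\Phi_2 \circ \Phi_1)(\rho) = \Phi_2(\Phi_1(\rho))$ is full-rank on $\rr$ for every full-rank $\rho$ on $\h$. I would also remark that $\Phi_2 \circ \Phi_1$ is manifestly a channel (composition of CPTP maps is CPTP), so all hypotheses of \defref{defn:third-law} are met and the conclusion follows.

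There is no real obstacle here: the only thing to be careful about is that the intermediate object $\Phi_1(\rho)$ is genuinely a \emph{state} on $\kk$ (unit trace, positive) and not merely a positive operator, so that the hypothesis on $\Phi_2$ applies to it — but this is guaranteed because $\Phi_1$ is trace-preserving and completely positive. Everything else is a one-line application of the definition twice in succession.
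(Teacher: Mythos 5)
Your proof is correct and matches the paper's argument exactly: both simply apply \defref{defn:third-law} twice, first to conclude $\Phi_1(\rho)$ is full-rank on $\kk$ and then to conclude $\Phi_2(\Phi_1(\rho))$ is full-rank on $\rr$. Your extra remark that the intermediate object is genuinely a state is a harmless and sensible clarification.
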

\begin{proof}
Assume that $\Phi_i$ are constrained by the third law. Let  $\rho$ be a full-rank state on $\h$. It follows that $\Phi_1(\rho)$ is a full-rank state on $\kk$, and so $\Phi_2 \circ \Phi_1 (\rho)$ is a full-rank state on $\rr$. As such, $\Phi_2 \circ \Phi_1$ maps all full-rank states on $\h$ to full-rank states on $\rr$, and so by \defref{defn:third-law} is constrained by the third law.
\end{proof}

We shall now prove a  useful result that will allow for equivalent formulations of channels constrained by the third law:

\begin{lemma}\label{lemma:third-law-channel-condition}
Consider a channel $\Phi : \lo(\h) \to \lo(\kk)$. The following statements are equivalent.  
\begin{enumerate}[(i)]
     \item For every full-rank state $\rho$ on $\h$, $\Phi(\rho)$ is a full-rank state on $\kk$. 

    \item There exists a  state $\sigma$ on $\h$ such that $\Phi(\sigma)$ is a full-rank state on $\kk$.

    \item For every $A \in \lo(\kk)$, $\Phi^*(A^* A) = \zero \iff A = \zero$.
    
    \item In the case that $\h = \kk$, there exists a full-rank state $\rho_0$ on $\h$ such that $\Phi(\rho_0) = \rho_0$.

\end{enumerate}
 
\end{lemma}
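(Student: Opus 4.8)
The plan is to run a cycle of implications through (i)--(iii), which holds for an arbitrary channel $\Phi:\lo(\h)\to\lo(\kk)$, and then to attach (iv) to this cycle in the case $\h=\kk$. The two tools I would rely on are \lemref{lemma:full-rank-state-inequality} (a full-rank state dominates every state up to a strictly positive scalar) and the elementary fact, recalled above, that a full-rank state is faithful.

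For the core equivalences, (i)$\Rightarrow$(ii) is immediate upon feeding (i) the maximally mixed state $\one/\dim(\h)$. For (ii)$\Rightarrow$(i): if $\Phi(\sigma)$ is full-rank and $\rho$ is an arbitrary full-rank state, then \lemref{lemma:full-rank-state-inequality} supplies $\lambda\in(0,1)$ with $\rho\geqslant\lambda\sigma$, whence $\Phi(\rho)\geqslant\lambda\,\Phi(\sigma)>\zero$ by positivity of $\Phi$, and $\Phi(\rho)$ is a state since $\Phi$ preserves the trace. For (ii)$\Rightarrow$(iii): if $\Phi^*(A^*A)=\zero$ then $\tr[A^*A\,\Phi(\sigma)]=\tr[\Phi^*(A^*A)\sigma]=0$, so faithfulness of the full-rank state $\Phi(\sigma)$ forces $A=\zero$, while the reverse implication in the biconditional of (iii) is trivial. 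For (iii)$\Rightarrow$(i): if $\Phi(\rho)$ failed to be full-rank for some full-rank $\rho$, the projection $Q$ onto $\ker\Phi(\rho)$ would be nonzero with $\tr[Q\,\Phi(\rho)]=0$, hence $\tr[\Phi^*(Q)\rho]=0$; as $\rho$ is faithful and $\Phi^*(Q)\geqslant\zero$ this gives $\Phi^*(Q)=\zero$, and since $Q=Q^*Q$, (iii) yields $Q=\zero$, a contradiction. These four implications close the loop (i)$\Leftrightarrow$(ii)$\Leftrightarrow$(iii).

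It remains to fold in (iv) when $\h=\kk$. One direction, (iv)$\Rightarrow$(ii), is trivial (take $\sigma=\rho_0$). For (ii)$\Rightarrow$(iv) the plan proceeds in two stages. First, produce \emph{some} fixed state: the state space of $\h$ is compact and convex and is carried into itself by $\Phi$, so a fixed state exists --- for instance as a subsequential limit of the Cesàro averages $\tfrac1n\sum_{k=0}^{n-1}\Phi^k(\omega)$ for a full-rank $\omega$, where \lemref{lemma:third-law-composition} ensures each $\Phi^k$ is again constrained by the third law. Second, upgrade it to a \emph{full-rank} fixed state: I would take a fixed state $\rho_0$ of maximal-dimensional support $P$, observe that $\Phi(\rho_0)=\rho_0$ together with faithfulness of $\rho_0$ on $P\h$ forces $\Phi$ to leave the corner $P\lo(\h)P$ invariant and forces $\Phi^*(P)\geqslant P$ (so $\Phi^{*k}(P)$ increases to a $\Phi^*$-fixed operator $\widehat P$ with $P\leqslant\widehat P\leqslant\one$), and then attempt to combine these structural facts with the third-law hypothesis to preclude $P\neq\one$. \textbf{This last step is the main obstacle.} The tension is that the third-law property only controls images of states that are full-rank on all of $\h$, whereas the states naturally adapted to the problem are supported on the proper subspace $P\h$; bridging that gap --- e.g.\ by first establishing $\widehat P=\one$ and then showing a Cesàro limit cannot remain trapped in $P\h\subsetneq\h$ --- is where the argument has to do its real work, and I would treat it with care.
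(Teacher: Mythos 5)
The parts of your argument that you actually close are correct and essentially coincide with the paper's proof: (i) $\Rightarrow$ (ii) is trivial, (ii) $\Rightarrow$ (i) is the same use of \lemref{lemma:full-rank-state-inequality}, (ii) $\Rightarrow$ (iii) is the same faithfulness argument, and your (iii) $\Rightarrow$ (i) via the kernel projection of $\Phi(\rho)$ is only a cosmetic variant of the paper's (iii) $\Rightarrow$ (ii), which uses the support projection of $\Phi(\one/\dim(\h))$; (iv) $\Rightarrow$ (ii) is trivial in both treatments.

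The step you flag as the main obstacle, (ii) $\Rightarrow$ (iv), is indeed the gap in your proposal---but your hesitation is vindicated: that step cannot be closed, because the implication is false. The paper's own argument sets $\rho_0 := \Phi_{\av}(\sigma)$ with $\sigma = \one/\dim(\h)$ and $\Phi_{\av} = \lim_{N\to\infty}\frac{1}{N}\sum_{n=1}^{N}\Phi^n$, and asserts that $\rho_0$ is full-rank because every $\Phi^n(\sigma)$ is; this inference is unsound, since a limit of full-rank states need not be full-rank. A concrete counterexample is the qubit amplitude-damping channel $\Phi(\cdot) = K_0 \cdot K_0^* + K_1 \cdot K_1^*$ on $\lo(\co^2)$ with $K_0 = |0\>\<0| + \sqrt{1-\gamma}\,|1\>\<1|$, $K_1 = \sqrt{\gamma}\,|0\>\<1|$, $0<\gamma<1$: writing a state in the basis $\{\ket{0},\ket{1}\}$ with diagonal entries $a,b$ and off-diagonal entry $c$, one finds $\det \Phi(\rho) = (1-\gamma)\bigl[(a+\gamma b)b - |c|^2\bigr] \geqslant (1-\gamma)\det\rho$, so every full-rank state is mapped to a full-rank state and (i)--(iii) hold; yet $\Phi(\rho)=\rho$ forces $b=0$ and $c=0$, so the unique fixed state is the pure state $|0\>\<0|$ and (iv) fails. (Here $\Phi_{\av}(\cdot)=\tr[\cdot]\,|0\>\<0|$, which is exactly where the paper's argument breaks; note the example is even rank non-decreasing, so strengthening to that class does not help, and your proposed rescue via a maximal-support fixed state also dies here: $P=|0\>\<0|$ while $\Phi^{*n}(P)\nearrow\one$, which precludes no contradiction.) Your diagnosis of the tension---that the third-law condition only constrains images of globally full-rank states and so cannot by itself force a faithful fixed point---is precisely the obstruction. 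The correct content of the lemma is therefore the equivalence (i) $\Leftrightarrow$ (ii) $\Leftrightarrow$ (iii) together with the one-way implication (iv) $\Rightarrow$ (i)--(iii), which is exactly what your completed steps establish; the existence of a full-rank fixed state is a strictly stronger property, and statements that invoke item (iv) downstream (e.g.\ item (ii) of \lemref{lemma:measurement-third-law}) require it as an additional hypothesis or a separate argument.
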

\begin{proof}
(i) $\implies$ (ii): This is trivial.
\\
(ii) $\implies$ (i):  By \lemref{lemma:full-rank-state-inequality}, for any full-rank state $\rho$, and for any state $\sigma$, there exists $\lambda \in (0,1)$ such that $\rho \geqslant \lambda \sigma$.  Assume that $\Phi(\sigma)$ is full-rank.  It holds that for any full-rank state $\rho$, and for any unit-vector $\ket{\phi} \in \kk$, we have $\<\phi| \Phi(\rho) |\phi\> \geqslant \lambda \<\phi| \Phi(\sigma) |\phi\> >0$, which implies that $\Phi(\rho)$ is full-rank.  
\\
(ii) $\implies$ (iii): For any channel $\Phi^*$, it holds that $\|\Phi^*(A)\| \leqslant \| A\|$, and so $A = \zero \implies \Phi^*(A^* A) = \zero$ follows. Now assume that for some $A$, it holds that $\Phi^*(A^* A) = \zero$. Then  $\tr[\Phi(\sigma) A^*A] = \tr[\sigma \Phi^*(A^* A)]
=0$ for any state $\sigma$. But if $\Phi(\sigma)$ is full-rank, this holds only if $A = \zero$. 
\\
(iii) $\implies$ (ii):  Consider the complete mixture $\sigma=\one/\dim(\h)$, and  let $P \leqslant \one$ be the minimal support projection on $\Phi(\sigma)$, with $P^\perp := \one - P \geqslant \zero$ its orthogonal complement. It follows that 
$\tr[ \Phi^*(P^\perp) \sigma] =  \tr[ P^\perp \Phi(\sigma)]=0$. But by  (iii), this holds if and only if 
$P^\perp =\zero$, and hence $P=\one$. That is, $\Phi$ maps the complete mixture to a full-rank state. 
\\
(i) $\implies$ (iv): Assume that $\h = \kk$, so that for any $n \in \nat$ we may define the channel $\Phi^n : \lo(\h) \to \lo(\h)$ as $n$ consecutive applications of the channel $\Phi$. Now consider the channel
\begin{align*}
    \Phi_\av(\cdot) := \lim_{N \to \infty} \frac{1}{N} \sum_{n=1}^N \Phi^n(\cdot),
\end{align*}
where we note that this limit exists as $\dim(\h) < \infty$. Let us define the state $\rho_0:= \Phi_\av(\sigma)$, where $\sigma = \one / \dim(\h)$ is the complete mixture. Since the complete mixture  is full-rank, then  $\Phi^n(\sigma)$ will be full-rank for all $n \in \nat$, and so it holds that $\rho_0$ is full-rank. But note that
\begin{align*}
 \Phi(\rho_0) &= \lim_{N \to \infty} \frac{1}{N} \sum_{n=1}^N \Phi^{n+1}(\sigma) \nonumber \\
 &=  \lim_{N \to \infty} \frac{1}{N} \bigg(\sum_{n=1}^N \Phi^{n}(\sigma) + \Phi^{N+1}(\sigma) - \Phi(\sigma) \bigg) \nonumber \\
 & = \Phi_\av(\sigma) = \rho_0.
\end{align*}
and so there exists a full-rank state $\rho_0$ such that $\Phi(\rho_0) = \rho_0$. 
\\
(iv) $\implies$ (ii): This is trivial.

\end{proof}

\begin{corollary}\label{corollary:third-law-channel-examples}
The following channels are constrained by the third law:
\begin{enumerate}[(i)]
    \item Rank non-decreasing channels acting in $\h$.
    \item Bistochastic channels acting in $\h$. 
    \item Gibbs-preserving channels acting in $\h$.
    \item Partial trace channels $\tra : \lo(\hs \otimes \ha) \to \lo(\hs)$.
    \item Composition channels     $\Lambda_\xi : \lo(\hs) \to \lo(\hs\otimes \ha)$,  $\rho \mapsto  \rho \otimes \xi$, where  $\xi$ is a full-rank state on $\ha$. 
\end{enumerate}
\end{corollary}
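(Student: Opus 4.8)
The plan is to treat each of the five cases separately and, in each, to reduce the claim to whichever of the four equivalent conditions of \lemref{lemma:third-law-channel-condition} is most convenient, rather than checking \defref{defn:third-law} directly every time.

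Case (i): I would argue straight from \defref{defn:third-law}. If $\rho$ is a full-rank state on $\h$ then $\rank{\rho}=\dim(\h)$; since a rank non-decreasing channel does not lower the rank, $\rank{\Phi(\rho)}\geqslant\dim(\h)$, and because $\Phi(\rho)$ is a state on $\h$ its rank is also at most $\dim(\h)$, forcing full rank. Cases (ii) and (iii): here the cleanest route is condition~(iv) of \lemref{lemma:third-law-channel-condition}, which only asks for the existence of a single full-rank fixed state. A bistochastic channel acting in $\h$ fixes the complete mixture $\one/\dim(\h)$, which is full-rank; a Gibbs-preserving channel fixes the Gibbs state $e^{-\beta H}/\tr[e^{-\beta H}]$, which has full rank precisely because the temperature is non-vanishing (the inverse temperature $\beta$ being finite). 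In both instances the hypothesis of condition~(iv) is met, so the channel is constrained by the third law.

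Cases (iv) and (v) involve channels whose input and output spaces differ, so condition~(iv) does not apply; instead I would use condition~(ii), that some state is sent to a full-rank state. For the partial trace $\tra:\lo(\hs\otimes\ha)\to\lo(\hs)$, the product state $\tfrac{1}{\dim(\hs)}\onesys\otimes\tau$, with $\tau$ any state on $\ha$, has partial trace $\tfrac{1}{\dim(\hs)}\onesys$, which is full-rank, so condition~(ii) holds. For the composition channel $\Lambda_\xi:\rho\mapsto\rho\otimes\xi$ I would instead argue directly from \defref{defn:third-law}: the spectrum of $\rho\otimes\xi$ is the set of pairwise products of the eigenvalues of $\rho$ and of $\xi$, so if $\rho$ is full-rank and $\xi$ is full-rank (as assumed) then every such product is strictly positive and $\rho\otimes\xi$ has full rank on $\hs\otimes\ha$; equivalently, taking $\rho=\tfrac{1}{\dim(\hs)}\onesys$ verifies condition~(ii). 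Each case is a one-line verification, so I anticipate no real obstacle; the only points deserving a moment's care are the spectral description of $A\otimes B$ used in case~(v) and, in case~(iii), the reminder that ``Gibbs-preserving'' refers to the Gibbs state at strictly positive temperature, which is exactly why it is full-rank.
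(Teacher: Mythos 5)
Your proposal is correct and follows essentially the same route as the paper: each case is verified either directly from \defref{defn:third-law} (cases (i) and (v)) or by exhibiting a state sent to, or fixed at, a full-rank state and invoking \lemref{lemma:third-law-channel-condition} (cases (ii)--(iv)). The only cosmetic difference is that for the bistochastic and Gibbs-preserving cases you cite condition (iv) of the lemma rather than condition (ii), but since the exhibited full-rank state is a fixed point in both arguments, this is the same proof in substance.
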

\begin{proof}
\begin{enumerate}[(i)]
    \item This is trivial.

    \item Bistochastic channels preserve both the trace and the identity. Consider the complete mixture $\sigma := \one/\dim(\h)$. It trivially holds that if $\Phi$ is  bistochastic, then $\Phi(\sigma) = \sigma$. Such a channel satisfies property (ii) of \lemref{lemma:third-law-channel-condition}, and is hence constrained by the third law.
    
    \item A Gibbs-preserving channel  $\Phi$ acting in $\h$ satisfies $\Phi(\tau_\beta) = \tau_\beta$ for some $\tau_\beta:= e^{-\beta H} / \tr[e^{-\beta H}]$, where $H = H^*$, $\beta >0$.     Gibbs states $\tau_\beta$, with $\beta >0$,  are full-rank.  It follows that a Gibbs-preserving channel satisfies property (ii) of \lemref{lemma:third-law-channel-condition}, and is hence constrained by the third law.
    
    \item Let $\rho$ and $\sigma$ be full-rank states on $\hs$ and $\ha$, respectively, so that $\omega = \rho \otimes \sigma$ is full-rank in $\hs\otimes \ha$. Given that $\tra[\omega] = \rho$, which is full-rank in $\hs$, it follows that the partial trace satisfies property (ii) of \lemref{lemma:third-law-channel-condition}, and is hence constrained by the third law.
    
    \item If $\xi$ is full-rank, then $\Lambda_\xi(\rho) = \rho \otimes \xi$ is full-rank for all full-rank states $\rho$ on $\hs$. As such, $\Lambda_\xi$ is constrained by the third law.
    
\end{enumerate}
\end{proof}

Given that unitary channels preserve the spectrum, they are clearly rank non-decreasing channels. Since unitary channels are a special subclass of bistochastic channels, one may wonder whether or not all bistochastic channels are rank non-decreasing. The following lemma answers such a conjecture in the affirmative:

\begin{lemma}
Bistochastic channels $\Phi$ acting in $\h$ are rank non-decreasing.
\end{lemma}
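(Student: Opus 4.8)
The plan is to show that for a bistochastic channel $\Phi$ acting in $\h$ and any state $\rho$, we have $\rank{\Phi(\rho)} \geqslant \rank{\rho}$. Since a bistochastic channel is unital, the dual channel $\Phi^*$ is also bistochastic (trace- and identity-preserving), so it suffices to exploit the fact that bistochastic channels are exactly those mapping the identity to the identity while being trace-preserving. The key tool is the support projection: let $P$ be the support projection of $\rho$, so $\rank{\rho} = \tr[P]$, and let $Q$ be the support projection of $\Phi(\rho)$, so $\rank{\Phi(\rho)} = \tr[Q]$. I would aim to prove $\tr[Q] \geqslant \tr[P]$.

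First I would observe that $Q$ is characterized by $\tr[Q^\perp \Phi(\rho)] = 0$ where $Q^\perp = \one - Q$, and by positivity this forces $\Phi^*(Q^\perp)^{1/2} \rho^{1/2} = \zero$, hence $\Phi^*(Q^\perp) \rho = \zero$, i.e. $\Phi^*(Q^\perp)$ has support orthogonal to that of $\rho$: $P \Phi^*(Q^\perp) P = \zero$, equivalently $\Phi^*(Q^\perp) \leqslant c\, P^\perp$ for the appropriate scaling — more precisely, since $\zero \leqslant \Phi^*(Q^\perp) \leqslant \one$ and $P\Phi^*(Q^\perp)P = \zero$, the operator $\Phi^*(Q^\perp)$ is supported in the range of $P^\perp$, so $\Phi^*(Q^\perp) \leqslant P^\perp$ after noting $\Phi^*(Q^\perp) = P^\perp \Phi^*(Q^\perp) P^\perp \leqslant \|\Phi^*(Q^\perp)\| P^\perp \leqslant P^\perp$. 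Taking traces and using unitality of $\Phi^*$ (so $\tr[\Phi^*(Q^\perp)] = \tr[Q^\perp]$, as $\Phi^*$ is trace-preserving because $\Phi$ is unital), we get $\tr[Q^\perp] = \tr[\Phi^*(Q^\perp)] \leqslant \tr[P^\perp]$, which rearranges to $\dim(\h) - \tr[Q] \leqslant \dim(\h) - \tr[P]$, i.e. $\tr[Q] \geqslant \tr[P]$, as desired.

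The main subtlety I anticipate is the justification that $\tr[Q^\perp\Phi(\rho)] = 0$ implies $P\Phi^*(Q^\perp)P = \zero$: one writes $0 = \tr[Q^\perp \Phi(\rho)] = \tr[\Phi^*(Q^\perp)\rho] = \tr[\rho^{1/2}\Phi^*(Q^\perp)\rho^{1/2}]$, and since the integrand is a positive operator of zero trace it vanishes, giving $\Phi^*(Q^\perp)^{1/2}\rho^{1/2} = \zero$ and thus $\Phi^*(Q^\perp)\rho = \zero$; then $P$ being the support projection of $\rho$ yields $\Phi^*(Q^\perp) P = \zero$ and by taking adjoints $P\Phi^*(Q^\perp) = \zero$, so certainly $P\Phi^*(Q^\perp)P = \zero$. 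The only other point to be careful about is that $\Phi^*$ is genuinely trace-preserving: this holds precisely because $\Phi$ preserves the identity (bistochasticity), via the trace duality $\tr[\Phi^*(A)] = \tr[\Phi^*(A)\one] = \tr[A\,\Phi(\one)] = \tr[A\one] = \tr[A]$. With these two observations in place the argument is just the support-projection trace bound, and the conclusion that $\rank{\Phi(\rho)} \geqslant \rank{\rho}$ for all $\rho$ is exactly the statement that $\Phi$ is rank non-decreasing.
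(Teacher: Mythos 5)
Your proof is correct, but it takes a genuinely different route from the paper's. The paper stays in the Schr\"odinger picture: it first shows (for an arbitrary channel) that $\rank{\Phi(\rho)}$ depends only on the support of $\rho$, by sandwiching $\rho$ between multiples of $P/\tr[P]$ via the full-rank-state inequality of Lemma A.1, so that $\rank{\Phi(\rho)} = \rank{\sigma}$ with $\sigma := \Phi(P/\tr[P])$; it then feeds the decomposition $\one = \tr[P]\,P/\tr[P] + \tr[P^{\perp}]\,P^{\perp}/\tr[P^{\perp}]$ through $\Phi(\one)=\one$ to obtain $\rank{\rho} \leqslant 1/\|\sigma\|$ and $\rank{\Phi(\rho)} \geqslant 1/\|\sigma\|$, comparing the two via the operator norm of $\sigma$. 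You instead work in the Heisenberg picture with support projections: $\tr[\Phi^*(Q^\perp)\rho]=0$ forces $\Phi^*(Q^\perp)$ to be supported in $P^\perp\h$, unitality of $\Phi^*$ (from trace preservation of $\Phi$) upgrades this to $\Phi^*(Q^\perp) \leqslant P^\perp$, and trace preservation of $\Phi^*$ (from $\Phi(\one)=\one$) turns the operator inequality into the co-rank inequality $\tr[Q^\perp] \leqslant \tr[P^\perp]$. Your argument is shorter, avoids Lemma A.1 and the norm bookkeeping, and makes transparent exactly where each half of bistochasticity enters; the paper's argument has the side benefit that its intermediate step ($\rank{\Phi(\rho)}$ determined by the support projection of $\rho$, for any channel) is of independent use elsewhere in the appendix. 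The two small points you flagged—deducing $\Phi^*(Q^\perp)\rho = \zero$ from the vanishing trace of a positive operator, and $\tr[\Phi^*(A)] = \tr[A\,\Phi(\one)] = \tr[A]$—are handled correctly, so no gap remains.
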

\begin{proof}
Consider a state $\rho$, whose minimal support projection is $P$. Since $\rho$ and $ P/\tr[P]$ are both full-rank states in $P \h$,  then by \lemref{lemma:full-rank-state-inequality} there exists $\lambda_1, \lambda_2 >0$ such that 
$\lambda_1 P/\tr[P] \leqslant \rho \leqslant \lambda_2 P/\tr[P]$. Now consider an arbitrary channel $\Phi$, and define the state $\sigma:= \Phi(P/\tr[P])$. By complete positivity, it follows that $\lambda_1 \sigma \leqslant \Phi(\rho) \leqslant \lambda_2 \sigma$. 
Now let us note that $A\geqslant B\geqslant \zero$ implies that $\rank{A} \geqslant \rank{B}$. This follows from the fact that for any $A\geqslant \zero $, the condition 
$|\psi\rangle \in \ker( A)$  is equivalent to $\langle \psi |A |\psi\rangle =0$. We thus obtain 
 $\rank{ \sigma} \leqslant  \rank{\Phi(\rho)} \leqslant \rank{\sigma}$, which implies that 
\begin{align}\label{eq:rank-condition-support-projection}
    \rank{\Phi(\rho)} = \rank{\sigma}.
\end{align} 
Assume that $\Phi$ is bistochastic, so that it preserves the identity. In such a case, if  $\rho$ is full-rank so that $P=\one$ and $\sigma=\Phi(\one/\dim(\h)) = \one / \dim(\h)$ hold, we find by \eq{eq:rank-condition-support-projection} that  $\rank{\Phi(\rho)}=\rank{\rho}$. Now let us  suppose that $\rho$ is not full-rank. By decomposing the identity as 
\begin{eqnarray*}
\one = \tr[P] \frac{P}{\tr[P]}
+ \tr[P^{\perp}] \frac{P^{\perp}}{\tr[P^{\perp}]}, 
\end{eqnarray*}
we  obtain for any bistochastic $\Phi$ the following:  
\begin{eqnarray*}
\one = \Phi(\one) = \tr[P] \sigma +\tr[P^{\perp}] \mu, 
\end{eqnarray*}
where we define $\mu:= \Phi(P^{\perp}/\tr[P^{\perp}])$, which implies that
\begin{eqnarray*}
\tr[P^{\perp}]\mu = \one - \tr[P] \sigma.
\end{eqnarray*}
Since $\mu$ is non-negative, it must hold that $1 \geqslant \tr[P] \| \sigma\| = \rank{\rho} \| \sigma\|$, where the equality follows from the fact that $\rank{\rho} = \tr[P]$. This gives $\rank{\rho} \leqslant 1/ \| \sigma\|$. But it holds that $1 = \tr[\sigma] \leqslant \| \sigma\| \rank{\sigma} = \|\sigma\| \rank{\Phi(\rho)}$, where the equality follows from \eq{eq:rank-condition-support-projection}. This  gives $\rank{\Phi(\rho)} \geqslant 1/ \| \sigma\|$.  We finally arrive at
\begin{align*}
    \rank{\Phi(\rho)} \geqslant \frac{1}{\|\sigma\|} \geqslant \rank{\rho},
\end{align*}
which proves our claim.

\end{proof}

Let us note that while any rank non-decreasing channel $\Phi$ acting in $\h$  is constrained by the third law,  a channel acting in $\h$ that is constrained by the third law may decrease the rank for some input states. To see this, consider $\h \simeq \co^3$, with orthonormal basis $\{\ket{i} : i =0,1,2\}$. Consider the projections   $\pr{i}=|i\>\< i|$, and a full-rank state  $\rho_0$  on $\h$. Now define the channel $\Phi(\rho) = \tr[\rho \pr{0}] \rho_0 + \tr[\rho \pr{0}^\perp] \pr{1}$. This channel clearly maps all full-rank states $\rho$ to a full-rank state, and so is constrained by the third law. However, a rank-2 state $\pr{0}^{\perp}/2$ is mapped to a rank-1 state $\pr{1}$.

\section{Operational justification for  Definition 1}\label{app:third-law-channel-proof}

Consider a system $\h$ with a Hamiltonian that has the spectral decomposition $H = \sum_{i=0}^N \epsilon_i P_i$, with the (distinct) energy eigenvalues arranged in increasing order, i.e., $\epsilon_i < \epsilon_{i+1}$. The  state of such a system at inverse temperature $\beta = 1/T$ is the Gibbs state $\tau_\beta := e^{- \beta H}/\tr[e^{-\beta H}]$. In such a case, $\rank{\tau_\beta} = \dim(\h)$. When cooled to absolute zero temperature, i.e., $\beta=\infty$, the state reads $\tau_\infty = P_0/ \tr[P_0]$, and it holds that $\rank{\tau_\infty} < \dim(\h)$. Indeed, any state whose support is in $P_0 \h$ can be considered as having zero temperature. Therefore, the third law can be seen as prohibiting any channel $\Phi$ such that $\Phi(\tau_\beta) = \tau_\infty$. \defref{defn:third-law} is a generalisation of such an intuitive idea, and below, we shall show that the existence of a channel which is unconstrained by the third law as defined by \defref{defn:third-law}, i.e., a channel that may map a full-rank state on some system to a non-full-rank state of a possibly different system, is both necessary and sufficient for the preparation of some system in a pure state.

\begin{prop}
Let us consider a system $\s$ described by $\hs$ with $2 \leqslant \dim( \hs) < \infty$, and 
suppose that the system is given with an unknown input state $\rho$. 
Assume that we can employ arbitrary channels constrained by 
the third law as per \defref{defn:third-law}.  The following statements are equivalent:  
\begin{enumerate}[(i)]
    \item
    We may implement some known channel that is unconstrained by the third law. 
    
    \item
    We may prepare the system $\s$ in an arbitrary desired state (in particular, an arbitrary 
    pure state), from any given unknown prior state $\rho$. 
\end{enumerate}
\end{prop}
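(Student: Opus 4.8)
The plan is to prove the equivalence by two implications, with (ii) $\implies$ (i) being immediate and (i) $\implies$ (ii) being the substantive half, which I would handle by flanking the given unconstrained channel with two auxiliary channels that are themselves constrained by the third law.

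For (ii) $\implies$ (i) I would argue as follows: if from an arbitrary unknown prior we can prepare $\s$ in any desired state, then in particular we can implement a known channel $\mathcal{P} : \lo(\hs) \to \lo(\hs)$ that outputs some fixed pure state $\ket{\psi}\bra{\psi}$ on every input. Applying $\mathcal{P}$ to the full-rank complete mixture $\onesys/\dim(\hs)$ produces $\ket{\psi}\bra{\psi}$, which has rank $1 < \dim(\hs)$ since $\dim(\hs) \geqslant 2$; hence $\mathcal{P}$ violates \defref{defn:third-law}, which is exactly statement (i).

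For (i) $\implies$ (ii) I would fix the known unconstrained channel $\Phi : \lo(\h) \to \lo(\kk)$ together with a full-rank state $\rho^\star$ on $\h$ for which $\sigma^\star := \Phi(\rho^\star)$ is not full-rank on $\kk$, and let $P \neq \one$ be the support projection of $\sigma^\star$ (so $\one - P \neq \zero$ and $\tr[P\sigma^\star]=1$). The desired preparation channel acting in $\hs$ would be built as the composition $\Psi \circ \Phi \circ \mathcal{R}$. Here $\mathcal{R}:\lo(\hs)\to\lo(\h)$, $\mathcal{R}(\tau)=\tr[\tau]\,\rho^\star$, is the ``reset'' channel, which is constrained by the third law because it outputs the full-rank state $\rho^\star$ on every input (equivalently, it is a composition channel followed by a partial trace, both admissible by \corref{corollary:third-law-channel-examples}); and $\Psi:\lo(\kk)\to\lo(\hs)$, $\Psi(\omega)=\tr[P\omega]\,\omega_0+\tr[(\one-P)\omega]\,\eta$ with $\omega_0$ the target state and $\eta$ a fixed full-rank state on $\hs$, is a measure-and-prepare (hence CPTP) channel which I would check is constrained by the third law: for full-rank $\omega$ the weight $\tr[(\one-P)\omega]$ is strictly positive because $\one-P\neq\zero$, so $\Psi(\omega)\geqslant \tr[(\one-P)\omega]\,\eta$ is full-rank. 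Since $\mathcal{R}$ is input-independent, $\Psi\circ\Phi\circ\mathcal{R}$ sends every state of $\s$ to $\Psi(\sigma^\star)=\tr[P\sigma^\star]\,\omega_0=\omega_0$, giving a genuine preparation of the arbitrary state $\omega_0$ (pure or otherwise) from any unknown $\rho$, realised using only third-law-constrained channels together with a single use of $\Phi$; this is (ii).

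The hard part, such as it is, is not any calculation but keeping the admissibility bookkeeping honest: $\mathcal{R}$ and $\Psi$ must each satisfy \defref{defn:third-law} while only $\Phi$ is permitted to destroy full-rankness, so that the overall composition is unconstrained even though two of its three factors individually are not. The point I would make explicit is the structural reason this is consistent --- $\mathcal{R}$ can only ever emit the full-rank state $\rho^\star$, and $\Psi$ always blends in the full-rank state $\eta$ with strictly positive weight on full-rank inputs, so it is precisely the one application of $\Phi$ to the fixed full-rank input $\rho^\star$ that manufactures the non-full-rank state $\sigma^\star$ from which $\Psi$ can then reach any target. Complete positivity and trace-preservation of $\mathcal{R}$ and $\Psi$, and the existence of $\rho^\star$ and of $P\neq\one$, are all routine.
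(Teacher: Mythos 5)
Your proof is correct, and the substantive direction (i) $\implies$ (ii) follows a genuinely different route from the paper's. The paper also begins with a reset channel onto a known full-rank state, but it then prepares $D$ copies of the low-rank output $\xi$ of the unconstrained channel, chooses $D$ large enough that $\rank{\xi}^D N \leqslant M^D$, and applies a permutation unitary on $\hs \otimes \rr^{\otimes D}$ to concentrate the support so that the reduced state on $\hs$ becomes $|0\>\<0|$, finishing with a correction channel $\Lambda(\cdot) = \<0|\cdot|0\>\sigma + \tr[(\one - |0\>\<0|)\cdot]\,\one/N$. Your construction short-circuits this: a single use of $\Phi$ on the known witness $\rho^\star$, followed by the measure-and-prepare map $\Psi$ keyed to the support projection $P$ of $\sigma^\star = \Phi(\rho^\star)$, which you correctly verify is constrained by \defref{defn:third-law} because $\one - P \neq \zero$ forces a strictly positive admixture of the full-rank state $\eta$ on every full-rank input. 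Note that your $\Psi$ is of exactly the same type as the paper's own final channel $\Lambda$, so your admissibility bookkeeping is consistent with the paper's; what you save is the tensor-power construction, the rank-counting inequality, and the permutation unitary, at the cost of the paper's more ``physical'' picture in which the unconstrained channel merely supplies low-rank ancillas that are then used to purify $\hs$ by a rank non-decreasing (unitary) interaction. For (ii) $\implies$ (i) your direct argument (the preparation map sends the full-rank complete mixture to a rank-one state) is the contrapositive of the paper's argument via closure of constrained channels under composition (\lemref{lemma:third-law-composition}); it implicitly assumes the overall preparation procedure is itself a channel, which is the operational framework in force, so there is no gap.
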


\begin{proof}
(ii) $\implies$ (i): Assume that we can only implement channels that are constrained by the third law.  By \lemref{lemma:third-law-composition},  the class of third-law constrained channels is closed under composition, and so a full-rank input state $\rho$ can only be transformed to another full-rank state. Therefore, the ability to prepare an arbitrary state, such as a  pure state, from any input $\rho$, including one that is full-rank, requires a channel that is unconstrained by the third law. 
\\
(i) $\Rightarrow$ (ii): Consider a system $\hs = \co^N$, with the orthonormal basis $\{\ket{0}, \dots, \ket{N-1}\}$, initially prepared in some unknown state $\rho$.  Consider a channel $\Phi$ acting in $\hs$, defined by $\Phi(\cdot) = \tr[\cdot  \one ]\rho_0$,  where $\rho_0=  \sum_{n=0}^{N-1} \lambda_n |n\rangle \langle n| $ is a known full-rank state. Such a channel maps full-rank states to full-rank states, and so is constrained by the third law. We may therefore  use this channel to prepare $\hs$ in the state  $\rho_0$, independently of the input $\rho$.

Now let us assume that we may implement a known channel  $\ee : \lo(\kk) \to \lo(\rr)$ which is unconstrained by the third law, i.e., assume that there exists a full-rank state $\varrho$  on $\kk$ which is mapped to  state  $\xi$ on $\rr$ that is not full-rank. But preparation channels that prepare a full-rank state are constrained by the third law. As such, we may prepare $\kk$ in such a state $\varrho$, which is known, so that by applying the channel $\ee$, we may  prepare $\rr$ in the state $\xi$, which is also known. 
Let us write such a state as $\xi = \sum_{m=0}^{M -1} p_m |m\rangle \langle m|$, where  $\{\ket{0}, \dots, \ket{M-1}\}$ is an  orthonormal basis that spans $\rr = \co^M$, such that $p_m =0$ for at least one $m$. 

By preparing $D$ copies of $\xi$, we may  write the  state on the total system $\hs\otimes \rr^{\otimes D}$, i.e.,  $\rho_0 \otimes \xi^{\otimes D} = \rho \otimes \xi \otimes \dots \otimes \xi$,  as
\begin{align*}
\rho_0 \otimes \xi^{\otimes D} 
& = \sum_{n=0}^{N-1} \sum_{m_1=0}^{M -1} \cdots \sum_{m_D=0}^{M -1} C_{n,m_1,\ldots, m_D} \\
& \qquad  |n, m_1, \cdots, m_D\rangle \langle n, m_1, \cdots, m_D|, 
\end{align*}
where we define $C_{n,m_1,\ldots, m_D}:=\lambda_n p_{m_1}\cdots p_{m_D}$. Now choose a unitary channel acting in the total system, with a unitary operator $U$ 
which permutes the basis $\{|n, m_1,  \ldots, m_D\rangle \}$. We obtain 
\begin{align*}
U(\rho_0 \otimes \xi^{\otimes D})U^* 
&= \sum_{n=0}^{N-1} \sum_{m_1=0}^{M -1} \cdots \sum_{m_D=0}^{M -1} C_{ \pi(n,m_1,\ldots, m_D)} \\
& \qquad |n, m_1, \cdots, m_D\rangle \langle n, m_1, \cdots, m_D|.  
\end{align*}
By choosing $D$ to be sufficiently large so that  $\rank{\xi}^D N \leqslant M^D$ is satisfied, then  the permutation can be chosen so that $C_{ \pi(n,m_1,\ldots, m_D)} =0$ for $n\neq 0$, and so the above state will read $U(\rho_0 \otimes \xi^{\otimes D})U^* = |0\>\<0| \otimes \omega$.
The restriction of the final state $|0\>\<0| \otimes \omega$ on $\hs$ is thus a pure state $|0\rangle \langle 0|$. To obtain an arbitrary target state $\sigma$, we apply a channel $\Lambda (\cdot) = \langle 0 | \cdot |0\rangle \sigma + \tr[ (\one - |0\rangle \langle 0|) \cdot ]\frac{\one}{N}$, which maps full-rank states to full-rank states, and so is constrained by the third law. 
\end{proof}

\section{Measurements constrained by the third law}\label{app:third-law-measurement}

Consider again the composition channel $\Lambda_\xi:  \lo(\hs) \mapsto \lo(\hs\otimes \ha ),  \rho \mapsto \rho \otimes \xi$. We now define the  restriction map $\Gamma_{\xi}: \lo(\hs\otimes \ha ) \to \lo(\hs)$ as dual to the  composition channel, i.e., $\Gamma_\xi = \Lambda_\xi^*$. The restriction map therefore satisfies $\tr[\Gamma_{\xi} (B) \rho] = \tr[B (\rho \otimes \xi)]$ for all $B \in \lo(\hs\otimes \ha)$ and states $\rho$ on $\hs$.  Using such a map, and a channel $\ee$ acting in $\hs\otimes \ha$, we  define the channel $\Gamma_\xi^\ee: \lo(\hs\otimes \ha) \to \lo(\hs)$ as
\begin{align}\label{eq:gamma-u}
\Gamma_\xi^\ee(\cdot) &:= \Gamma_\xi\circ \ee^*(\cdot).
\end{align}
We may thus write the dual of the operations in \eq{eq:instrument-dilation} implemented by the measurement scheme $\mm:=(\ha, \xi, \ee, \Z)$ as
\begin{align*}
    \ii^*_x (\cdot) &=  \Gamma_\xi^\ee(\cdot \otimes \Z_x),
\end{align*}
to hold for all $x\in \xx$. As such, we may write the channel implemented by $\mm$ as $\ii_\xx^*(\cdot) = \Gamma_\xi^\ee(\cdot \otimes\oneapp)$. Moreover, recall that an instrument is compatible with $\E$ if $\tr[\ii_x(\rho)] = \tr[\E_x \rho]$ for all outcomes $x$ and for all states $\rho$, which may equivalently be written as $\E_x =  \ii^*_x(\onesys)$ for all $x$. We may therefore write the effects of the observable implemented by the measurement scheme $\mm$ as
\begin{align*}
    \E_x = \Gamma_\xi^\ee(\onesys \otimes \Z_x).
\end{align*}

\begin{lemma}\label{lemma:measurement-third-law}
Let $\mm:= (\ha, \xi, \ee, \Z)$ be a measurement scheme for an $\E$-compatible instrument $\ii$ acting in $\hs$. Assume that $\mm$ is constrained by the third law. The following hold: 

\begin{enumerate}[(i)]
\item For every $A \in \lo(\hs \otimes \ha)$, it holds that  $\Gamma_\xi^\ee(A^* A) = \zero \iff A = \zero$. 

\item There exists at least one full-rank state $\rho_0$ on $\hs$ such that $\ii_\xx(\rho_0) = \rho_0$. 

\item For every $A \in \lo(\hs)$ and $x$, it holds that $\ii_x^*(A^*A) = \zero \iff A = \zero$.
    
\item For every $x$, let  $P_x$ be the minimal projection on the support  of $\E_x$. For every  state $\rho$ such that $P_x \rho P_x$  has full rank in $P_x \hs$, $\ii_x(\rho)$ has  full rank in $\hs$. 

\end{enumerate}  
  
\end{lemma}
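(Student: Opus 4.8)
The plan is to establish the four items in order, with each building on its predecessor. For item (i), I would start from \defref{defn:third-law-measurement}: the apparatus preparation $\xi$ is full-rank, and $\ee$ maps full-rank states on $\hs\otimes\ha$ to full-rank states. The composition channel $\Lambda_\xi$ is constrained by the third law by Corollary \ref{corollary:third-law-channel-examples}(v) since $\xi$ is full-rank, and $\ee$ is constrained by the third law by assumption; hence $\ee \circ \Lambda_\xi$ is constrained by the third law by \lemref{lemma:third-law-composition}. Its dual is exactly $\Gamma_\xi^\ee = \Gamma_\xi \circ \ee^* = (\ee \circ \Lambda_\xi)^*$, so applying the equivalence (i)$\iff$(iii) of \lemref{lemma:third-law-channel-condition} to the channel $\ee\circ\Lambda_\xi$ gives that $\Gamma_\xi^\ee(A^*A)=\zero \iff A=\zero$ for all $A\in\lo(\hs\otimes\ha)$.

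For item (ii), the channel $\ii_\xx$ implemented by $\mm$ satisfies $\ii_\xx = \tra \circ \ee \circ \Lambda_\xi$, which is a composition of channels each constrained by the third law (using Corollary \ref{corollary:third-law-channel-examples}(iv) for the partial trace); hence $\ii_\xx$ is constrained by the third law and acts in $\hs$, so by the equivalence (i)$\iff$(iv) of \lemref{lemma:third-law-channel-condition} there exists a full-rank fixed state $\rho_0$ with $\ii_\xx(\rho_0)=\rho_0$. For item (iii), I would use $\ii_x^*(\cdot) = \Gamma_\xi^\ee(\cdot\otimes\Z_x)$, so that $\ii_x^*(A^*A) = \Gamma_\xi^\ee(A^*A\otimes\Z_x)$. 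Since $\Z_x \neq \zero$ is a nonzero effect, $\sqrt{\Z_x}$ is nonzero, and $A^*A\otimes\Z_x = (A\otimes\sqrt{\Z_x})^*(A\otimes\sqrt{\Z_x})$; applying item (i) to $B = A\otimes\sqrt{\Z_x}$ gives $\ii_x^*(A^*A)=\zero \iff A\otimes\sqrt{\Z_x}=\zero \iff A=\zero$, where the last step uses $\Z_x\neq\zero$.

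For item (iv), the argument is the most delicate step and where I expect the main obstacle. Fix $x$ and let $P_x$ be the support projection of $\E_x = \ii_x^*(\onesys)$; suppose $\rho$ is a state with $P_x\rho P_x$ full-rank in $P_x\hs$. I want to show $\ii_x(\rho)$ is full-rank in $\hs$, equivalently (dualising) that $\tr[\ii_x(\rho)\, A^*A] = \tr[\rho\, \ii_x^*(A^*A)] > 0$ for every nonzero $A\in\lo(\hs)$. By item (iii), $\ii_x^*(A^*A) \neq \zero$, so it is a nonzero positive operator; the issue is to show it has nontrivial overlap with $\rho$. The natural route is to show that the support of $\ii_x^*(A^*A)$ is contained in $P_x\hs$ — indeed $\ii_x^*$ is a positive map with $\ii_x^*(\onesys)=\E_x \leqslant \|\E_x\| P_x$, and complete positivity forces $\ii_x^*(A^*A) \leqslant \|A\|^2 \ii_x^*(\onesys) \leqslant \|A\|^2\|\E_x\| P_x$, so the support of $\ii_x^*(A^*A)$ lies in $P_x\hs$. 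Then $\tr[\rho\,\ii_x^*(A^*A)] = \tr[P_x\rho P_x\, \ii_x^*(A^*A)]$, and since $P_x\rho P_x$ is full-rank in $P_x\hs$ while $\zero \neq \ii_x^*(A^*A)$ is supported in $P_x\hs$, this trace is strictly positive (using \lemref{lemma:full-rank-state-inequality} restricted to $P_x\hs$ to bound $P_x\rho P_x$ below by a multiple of the projection onto the support of $\ii_x^*(A^*A)$). Since $A$ was an arbitrary nonzero operator, $\ii_x(\rho)$ is full-rank in $\hs$.
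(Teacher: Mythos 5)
Your proposal is correct and follows essentially the same route as the paper: items (i)--(iii) via \lemref{lemma:third-law-composition}, \corref{corollary:third-law-channel-examples} and the equivalences of \lemref{lemma:third-law-channel-condition}, and item (iv) by showing that $\ii_x^*(A^*A)$ is a nonzero positive operator supported in $P_x\hs$ and pairing it with the full-rank $P_x\rho P_x$. The only cosmetic differences are that the paper gets the support containment from the decomposition $\ii_x^*(\cdot)=\sqrt{\E_x}\,\Phi_x^*(\cdot)\sqrt{\E_x}$ rather than your inequality $\ii_x^*(A^*A)\leqslant \|A\|^2\|\E_x\| P_x$, and that your assertion $\Z_x\neq\zero$ in (iii) should be justified, as the paper does, by $0<\|\E_x\|=\|\Gamma_\xi^\ee(\onesys\otimes\Z_x)\|\leqslant\|\Z_x\|$ together with the standing assumption $\E_x\neq\zero$.
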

\begin{proof}

\begin{enumerate}[(i)]
\item By  \defref{defn:third-law-measurement},  $\xi$ is full-rank and $\ee$ maps full-rank states to full-rank states. Since $\Gamma_\xi$ is dual to the composition channel, then by  \corref{corollary:third-law-channel-examples}   $\Gamma_\xi$ is constrained by the third law. By \lemref{lemma:third-law-composition}, it follows that $\Gamma_\xi^\ee = \Gamma_\xi \circ \ee^*$ is constrained by the third law. The statement follows from  \lemref{lemma:third-law-channel-condition}.

    \item Since $\ii_\xx^*(\cdot) = \Gamma_\xi^\ee(\cdot \otimes \oneapp)$, then by the above it follows that $\ii_\xx$ satisfies property (iii), and hence (iv), of \lemref{lemma:third-law-channel-condition}.

    \item Given that $0< \|\E_x\| = \| \Gamma_\xi^\ee(\onesys \otimes \Z_x)\|\leqslant \| \Z_x\|$, it holds that $\Z_x > \zero$ for all $x$.  Now note that $\ii_x^*(A^*A) = \Gamma_\xi^\ee(A^*A \otimes \Z_x)$. By item (i), it follows that $\ii_x^*(A^*A) = \zero$ if and only if $A^*A \otimes \Z_x = \zero$, which holds if and only if $A = \zero$. 
    
    \item We may always write $\ii_x^*(\cdot) = \sqrt{\E_x} \Phi_x^*(\cdot) \sqrt{\E_x}$ for some channel $\Phi_x$ acting in $\hs$. It follows that $\ii_x^*(\cdot) = P_x\ii_x^*(\cdot)P_x$. 
   By (iii), for any $\zero \ne A \in \lo(\hs)$, it holds that $\lo(P_x \hs) \ni \ii_x^*(A^* A) = P_x \ii_x^*(A^* A) P_x > \zero$. But for   any $\rho$ for which $P_x \rho P_x$ has full-rank in $P_x \hs$, it follows that $\tr[\ii_x^*(A^* A) \rho] >0$.  By writing  $\tr[A^* A \ii_x(\rho)] = \tr[\ii_x^*(A^* A) \rho]$, it follows that $\tr[A^* A \ii_x(\rho)] =0 \iff A = \zero$,  and so $\ii_x(\rho)$ must be full-rank in $\hs$.

\end{enumerate}

\end{proof}  

Condition (iv) of the above lemma shows that the third law restricts the possible conditional state transformations by measurements. That is, it is impossible to prepare a system in a state of low rank, given an arbitrary input state, by  measurement and selection of an outcome. Moreover, the class of input states that may be prepared in a state of low rank diminishes as the rank of the observable's effects decrease. Indeed, if an effect $\E_x$ has rank 1, so that it may be written as $\E_x = \lambda |\psi\>\<\psi|$ for some unit-vector $\ket{\psi}$ and $\lambda \in (0,1]$, then for any state $\rho$ such that $\<\psi| \rho | \psi\> >0$,  $\ii_x(\rho)$ will be full-rank.   In particular, we obtain the following result:

\begin{prop}\label{prop:luders-completely-unsharp}
Let $\ii^L$ be a L\"uders instrument compatible with a non-trivial observable $\E$. $\ii^L$ admits a measurement scheme $\mm:=(\ha, \xi, \ee, \Z)$, which is constrained by the third law, if and only if $\E$ is completely unsharp.
\end{prop}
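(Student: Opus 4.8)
The plan is to prove the two implications separately, using \lemref{lemma:measurement-third-law} for the ``only if'' direction and an explicit (non-unitary) dilation for the ``if'' direction.

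For necessity, suppose $\ii^L$ is implemented by a scheme $\mm = (\ha,\xi,\ee,\Z)$ constrained by the third law. The first step is to show that every effect $\E_x$ must be full-rank in $\hs$, i.e.\ $0 \notin \mathrm{spec}(\E_x)$. If some $\E_x$ had a nontrivial kernel, let $P_x$ be its (proper) support projection and take $\rho = P_x/\tr[P_x]$, so $P_x\rho P_x = \rho$ has full rank in $P_x\hs$; then \lemref{lemma:measurement-third-law}(iv) would force $\ii^L_x(\rho) = \sqrt{\E_x}\,\rho\,\sqrt{\E_x}$ to be full-rank in $\hs$, contradicting the fact that its range lies in $\mathrm{Im}(\E_x) = P_x\hs \subsetneq \hs$. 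The second step rules out eigenvalue $1$. Here the point is that eigenvalue $1$ of one effect forces eigenvalue $0$ of another via the normalisation $\sum_x \E_x = \onesys$: if $\E_{x_0}\ket\psi = \ket\psi$ for a unit vector $\ket\psi$, then $\sum_x \langle\psi|\E_x|\psi\rangle = 1$ with each summand non-negative and the $x_0$ term equal to $1$, so $\langle\psi|\E_x|\psi\rangle = 0$, hence $\E_x\ket\psi = 0$, for all $x \neq x_0$. Since $\E$ is non-trivial it has at least two outcomes, so some $\E_{x_1}$ with $x_1 \neq x_0$ annihilates $\ket\psi$ and is therefore not full-rank, contradicting the first step. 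Hence $1 \notin \mathrm{spec}(\E_x)$ as well, and $\E$ is completely unsharp.

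For sufficiency, assume $\E$ is completely unsharp, so every $\E_x$ is full-rank (as observed after \defref{defn:complete-unsharp}). I would exhibit the scheme with $\ha := \co^{|\xx|}$ carrying an orthonormal basis $\{\ket x : x \in \xx\}$, the sharp pointer $\Z_x := \ket x\bra x$, the full-rank apparatus state $\xi := \oneapp/|\xx|$, and the channel $\ee$ on $\hs\otimes\ha$ with Kraus operators $K_{x,y} := \sqrt{\E_x}\otimes\ket x\bra y$. One checks $\sum_{x,y} K_{x,y}^* K_{x,y} = \sum_x \E_x \otimes \oneapp = \one$, so $\ee$ is a channel, and a short computation gives $\ee(\sigma) = \sum_x \sqrt{\E_x}\,\tra[\sigma]\,\sqrt{\E_x} \otimes \ket x\bra x$; specialising to $\sigma = \rho\otimes\xi$ (so $\tra[\sigma] = \rho$) and tracing against $\onesys\otimes\Z_x$ recovers $\ii^L_x(\rho) = \sqrt{\E_x}\rho\sqrt{\E_x}$, so $\mm$ implements $\ii^L$. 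Finally $\ee$ is constrained by the third law: if $\sigma$ is full-rank on $\hs\otimes\ha$ then $\tra[\sigma]$ is full-rank on $\hs$, each block $\sqrt{\E_x}\,\tra[\sigma]\,\sqrt{\E_x}$ is full-rank on $\hs$ since $\sqrt{\E_x}$ is invertible, and as the blocks are indexed by all of $\xx$ the block-diagonal operator $\ee(\sigma)$ is full-rank on $\hs\otimes\ha$; with $\xi$ full-rank this verifies \defref{defn:third-law-measurement}. Note $\ee$ is visibly non-unitary, consistent with the remarks in the main text.

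I expect the exclusion of eigenvalue $1$ in the necessity direction to be the conceptual crux: the full-rank-preservation machinery of \lemref{lemma:measurement-third-law} rules out eigenvalue $0$ of each effect directly, but only reaches eigenvalue $1$ indirectly, by coupling the effects together through $\sum_x\E_x = \onesys$ and the hypothesis that $\E$ is non-trivial. The sufficiency direction is mostly bookkeeping once the right dilation is chosen; the only mild subtlety there is to notice that the proposed $\ee$ depends on its input only through the system marginal $\tra[\sigma]$, which is precisely what makes it both reproduce the L\"uders operations and preserve full rank.
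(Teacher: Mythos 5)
Your proof is correct, and the necessity direction is essentially the paper's own argument: both rest on \lemref{lemma:measurement-third-law}(iv) to force every $\E_x$ to be full-rank (the paper feeds in the complete mixture and notes that $\ii^L_x(\onesys/\dim(\hs)) \propto \E_x$ must then be full-rank, whereas you feed in the normalised support projection $P_x/\tr[P_x]$ and derive a contradiction---same content), and both then exclude eigenvalue $1$ through normalisation $\sum_x \E_x = \onesys$ combined with non-triviality. The only genuine difference is the explicit dilation in the sufficiency direction: the paper uses the $|\xx|$ Kraus operators $K_x=\sum_a\sqrt{\E_{x\oplus a}}\otimes|x\oplus a\rangle\langle a|$ (a modular-shift construction, shown to be third-law constrained via \lemref{lemma:third-law-channel-condition}(ii) applied to the complete mixture), while your channel, with Kraus operators $\sqrt{\E_x}\otimes|x\rangle\langle y|$, is a discard-and-reprepare map that depends on its input only through the system marginal $\tra[\sigma]$. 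Your choice makes the full-rank-preservation check entirely transparent (each block $\sqrt{\E_x}\,\tra[\sigma]\,\sqrt{\E_x}$ is manifestly full-rank because $\sqrt{\E_x}$ is invertible), at the cost of a less ``interaction-like'' channel with $|\xx|^2$ Kraus operators; both constructions are non-unitary, use a full-rank apparatus state and sharp pointer, and reproduce $\ii^L_x(\rho)=\sqrt{\E_x}\rho\sqrt{\E_x}$, so either one establishes the claim.
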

\begin{proof}
First, let us show the only if statement. By item (iv) of \lemref{lemma:measurement-third-law}, $\ii^L_x(\cdot) = \sqrt{\E_x} \cdot \sqrt{\E_x}$ must map full-rank states
to full-rank states. Consider the complete mixture $\sigma = \onesys /\dim(\hs)$, which is full-rank. $\ii_x^L(\sigma) = \E_x/\dim(\hs)$ is full-rank if and only if $\E_x$ is full-rank.   Since $\E$ is non-trivial, and there must be at least two outcomes $x$ for which $\zero < \E_x < \onesys$,  then  the spectrum of all effects $\E_x$ must be contained in $(0,1)$. $\E$ is thus completely unsharp.

Now we shall show the if statement. Let us consider a completely unsharp observable $\E=\{\E_x\}_{x=1}^N$ acting in  
$\hs$. Let the apparatus Hilbert space be $\ha=\co^N$, with an orthonormal basis $\{|y\rangle \}_{y=0}^{N-1}$, and choose the apparatus preparation as the complete mixture $\xi=\oneapp/N$.  
Choose an interaction channel $\ee(\cdot) = \sum_x K_x \cdot K_x^*$, with Kraus operators 
\begin{eqnarray}\label{ChanLuders}
K_x := \sum_a \sqrt{\E_{x\oplus a}}\otimes |x\oplus a \rangle \langle a|, 
\end{eqnarray}
where $\oplus$ represents  summation modulo $N$.  Then  for an arbitrary state $\rho$ on $\hs$, we obtain 
\begin{eqnarray*}
\ee(\rho \otimes \xi)
&=&\sum_x \sum_a \sum_b 
\\
&&
\sqrt{\E_{x\oplus a}}\rho \sqrt{\E_{x\oplus b}}\otimes 
\frac{1}{N} \sum_c |x\oplus a\rangle \langle x\oplus b | \delta_{ac}\delta_{bc}
\\
&=& \sum_x \sqrt{\E_x}\rho \sqrt{\E_x} \otimes |x\rangle \langle x|.
\end{eqnarray*}
To show that  $\ee$ is constrained by the third law, let us note that for $\rho = \onesys/\dim (\hs)$, we have 
\begin{eqnarray*}
\ee(\rho \otimes \xi) = \frac{1}{\dim (\hs)}\sum_x \E_x \otimes |x\rangle \langle x|. 
\end{eqnarray*}
To show that this state is full rank, we observe that for an arbitrary 
vector $|\varphi \rangle = \sum_x |\varphi_x \rangle \otimes |x\rangle$, it holds that
\begin{eqnarray*}
\langle \varphi | \ee(\rho \otimes \xi)|\varphi\rangle 
= \frac{1}{\dim (\hs)} \sum_x \langle \varphi_x |\E_x|\varphi_x\rangle,  
\end{eqnarray*}
which is non-vanishing if every $\E_x$ is full-rank. By item (ii) of \lemref{lemma:third-law-channel-condition}, $\ee$ is constrained by the third law. Finally, choosing the pointer observable as $\Z_x=|x\rangle \langle x|$, then by \eq{eq:instrument-dilation} we obtain
\begin{eqnarray*}
\tr\sub{\aa}[(\onesys \otimes \Z_x) \ee(\rho\otimes \xi)] 
= \sqrt{\E_x} \rho \sqrt{\E_x} =: \ii^L_x(\rho), 
\end{eqnarray*}
and so the above measurement scheme implements   the L\"uders instrument. 
\end{proof}

We obtain the following as an immediate consequence of the above:

\begin{corollary}
Let $\E$ be a completely unsharp observable. Define $\mathscr{I}$ as the set of all $\E$-compatible instruments $\ii$ such that, for every full-rank state $\rho$  and for every outcome $x$, $\ii_x(\rho)$ is full-rank. It holds that every $\ii \in \mathscr{I}$ admits a measurement scheme $\mm = (\ha, \xi, \ee, \Z)$ that is constrained by the third law. 

\end{corollary}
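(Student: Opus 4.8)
\emph{Approach.} The plan is to leverage \propref{prop:luders-completely-unsharp}, which already gives a third-law-constrained scheme for the L\"uders instrument $\ii^L$ of a completely unsharp $\E$, and to ``dress'' that scheme with a pointer-controlled post-processing so that it realises an arbitrary $\ii \in \mathscr{I}$. The apparatus, its state, and the pointer observable will not change; only the interaction channel will be modified by composing it with a controlled channel whose conditional branches are exactly the post-processing channels that relate $\ii$ to $\ii^L$. The crucial observation will be that membership of $\ii$ in $\mathscr{I}$ is precisely what guarantees that this controlled channel is itself constrained by the third law.

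\emph{Step 1 (the conditional post-processing channels).} Since $\E$ is completely unsharp, every effect $\E_x$ is full-rank, so $\sqrt{\E_x}$ is invertible; hence $\ii^L_x(\cdot)=\sqrt{\E_x}\cdot\sqrt{\E_x}$ [cf.\ \eq{eq:Luders}] is an invertible linear map, and for any $\E$-compatible $\ii$ one may define $\Phi_x(\cdot):=\ii_x\big(\sqrt{\E_x}^{-1}\cdot\sqrt{\E_x}^{-1}\big)$, which satisfies $\Phi_x\circ\ii^L_x=\ii_x$ by construction. I would check that each $\Phi_x$ is a channel acting in $\hs$: it is CP as a composition of CP maps, and trace-preserving because $\tr[\ii_x(B)]=\tr[\E_x B]$ together with $\sqrt{\E_x}^{-1}\E_x\sqrt{\E_x}^{-1}=\onesys$ gives $\tr[\Phi_x(A)]=\tr[A]$. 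The key point to extract here is the equivalence: $\ii\in\mathscr{I}$ if and only if each $\Phi_x$ is constrained by the third law in the sense of \defref{defn:third-law}. Indeed, for a full-rank $\rho$ the operator $\sqrt{\E_x}\rho\sqrt{\E_x}$ is positive and invertible, hence proportional to a full-rank state, and $\ii_x(\rho)=\Phi_x(\sqrt{\E_x}\rho\sqrt{\E_x})$; conversely every full-rank state on $\hs$ is proportional to $\sqrt{\E_x}\rho\sqrt{\E_x}$ for some full-rank $\rho$.

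\emph{Step 2 (the scheme and that it realises $\ii$).} Let $\mm^L=(\ha,\xi,\ee,\Z)$ be the scheme built in the proof of \propref{prop:luders-completely-unsharp}, so $\ha=\co^N$, $\xi=\oneapp/N$, $\Z_x=|x\rangle\langle x|$ for the pointer basis $\{|x\rangle\}$, and $\ee(\rho\otimes\xi)=\sum_x\sqrt{\E_x}\rho\sqrt{\E_x}\otimes|x\rangle\langle x|$. Define the pointer-controlled channel $\mathcal{C}$ acting in $\hs\otimes\ha$ by
\begin{align*}
 \mathcal{C}(\sigma):=\sum_x(\Phi_x\otimes\mathrm{id})\big[(\onesys\otimes|x\rangle\langle x|)\,\sigma\,(\onesys\otimes|x\rangle\langle x|)\big],
\end{align*}
with $\mathrm{id}$ the identity channel on $\ha$; this is CP, and trace-preserving since $\sum_x\onesys\otimes|x\rangle\langle x|=\onesys\otimes\oneapp$. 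Set $\ee':=\mathcal{C}\circ\ee$ and $\mm:=(\ha,\xi,\ee',\Z)$. Because $\ee(\rho\otimes\xi)$ is already block-diagonal in the pointer basis, $\ee'(\rho\otimes\xi)=\sum_x\Phi_x(\sqrt{\E_x}\rho\sqrt{\E_x})\otimes|x\rangle\langle x|$, so by \eq{eq:instrument-dilation} the operations of $\mm$ are $\rho\mapsto\Phi_x(\ii^L_x(\rho))=\ii_x(\rho)$; thus $\mm$ implements $\ii$. To finish, I would verify that $\mm$ obeys \defref{defn:third-law-measurement}: $\xi=\oneapp/N$ is full-rank, so it remains to see that $\ee'$ preserves full rank; by \lemref{lemma:third-law-composition} this reduces to the same claim for $\ee$ (already proved) and for $\mathcal{C}$. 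For $\mathcal{C}$, applying it to the complete mixture $\onesys\otimes\oneapp/(N\dim(\hs))$ gives $\tfrac{1}{N\dim(\hs)}\sum_x\Phi_x(\onesys)\otimes|x\rangle\langle x|$, a block-diagonal operator each of whose blocks is full-rank on $\hs$ by Step 1, hence full-rank on $\hs\otimes\ha$; by the implication (ii)$\Rightarrow$(i) of \lemref{lemma:third-law-channel-condition}, $\mathcal{C}$ is constrained by the third law, and therefore so is $\ee'$.

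\emph{Main obstacle.} There is no hard computation; the one conceptually load-bearing step is the equivalence in Step 1 — recognising that the defining property of $\mathscr{I}$ is exactly the full-rank-preservation of the conditional post-processing channels $\Phi_x$, which is what survives composition and keeps the dressed interaction third-law-admissible. The remaining care is purely formal: checking that $\mathcal{C}$ is genuinely trace-preserving and that composing it with $\ee$ does not spoil the pointer-basis block structure used to identify the implemented operations.
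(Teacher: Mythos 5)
Your proposal is correct and takes essentially the same route as the paper: you decompose $\ii_x = \Phi_x \circ \ii^L_x$, observe that $\ii \in \mathscr{I}$ exactly when each $\Phi_x$ maps full-rank states to full-rank states, and then dress the L\"uders scheme of \propref{prop:luders-completely-unsharp} with a pointer-controlled channel applying $\Phi_x$ conditionally --- your $\mathcal{C}$ coincides with the paper's $\ee_2$, and the third-law check via the complete mixture and \lemref{lemma:third-law-channel-condition} is the same. The only (harmless) difference is that you construct $\Phi_x$ explicitly using invertibility of $\sqrt{\E_x}$, whereas the paper invokes the general decomposition $\ii_x = \Phi_x \circ \ii^L_x$ valid for arbitrary observables.
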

\begin{proof}
The operations of every $\E$-instrument may be written as $\ii_x(\cdot ) = \Phi_x \circ \ii^L_x(\cdot)$, where 
$\ii^L$ is the L\"uders instrument  for $\E$ and $\Phi_x$ is an arbitrary channel. It is easy to show that $\ii \in \mathscr{I}$ if and only if $\Phi_x$ is constrained by the third law. That is, $\Phi_x(\omega)$ is full-rank for every full-rank $\omega$. To see this, let $\sigma = \onesys/ \dim(\hs)$ be the complete mixture, and define $\omega := \ii^L_x(\sigma)/\tr[\ii^L_x(\sigma)] = \E_x / \tr[\E_x]$ which, given complete unsharpness of $\E$, is guaranteed to be full-rank. Therefore, $\ii_x(\sigma)$ is full-rank if and only if $\Phi_x(\omega)$ is full-rank. The claim follows from item (ii) of \lemref{lemma:third-law-channel-condition}. 

Now let us define a channel $\ee_2$ by 
\begin{eqnarray*}
\ee_2(A\otimes B)=\sum_x \Phi_x(A)\otimes |x\rangle \langle x|B|x\rangle \langle x|,
\end{eqnarray*}
where $\Phi_x$ is an arbitrary channel which is constrained by the third law. In such a case, $\ee_2$ is also constrained by the third law; it is easily verified that $\ee_2(\onesys\otimes \oneapp)$ has full rank.  Therefore, a concatenated channel 
$\ee:= \ee_2 \circ \ee_1$, where $\ee_1$ is a channel defined by the Kraus operators in \eq{ChanLuders}, is  also constrained by the  the third law. But the measurement scheme $\mm = (\ha, \xi,  \ee, \Z)$, with $\xi$ the complete mixture and  $\Z_x = |x\rangle \langle x|$, implements $\ii_x(\cdot) = \Phi_x \circ \ii^L_x(\cdot)$. 

\end{proof}

\section{Fixed-point structure of measurements constrained by the third law}\label{app:fixed-point-measurement}

We define the fixed-point sets of the $\E$-channel $\ii_\xx$ and its dual $\ii_\xx^*$ as
\begin{align*}
&\ff(\ii_\xx):= \{ A \in \lo(\hs) : \ii_\xx(A) = A\},  \\
&\ff(\ii_\xx^*):= \{ A \in \lo(\hs) : \ii_\xx^*(A) = A\}.
\end{align*}
  Now let us define the channels
\begin{align*}
 \ii_{\av}(\cdot)&:=\lim_{N\to \infty} \frac{1}{N}\sum_{n=1}^{N}(\ii_\xx)^n(\cdot), \\
    \ii^*_{\av}(\cdot)&:=\lim_{N\to \infty} \frac{1}{N}\sum_{n=1}^{N}(\ii^*_\xx)^n(\cdot).
\end{align*}
$\ii^*_{\av}$ is a CP projection on $\ff(\ii_\xx^*) = \ff(\ii_\av^*)$, i.e., it holds that $\ii_\av^* = \ii_\av^* \circ \ii_\xx^* = \ii_\xx^* \circ \ii_\av^* = \ii_\av^* \circ \ii_\av^*$.   Similarly,  $\ii_\av$ is  a CP projection on $\ff(\ii_\xx) = \ff(\ii_\av)$. Now let us assume that the measurement scheme for $\ii$ is constrained by the third law. It follows from item (ii) of \lemref{lemma:measurement-third-law} that $\ff(\ii_\xx)$ contains a full-rank state $\rho_0$. This in turn  implies that $\ff(\ii_\xx^*)$ is a  von Neumann algebra, i.e., the fixed points of $\ii_\xx^*$ satisfy multiplicative closure  \cite{Bratteli1998, Arias2002}. But since $\hs$ is finite-dimensional, then $\ff(\ii_\xx^*)$ is a finite von Neumann algebra $\mathscr{A}$, which may have an Abelian non-trivial center $\mathscr{Z} := \mathscr{A} \cap \mathscr{A}'$ generated by the set of ortho-complete projections $\{P_\alpha\}$. That is, every self-adjoint $B \in \mathscr{Z}$ can be written as $B = \sum_\alpha  \lambda_\alpha P_\alpha$.  We may therefore decompose $\mathscr{A}$ into a finite direct sum $\mathscr{A} = \oplus_\alpha \mathscr{A}_\alpha$, where each $\mathscr{A}_{\alpha} = P_\alpha \mathscr{A}$ is a 
type-I factor (a finite dimensional von Neumann algebra with a trivial center) on 
$P_{\alpha}\hs = \kk_\alpha \otimes \rr_\alpha$, 
written as $\mathscr{A}_{\alpha} = \lo(\kk_\alpha)\otimes \one\sub{\rr_\alpha}$. It follows  that  we may write 
\begin{align}\label{eq:factor-fixed-point-set}
\ff(\ii_\xx) &= \bigoplus_\alpha \lo(\kk_\alpha) \otimes \omega_\alpha, \nonumber \\
\ff(\ii_\xx^*)&= \bigoplus_{\alpha} \lo(\kk_{\alpha}) \otimes \one\sub{\rr_\alpha},  
\end{align}
and 
\begin{align}\label{eq:factor-av-channel}
\ii_\av(\cdot) &= \sum_{\alpha}
\tr\sub{\rr_\alpha}[P_{\alpha} \cdot P_{\alpha}]\otimes \omega_{\alpha},  \nonumber \\
\ii^*_\av(\cdot) &= \sum_{\alpha}\Gamma_{\omega_{\alpha}}
(P_{\alpha} \cdot P_{\alpha})\otimes \one_{\mathcal{R}_{\alpha}}, 
\end{align}
where:  $\omega_\alpha$ are states on $\rr_\alpha$;   $\Gamma_{\omega_\alpha} : \lo(\kk_{\alpha}\otimes \rr_{\alpha}) \to \lo(\kk_\alpha)$ are restriction maps; and $\tr\sub{\rr_\alpha}: \lo(\kk_\alpha \otimes \rr_\alpha) \to \lo(\kk_\alpha)$ are partial traces \cite{Lindblad1999a}.

Note that the third-law constraint implies that $\omega_\alpha$ are full-rank states on $\rr_\alpha$. This can be immediately inferred by noting that, given the complete mixture $\sigma = \onesys / \dim(\hs)$, it holds that $\ii_\av(\sigma) \propto \oplus_\alpha \one\sub{\kk_\alpha} \otimes \omega_\alpha$. By property (i) of \lemref{lemma:third-law-channel-condition}, this state must be full-rank, which holds if and only if $\omega_\alpha$ are full-rank for all $\alpha$.

Finally, let us note that since $\ff(\ii_\xx^*)$ is a von Neumann algebra, then it holds that  
\begin{align}\label{eq:fixed-point-commutant-E}
\ff(\ii_\xx^*) \subseteq \E' := \{A \in \lo(\hs) : [\E, A] = \zero \},    
\end{align} 
that is, the fixed points of $\ii_\xx^*$ are contained in the commutant of $\E$ \cite{Heinosaari2010}. We now provide a useful result indicating the form that the effects of $\E$ must take in light of the fixed-point structure of the measurement channel:

\begin{lemma}\label{lemma:effect-factor-decomposition}
 Let  $\mm:=(\ha, \xi, \ee, \Z)$ be a measurement scheme for a non-trivial observable $\E$, with instrument $\ii$, acting in $\hs$. Assume that $\mm$ is constrained by the third law. Then the effects of $\E$ are of the form 
     \begin{align*}
    \E_x = \bigoplus_\alpha \one\sub{\kk_\alpha} \otimes E_{x, \alpha},
\end{align*}
where $\zero < E_{x, \alpha} < \one\sub{\rr_\alpha}$ for all $x$ and $\alpha$. 
\end{lemma}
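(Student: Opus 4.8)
The plan is to read the structure of $\E$ directly off the fixed-point data of $\ii_\xx$ developed just above. Since $\mm$ is constrained by the third law, item (ii) of \lemref{lemma:measurement-third-law} supplies a full-rank fixed state $\rho_0$ of $\ii_\xx$, so that $\ff(\ii_\xx^*)$ is a finite-dimensional von Neumann algebra with central decomposition $\hs=\bigoplus_\alpha\kk_\alpha\otimes\rr_\alpha$ and $\ff(\ii_\xx^*)=\bigoplus_\alpha\lo(\kk_\alpha)\otimes\one\sub{\rr_\alpha}$ as in \eqref{eq:factor-fixed-point-set}, and moreover $\ff(\ii_\xx^*)\subseteq\E'$ by \eqref{eq:fixed-point-commutant-E}.

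First I would extract the tensor form with non-strict bounds. Because $\ff(\ii_\xx^*)\subseteq\E'$, every effect $\E_x$ commutes with the whole algebra $\ff(\ii_\xx^*)$, hence lies in its commutant $\ff(\ii_\xx^*)'=\bigoplus_\alpha\one\sub{\kk_\alpha}\otimes\lo(\rr_\alpha)$; in particular $\E_x$ commutes with each central projection $P_\alpha$, so $\E_x=\bigoplus_\alpha P_\alpha\E_x P_\alpha$, and the compression $P_\alpha\E_x P_\alpha$, commuting with $\lo(\kk_\alpha)\otimes\one\sub{\rr_\alpha}$ inside $\lo(\kk_\alpha\otimes\rr_\alpha)$, must have the form $\one\sub{\kk_\alpha}\otimes E_{x,\alpha}$ for a unique $E_{x,\alpha}\in\lo(\rr_\alpha)$. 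Restricting $\zero\leqslant\E_x\leqslant\onesys$ to block $\alpha$ gives $\zero\leqslant E_{x,\alpha}\leqslant\one\sub{\rr_\alpha}$, which is the claimed form modulo strictness.

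The main work --- and the step I expect to be the genuine obstacle --- is promoting these to strict inequalities $\zero<E_{x,\alpha}<\one\sub{\rr_\alpha}$, that is, ruling out the eigenvalues $0$ and $1$ of each $E_{x,\alpha}$; the von Neumann algebra structure alone only gives the non-strict bounds, so this is where the third law must really be used. For the lower bound I would argue by contradiction: a kernel vector $\ket{\phi}\in\rr_\alpha$ of $E_{x,\alpha}$ furnishes a family $\{\ket{\chi}\otimes\ket{\phi}:\ket{\chi}\in\kk_\alpha\}\subseteq\ker\E_x$, shrinking the support projection $P_x$ of $\E_x$; one then feeds into item (iv) of \lemref{lemma:measurement-third-law} a state having full rank on $P_x\hs$ whose conditional output $\ii_x(\cdot)$ is nonetheless forced to miss full rank on $\hs$, or else contradicts $\ii_\xx(\rho_0)=\rho_0$ using that the block pieces $\omega_\alpha$ of the fixed state are themselves full-rank on $\rr_\alpha$. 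For the upper bound one runs the same argument on $\onesys-\E_x=\bigoplus_\alpha\one\sub{\kk_\alpha}\otimes(\one\sub{\rr_\alpha}-E_{x,\alpha})$, after observing that $\sum_y\E_y=\onesys$ forces an eigenvalue-$1$ vector of $E_{x,\alpha}$ to be a common kernel vector of all $E_{y,\alpha}$ with $y\ne x$; the faithfulness statements in items (i) and (iii) of \lemref{lemma:measurement-third-law}, together with the full-rankness of $\xi$ and of $\ee(\rho\otimes\xi)$ for full-rank $\rho$, are the tools to close the loop. Collecting the block-wise strict bounds then yields the statement, with non-triviality of $\E$ needed only to guarantee more than one outcome so that the decomposition is meaningful.
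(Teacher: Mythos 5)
The structural first half of your argument is fine and is exactly the paper's route: from $\ff(\ii_\xx^*)\subseteq\E'$ one gets $\E_x\in\ff(\ii_\xx^*)'=\bigoplus_\alpha\one\sub{\kk_\alpha}\otimes\lo(\rr_\alpha)$, hence the block form with $\zero\leqslant E_{x,\alpha}\leqslant\one\sub{\rr_\alpha}$. The genuine gap is in what you then set out to prove. The inequalities $\zero<E_{x,\alpha}<\one\sub{\rr_\alpha}$ in this lemma assert only that no block is the zero operator and no block is the identity on $\rr_\alpha$; they do \emph{not} assert that $E_{x,\alpha}$ has no eigenvalue $0$ or $1$, which is the statement you plan to establish. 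That stronger statement is false: in the paper's own example in the non-disturbance appendix, $\E_x=\one\otimes|x\rangle\langle x|$ on $\hs=\co^2\otimes\co^2$ is measured by a third-law-constrained scheme (unitary swap with full-rank $\xi$), the fixed-point set is $\lo(\h_1)\otimes\one_2$, and the single block $E_{x,\alpha}=|x\rangle\langle x|$ has both eigenvalue $0$ and eigenvalue $1$; likewise \propref{prop:non-disturbance} explicitly analyses blocks with eigenvalue $1$, since norm-1 observables remain measurable under the third law (only certain properties of their measurements are excluded). Consequently no argument of the kind you sketch --- kernel vector of $E_{x,\alpha}$, item (iv) of \lemref{lemma:measurement-third-law}, contradiction with the full-rank fixed state --- can close, because there is nothing to contradict: a nontrivial kernel of $\E_x$, and eigenvalue-$1$ eigenvectors, are perfectly compatible with \defref{defn:third-law-measurement}.

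Even for the correct, weaker claim $E_{x,\alpha}\neq\zero$, your sketch omits the ingredient the paper actually uses. Since $\ff(\ii_\xx^*)$ is a von Neumann algebra, Choi's multiplicative-domain theorem applied to $\Gamma_\xi^\ee$ gives $\ii_x^*(A)=A\E_x=\E_x A$ for every fixed point $A\in\ff(\ii_\xx^*)$; combined with the outcome-wise faithfulness $\ii_x^*(A^*A)=\zero\iff A=\zero$ (item (iii) of \lemref{lemma:measurement-third-law}), this yields $A\E_x=\zero\implies A=\zero$ on $\ff(\ii_\xx^*)$, and a vanishing block $E_{x,\alpha}=\zero$ would produce a nonzero fixed point $A_\alpha\otimes\one\sub{\rr_\alpha}$ annihilating $\E_x$ --- contradiction. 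The tools you invoke do not substitute for this: for instance, pairing $\E_x$ with the full-rank fixed state (or with $\ii_\av$) only shows that \emph{some} block is nonzero, not every block, and item (iv) constrains outputs $\ii_x(\rho)$ rather than the block structure of $\E_x$. Finally, $E_{x,\alpha}\neq\one\sub{\rr_\alpha}$ is not a third-law statement at all in the paper: it follows from normalisation together with non-triviality (at least two outcomes, all of whose blocks are nonzero), which is far weaker than excluding eigenvalue $1$.
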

\begin{proof}
By the channel $\Gamma_\xi^\ee$ defined in \eq{eq:gamma-u}, we may write  $\ii_x^*(\cdot) =  \Gamma_\xi^\ee (\cdot \otimes \Z_x)$, and so we may write   $\E_x = \ii_x^*(\onesys) = \Gamma_\xi^\ee (\onesys \otimes \Z_x)$ and $\ii_\xx^*(\cdot) = \Gamma_\xi^\ee (\cdot \otimes \oneapp)$.  Since  $\ff(\ii_\xx^*)$ is a von Neumann algebra, for any $A \in \ff(\ii_\xx^*)$ it holds that $A^*A, A A^* \in \ff(\ii_\xx^*)$.     By the multiplicability theorem \cite{Choi1974}, this implies that   $A \Gamma_\xi^\ee (B) = \Gamma_\xi^\ee ( (A\otimes \oneapp) B)$ and $ \Gamma_\xi^\ee (B) A = \Gamma_\xi^\ee ( B (A\otimes \oneapp))$ for all $B \in \lo(\hs \otimes \ha)$. By choosing $B = \onesys \otimes \Z_x$,  we may therefore write
\begin{align*}\label{eq:multiplicative-non-disturbance-Q}
\ii_x^*(A) =  \Gamma_\xi^\ee (A \otimes \Z_x) =  A \E_x = \E_x A
\end{align*}
for all $A \in \ff(\ii_\xx^*)$. Now assume that $A \E_x = \zero$. By the above equation this implies that $\ii_x^*(A^*A) = A^*A \E_x = \zero$. By item (iii) of  \lemref{lemma:measurement-third-law}, it follows that for any $A \in \ff(\ii_\xx^*)$, it holds that $A \E_x = \zero \iff A = \zero$.

Now note that the condition $\ff(\ii_\xx^*) \subset \E'$ implies that $\E \subset \ff(\ii_\xx^*)'$. By  \eq{eq:factor-fixed-point-set} it holds that $ \ff(\ii_\xx^*)' = \bigoplus_{\alpha} \one\sub{\kk_{\alpha}} \otimes \lo(\rr_\alpha)$. That the effects of $\E$ are decomposed as in the statement of the lemma directly follows. Now assume that $E_{x,\alpha} = \zero$ for some $\alpha$. It will hold that an operator  $A = A_\alpha \otimes \one\sub{\rr_\alpha} \in \ff(\ii_\xx^*)$ exists,   with $A_\alpha \ne \zero$, such that   $A \E_x = \zero$. But this contradicts what we showed above. Therefore, all $E_{x,\alpha}$ must be strictly positive. Finally, since $\E$ is non-trivial, then there exists at least two distinct outcomes, and so by normalisation it holds that  $E_{x,\alpha} <  \one\sub{\rr_\alpha}$.

\end{proof}

\section{Non-disturbance}\label{app:non-disturbance}

An observable $\F := \{\F_y : y \in \yy\}$ is non-disturbed by an $\E$-compatible instrument $\ii$ if  $\tr[\F_y \ii_\xx(\rho)] = \tr[\F_y \rho]$ holds for all states $\rho$ and outcomes $y$. This can equivalently be stated as  $\ii_\xx^*(\F_y) = \F_y$ for all $y$, which we denote as $\F \subset \ff(\ii_\xx^*)$. If the measurement scheme for $\ii$ is constrained by the third law, then as discussed surrounding \eq{eq:fixed-point-commutant-E} it holds that $\ff(\ii_\xx^*) \subseteq \E'$, and so a necessary condition for non-disturbance of $\F$ is for $\F$ to commute with $\E$. As we show below, however, commutation is not sufficient; properties of the measured observable impose further constraints. 

\begin{prop}\label{prop:non-disturbance}
Let $\mm:= (\ha, \xi, \ee, \Z)$ be a measurement scheme for an $\E$-compatible instrument $\ii$ acting in $\hs$. The following hold:
\begin{enumerate}[(i)]
    \item  If $\| \E_x \| =1$ for any $x$, then    there exists a projection $P \in \E'$ such that $P\notin 
     \ff(\ii_\xx^*)$ for any instrument $\ii$ that can be implemented by a scheme $\mm$ that is constrained by the third law.

    \item If $\E$ is completely unsharp, then a scheme $\mm$ that is constrained by the third law can be chosen so that $\ff(\ii_\xx^*) = \E'$.
\end{enumerate}

\end{prop}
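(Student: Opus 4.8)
The plan is to establish the two items separately, relying throughout on the fixed-point structure developed in \app{app:fixed-point-measurement} (especially \lemref{lemma:effect-factor-decomposition}) and on \lemref{lemma:measurement-third-law}.

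\emph{Item (i).} Suppose $\|\E_{x_0}\| = 1$ for some outcome $x_0$. In finite dimensions this means $\E_{x_0}$ has an eigenvector with eigenvalue $1$; let $P$ be the spectral projection of $\E_{x_0}$ onto the eigenvalue-$1$ eigenspace, so $P \neq \zero$, $P \E_{x_0} = \E_{x_0} P = P$, and consequently $P \E_y = \E_y P = \zero$ for all $y \neq x_0$ by normalisation and positivity. Hence $P$ commutes with every $\E_y$, i.e. $P \in \E'$. Now I would argue that $P$ cannot be a fixed point of $\ii_\xx^*$ for any third-law-constrained scheme. Assume for contradiction that $P \in \ff(\ii_\xx^*)$. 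Since the third-law constraint makes $\ff(\ii_\xx^*)$ a von Neumann algebra on which the relation $\ii_x^*(A) = A\E_x = \E_x A$ holds for $A \in \ff(\ii_\xx^*)$ (as derived in the proof of \lemref{lemma:effect-factor-decomposition}), applying this with $A = P$ and outcome $x_0$ gives $\ii_{x_0}^*(P) = P\E_{x_0} = P$. Similarly, applying $\ii_\xx^*(P) = P$ together with the factorised form of $\E_{x_0}$ from \lemref{lemma:effect-factor-decomposition}, namely $\E_{x_0} = \bigoplus_\alpha \one\sub{\kk_\alpha}\otimes E_{x_0,\alpha}$ with $\zero < E_{x_0,\alpha} < \one\sub{\rr_\alpha}$, forces $P$ to have nonzero overlap with some block, and on that block $P \E_{x_0} = P$ would require $E_{x_0,\alpha}$ to have eigenvalue $1$ on part of $\rr_\alpha$ — contradicting $E_{x_0,\alpha} < \one\sub{\rr_\alpha}$. (Equivalently: $P \in \ff(\ii_\xx^*)$ and $P\E_{x_0} = P$ together with item~(iii) of \lemref{lemma:measurement-third-law} applied to $A = \onesys - \E_{x_0}$ restricted suitably yields the contradiction.) Thus $P \notin \ff(\ii_\xx^*)$, and since $P \in \E'$, $\F := \{P, \onesys - P\}$ is an observable commuting with $\E$ that is disturbed by every third-law-constrained measurement of $\E$.

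\emph{Item (ii).} Suppose $\E$ is completely unsharp. By \propref{prop:luders-completely-unsharp}, the L\"uders instrument $\ii^L$ compatible with $\E$ admits a third-law-constrained measurement scheme $\mm$. It therefore suffices to show $\ff((\ii^L_\xx)^*) = \E'$. The inclusion $\ff((\ii^L_\xx)^*) \subseteq \E'$ is the general fact from \eq{eq:fixed-point-commutant-E}. For the reverse inclusion, take any $A \in \E'$, so $[A, \sqrt{\E_x}] = \zero$ for all $x$ (commutation with $\E_x$ is equivalent to commutation with $\sqrt{\E_x}$). Then
\begin{align*}
(\ii^L_\xx)^*(A) = \sum_x \sqrt{\E_x}\, A\, \sqrt{\E_x} = \sum_x \E_x A = \bigg(\sum_x \E_x\bigg) A = A,
\end{align*}
so $A \in \ff((\ii^L_\xx)^*)$. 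Hence $\ff((\ii^L_\xx)^*) = \E'$, and a scheme realising $\ff(\ii_\xx^*) = \E'$ exists.

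\emph{Main obstacle.} The routine part is item~(ii); the substantive content is item~(i), specifically pinning down \emph{why} an eigenvalue-$1$ projection $P$ of $\E_{x_0}$ must fail to be a fixed point. The cleanest route is to use \lemref{lemma:effect-factor-decomposition}: any $A \in \ff(\ii_\xx^*)$ lives in $\bigoplus_\alpha \one\sub{\kk_\alpha}\otimes \lo(\rr_\alpha)$, so if $P \in \ff(\ii_\xx^*)$ then $P = \bigoplus_\alpha \one\sub{\kk_\alpha}\otimes Q_\alpha$ for projections $Q_\alpha$ on $\rr_\alpha$, not all zero; but $P \E_{x_0} = P$ then reads $Q_\alpha E_{x_0,\alpha} = Q_\alpha$ on each block, which on the support of $Q_\alpha$ forces $E_{x_0,\alpha}$ to have eigenvalue $1$, contradicting $E_{x_0,\alpha} < \one\sub{\rr_\alpha}$. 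I expect the delicate bookkeeping to be verifying that $P\E_{x_0}=P$ genuinely follows once $P \in \ff(\ii_\xx^*)$ is assumed — this uses the multiplicative identity $\ii_x^*(A) = A\E_x$ valid on the fixed-point algebra, which is precisely the ingredient established in the proof of \lemref{lemma:effect-factor-decomposition}.
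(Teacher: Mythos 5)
Your item (ii) is correct and is essentially the paper's own argument: complete unsharpness gives a third-law-constrained scheme for the L\"uders instrument via \propref{prop:luders-completely-unsharp}, and then $\ff(({\ii^L_\xx})^*)=\E'$ (the paper cites the known L\"uders fixed-point theorem; your direct computation of the inclusion $\E'\subseteq\ff(({\ii^L_\xx})^*)$ is fine, and the reverse inclusion from \eq{eq:fixed-point-commutant-E} applies because the scheme is third-law constrained). Your plan for item (i)---take $P$ to be the eigenvalue-$1$ eigenprojection of $\E_{x_0}$, note $P\in\E'$, and rule out $P\in\ff(\ii_\xx^*)$---is also a viable route, and has the merit that $P$ is chosen independently of the instrument, matching the quantifier order of the statement; the paper's appendix proof instead works inside the block decomposition $\E'=\bigoplus_\alpha\lo(\kk_\alpha)\otimes E_\alpha'$ and exhibits a non-scalar element of some $E_\alpha'$.

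However, the step where you actually derive the contradiction in item (i) fails as written, for two reasons. First, you have the fixed-point algebra on the wrong tensor factors: by \eq{eq:factor-fixed-point-set}, $\ff(\ii_\xx^*)=\bigoplus_\alpha\lo(\kk_\alpha)\otimes\one\sub{\rr_\alpha}$, not $\bigoplus_\alpha\one\sub{\kk_\alpha}\otimes\lo(\rr_\alpha)$ (the latter is its commutant, which is where the effects $\E_x$ live). Second, and more importantly, the contradiction you invoke---``$E_{x_0,\alpha}$ has eigenvalue $1$ on part of $\rr_\alpha$, contradicting $E_{x_0,\alpha}<\one\sub{\rr_\alpha}$''---is not a contradiction: in \lemref{lemma:effect-factor-decomposition} the strict inequalities only assert $E_{x,\alpha}\neq\zero$ and $E_{x,\alpha}\neq\one\sub{\rr_\alpha}$, so an eigenvalue $1$ is entirely possible; indeed the paper's own proof of this proposition analyses precisely the case where some $E_{x,\alpha}$ has eigenvalue $1$. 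The gap is repairable in two ways. With the correct decomposition, $P\in\ff(\ii_\xx^*)$ means $P=\bigoplus_\alpha Q_\alpha\otimes\one\sub{\rr_\alpha}$, and then $P\E_{x_0}=P$ forces $E_{x_0,\alpha}=\one\sub{\rr_\alpha}$ (the full identity, not merely an eigenvalue $1$) on every block with $Q_\alpha\neq\zero$, which genuinely contradicts the lemma. Alternatively, and more simply, from $P\in\ff(\ii_\xx^*)$ and the multiplicativity relation you quoted, $\ii_y^*(P)=P\E_y=\zero$ for every $y\neq x_0$; since item (iii) of \lemref{lemma:measurement-third-law} (or the step ``$A\E_y=\zero\iff A=\zero$ for $A\in\ff(\ii_\xx^*)$'' inside the proof of \lemref{lemma:effect-factor-decomposition}) gives $\ii_y^*(P)=\zero\implies P=\zero$, this contradicts $\|\E_{x_0}\|=1$. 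Your parenthetical hints at this second route but leaves it unspecified (``$A=\onesys-\E_{x_0}$ restricted suitably''), so as submitted item (i) is not proved.
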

\begin{proof}
\begin{enumerate}[(i)]
    \item 
By \lemref{lemma:effect-factor-decomposition}, we may write
 \begin{align*}                   
    \E' = \bigoplus_\alpha  \lo(\kk_{\alpha})
    \otimes E_{\alpha}',
\end{align*}
where $E_{\alpha}' := \{A \in \lo(\rr_\alpha) : [E_{x,\alpha}, A] = \zero \, \forall \, x \in \xx\}$. If $\ff(\ii_\xx^*) = \E'$, then by \eq{eq:factor-fixed-point-set} it must hold that $E_\alpha '= \co \one\sub{\rr_\alpha}$ for all $\alpha$.

Recall that for each $\alpha$, $\{E_{x,\alpha} : x \in \xx\}$ is a POVM acting in $\rr_\alpha$, where $\zero < E_{x,\alpha} <\one\sub{\rr_\alpha}$ and $\sum_x E_{x,\alpha} = \one\sub{\rr_\alpha}$. 
Assume that for some $\alpha$, there exists $x$ such that $E_{x,\alpha}$ has eigenvalue 1. Let $P$ be the projection on the eigenvalue-1 eigenspace of $E_{x,\alpha}$. Since $\E$ is non-trivial, then $P < \one\sub{\rr_\alpha}$. By normalisation, it follows that $P E_{x',\alpha} =  E_{x',\alpha} P = \delta_{x,x'}P$ for all $x'$, and so  there exists $ P \not\propto  \one\sub{\rr_\alpha} \in  E_{\alpha}'$.  Therefore, $\ff(\ii_\xx^*) = \E'$ holds only if $E_{x,\alpha}$ does not have eigenvalue 1, and so $\| \E_x\| <1$ for all $x$. 

\item By \propref{prop:luders-completely-unsharp}, a completely unsharp observable $\E$  admits a L\"uders instrument $\ii^L$, given the third law constraint. In finite dimensions, for the L\"uders instrument compatible with $\E$, it holds that $\ff({\ii_\xx^{L}}^*) = \E'$ \cite{Busch1998}.
\end{enumerate}

\end{proof}
In other words, an observable $\E$ admits an instrument $\ii$ so that $\ff(\ii_\xx^*) = \E'$, with such instrument realisable by a measurement scheme $\mm$ that is constrained by the third law,  if $\E$ is completely unsharp, and only if $\|\E_x\| < 1$ for all $x$. Note that by item (i), if $\|\E_x\|=1$ for some $x$, then there exists a POVM $\{P, P^\perp := \onesys - P\}$  that commutes with $\E$, but is disturbed by any realisable $\E$-instrument $\ii$, since $\ii_\xx^*(P) \ne P$. In particular, the above proposition implies that for any possible measurement of a norm-1 observable $\E$, such as a sharp observable, there exists some $\F\subset \E'$ that is  disturbed.

Of course, while a measurement of a norm-1 observable $\E$ is guaranteed to disturb some observable that commutes, this does not imply that there are no non-disturbed observables. Below, we provide necessary conditions on  $\E$ so that its measurement allows for a non-trivial class of non-disturbed observables. 

\begin{prop}\label{prop:rank-deg}
Let $\mm:= (\ha, \xi, \ee, \Z)$ be a measurement scheme for an $\E$-compatible instrument $\ii$ acting in $\hs$, and assume that $\mm$ is constrained by the third law. The following hold:
\begin{enumerate} [(i)]
    \item If $\E$ is a small-rank observable, then  $\ff(\ii_\xx^*) = \co \onesys$.
    
    \item If $\E$ is a non-degenerate observable, then $\ff(\ii_\xx^*)$ is Abelian.  
\end{enumerate}
\end{prop}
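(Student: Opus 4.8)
The plan is to reduce both claims to the block‑plus‑tensor structure already extracted in \lemref{lemma:effect-factor-decomposition} together with \eqref{eq:factor-fixed-point-set}. Since $\mm$ is constrained by the third law, \lemref{lemma:measurement-third-law}(ii) gives a full‑rank fixed state of $\ii_\xx$, so $\ff(\ii_\xx^*)$ is a finite‑dimensional von Neumann algebra and, on $P_\alpha\hs = \kk_\alpha\otimes\rr_\alpha$, one has $\ff(\ii_\xx^*) = \bigoplus_\alpha \lo(\kk_\alpha)\otimes\one\sub{\rr_\alpha}$ while the effects split as $\E_x = \bigoplus_\alpha \one\sub{\kk_\alpha}\otimes E_{x,\alpha}$ with $\zero < E_{x,\alpha} < \one\sub{\rr_\alpha}$. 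The first step I would carry out is simply to record these two facts and fix this notation; $\ff(\ii_\xx^*) = \co\onesys$ then amounts to ``one factor with $\dim\kk_\alpha = 1$'', and $\ff(\ii_\xx^*)$ Abelian to ``$\dim\kk_\alpha = 1$ for every $\alpha$'', so everything reduces to counting ranks and eigenvalue multiplicities of one effect.

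For (i), I would pick the outcome $x_0$ with $\rank{\E_{x_0}} = 1$ furnished by \defref{defn:small-rank}. Reading the rank off the decomposition, $\rank{\E_{x_0}} = \sum_\alpha \dim(\kk_\alpha)\,\rank{E_{x_0,\alpha}} \geqslant \sum_\alpha \dim(\kk_\alpha)$, since each $E_{x_0,\alpha}$ is nonzero. Hence the sum has a single term $\alpha$ with $\dim(\kk_\alpha) = 1$; as the $P_\alpha$ resolve the identity this forces $P_\alpha = \onesys$, and therefore $\ff(\ii_\xx^*) = \lo(\kk_\alpha)\otimes\one\sub{\rr_\alpha} = \co\onesys$.

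For (ii), I would fix the outcome $x_0$ promised by \defref{defn:non-degenerate}, so that the strictly positive eigenvalues of $\E_{x_0}$ are all simple, and argue by contradiction: suppose $\ff(\ii_\xx^*)$ is non‑Abelian, i.e.\ $\dim(\kk_\alpha) \geqslant 2$ for some $\alpha$. Since $E_{x_0,\alpha} > \zero$ on the nonzero space $\rr_\alpha$, it has a strictly positive eigenvalue $\mu$, and in $\E_{x_0} = \bigoplus_\beta \one\sub{\kk_\beta}\otimes E_{x_0,\beta}$ the factor $\one\sub{\kk_\alpha}$ multiplies the multiplicity of $\mu$ by $\dim(\kk_\alpha)\geqslant 2$. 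Thus $\mu$ is a strictly positive eigenvalue of $\E_{x_0}$ of multiplicity at least $2$, contradicting the choice of $x_0$. Hence $\dim(\kk_\alpha)=1$ for every $\alpha$ and $\ff(\ii_\xx^*) = \bigoplus_\alpha \co\,P_\alpha$, which is Abelian.

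I do not expect an analytic obstacle: the third‑law hypothesis has already done its work inside \lemref{lemma:effect-factor-decomposition}, and what remains is purely structural. The one point to be careful about is the bookkeeping — checking that the multiplicity factor $\kk_\alpha$ genuinely multiplies eigenvalue multiplicities of $E_{x,\alpha}$ within $\E_x$ (and that the direct sum adds ranks), and that ``non‑degenerate'' is invoked in the exact sense of \defref{defn:non-degenerate}, namely as the existence of some effect with simple strictly positive spectrum, rather than a statement about all effects at once.
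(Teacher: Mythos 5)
Your proposal is correct and follows essentially the same route as the paper: both parts are read off from the factor decomposition of $\ff(\ii_\xx^*)$ together with the block form $\E_x = \bigoplus_\alpha \one\sub{\kk_\alpha}\otimes E_{x,\alpha}$ from \lemref{lemma:effect-factor-decomposition}, with (i) forcing a single block with $\dim(\kk_\alpha)=1$ via the rank bound and (ii) forcing $\dim(\kk_\alpha)=1$ for all $\alpha$ via eigenvalue multiplicities. Your write-up merely makes the rank/multiplicity bookkeeping more explicit than the paper's, which is fine.
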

\begin{proof}

\begin{enumerate}[(i)]
    \item Recall from \defref{defn:small-rank} that a small-rank observable has at least one effect that is rank-1.     By \lemref{lemma:effect-factor-decomposition}, the rank of every effect of $\E$ is bounded as $\rank{\E_x} \geqslant \sum_\alpha \dim(\kk_\alpha)$.  Therefore, if any effect of $\E$ is rank-1, then it must hold that the number of indices $\alpha$ is 1, and that  $\dim(\kk_\alpha) = 1$, so that by \eq{eq:factor-fixed-point-set} we have $\ff(\ii_\xx^*) = \co \onesys$.
    
    \item Recall from \defref{defn:non-degenerate} that $\E$ is non-degenerate if one of its effects has no multiplicities in its strictly positive eigenvalues. By \lemref{lemma:effect-factor-decomposition},  non-degeneracy of such an effect implies that $\dim (\kk_\alpha)=1$ for every $\alpha$. It follows from \eq{eq:factor-fixed-point-set} that  $\ff(\ii_\xx^*) = \oplus_{\alpha} \co \one\sub{\rr_{\alpha}}$, i.e., for any $A, B \in \ff(\ii_\xx^*)$, it holds that $[A, B] = \zero$. \end{enumerate}

\end{proof} 

In other words, for the class of non-disturbed observables to be non-trivial, then the measured observable must be large-rank. Additionally, for the non-disturbed observables to be non-commutative, then the measured observable must be degenerate. We now provide a concrete example for a measurement scheme constrained by the third law, which measures a sharp observable that is large-rank and hence degenerate,  that does not disturb a non-trivial class of possibly non-commutative  observables. 
\begin{example}
Consider $\hs :=\h_1\otimes \h_2$ and $\ha := \h_3$,  with $\h_i = \co^2$. Let $\ee$ be a unitary channel which acts trivially in $\h_1$ and implements a  swap in $\h_2 \otimes \h_3$, i.e., $\ee^*(A
\otimes B\otimes C) = A\otimes C\otimes B$. As shown in \corref{corollary:third-law-channel-examples} this channel is constrained by the third law. Now define a pointer observable $\Z_x :=|x\rangle \langle x|$ acting in $\ha$. A measurement scheme $\mm:= (\ha, \xi, \ee, \Z)$, with $\xi$ a full-rank state on $\ha$, is constrained by the third law, and measures the sharp, large-rank and degenerate observable with effects   
$\E_x = \one_1 \otimes |x\rangle \langle x|$ in $\hs$. The fixed-point set of the instrument $\ii$ implemented by $\mm$ is easily verified to be $\ff(\ii_\xx^*)= \lo(\h_1)\otimes 
\one_2 \subset \E'$, which is a non-trivial and non-commutative proper subset of $\E'$. That is, any possibly non-commutative observable $\F$ with effects $\F_y = F_y \otimes \one_2$ will be non-disturbed. 
\end{example}

\section{First-kindness and repeatability}\label{app:first-kindness}

An $\E$-compatible instrument $\ii$ is a measurement of the first kind if $\E \subset \ff(\ii_\xx^*)$. A subclass of first-kind measurements are repeatable, satisfying the additional condition  $\ii_y^*(\E_x) = \delta_{x,y} \E_x$. Only norm-1 observables admit a repeatable instrument.  Repeatability implies first-kindness, since $\ii_\xx^*(\E_x) = \sum_{y}\ii_y^*(\E_x) = \E_x$. We now show that the third law only permits first-kindness for completely unsharp observables, and so categorically  prohibits repeatability.

\begin{prop}\label{prop:first-kindness}
A non-trivial observable $\E$ admits a measurement of the first kind, given a measurement scheme $\mm:=(\ha, \xi, \ee, \Z)$ that is constrained by the third law, if and only if $\E$ is commutative and completely unsharp. 
\end{prop}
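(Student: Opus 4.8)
The plan is to prove the two implications separately, building on the machinery of \app{app:third-law-measurement} and \app{app:fixed-point-measurement}.

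\emph{Sufficiency.} Suppose $\E$ is commutative and completely unsharp. By \propref{prop:luders-completely-unsharp} the L\"uders instrument $\ii^L$ compatible with $\E$ admits a measurement scheme constrained by the third law, so it suffices to check that $\ii^L$ is of the first kind. Since $[\E_x,\E_y]=\zero$ forces $[\sqrt{\E_y},\E_x]=\zero$, one computes ${\ii_\xx^{L}}^*(\E_x)=\sum_y \sqrt{\E_y}\,\E_x\,\sqrt{\E_y}=\E_x\sum_y\E_y=\E_x$, i.e.\ $\E\subset\ff({\ii_\xx^{L}}^*)$.

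\emph{Necessity.} Now assume $\ii$ is a first-kind $\E$-instrument realised by a scheme $\mm$ constrained by the third law, so $\E\subset\ff(\ii_\xx^*)$. By item~(ii) of \lemref{lemma:measurement-third-law} the channel $\ii_\xx$ has a full-rank fixed state, which (as recalled around \eq{eq:fixed-point-commutant-E}) makes $\ff(\ii_\xx^*)$ a von Neumann algebra satisfying $\ff(\ii_\xx^*)\subseteq\E'$. Hence the effects of $\E$ commute pairwise, i.e.\ $\E$ is commutative. A commutative observable is a post-processing of a sharp observable $\P=\{\P_y\}$, $\E_x=\sum_y p(x|y)\P_y$ with $p(x|y)\geqslant 0$ and $\sum_x p(x|y)=1$ \cite{Heinosaari2011a}, where the $\P_y$ are the minimal projections of the abelian von Neumann algebra generated by $\{\E_x\}$; since that algebra is contained in $\ff(\ii_\xx^*)$, we also have $\P\subset\ff(\ii_\xx^*)$. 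As in the proof of \lemref{lemma:effect-factor-decomposition}, every $A\in\ff(\ii_\xx^*)$ obeys $\ii_x^*(A)=A\E_x$; taking $A=\P_y$ and inserting \eq{eq:instrument-dilation} yields, for every state $\rho$,
\begin{align*}
p(x|y)\,\tr[\P_y\rho]=\tr[\P_y\E_x\P_y\rho]=\tr[\ii_x^*(\P_y)\rho]=\tr\big[(\P_y\otimes\Z_x)\,\ee(\rho\otimes\xi)\big],
\end{align*}
using $\P_y\E_x\P_y=p(x|y)\P_y$.

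The key step is to evaluate this identity at a full-rank $\rho$. Since $\xi$ is full-rank (\defref{defn:third-law-measurement}) and $\ee$ maps full-rank states to full-rank states, $\ee(\rho\otimes\xi)$ is full-rank; as $\Z_x>\zero$ (from the proof of \lemref{lemma:measurement-third-law}) and $\P_y\neq\zero$, the right-hand side is strictly positive, whence $p(x|y)>0$ for all $x,y$. Non-triviality of $\E$ forces at least two outcomes, so $\sum_x p(x|y)=1$ with strictly positive summands gives $p(x|y)<1$; and since $\sum_y\P_y=\onesys$, the spectrum of each $\E_x$ consists of the numbers $p(x|y)\in(0,1)$, i.e.\ $\E$ is completely unsharp. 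I expect the only delicate point to be the inclusion $\P\subset\ff(\ii_\xx^*)$---that non-disturbance of the commutative $\E$ propagates to its underlying sharp observable---which rests on $\ff(\ii_\xx^*)$ being a von Neumann algebra, a fact supplied here precisely by the full-rank fixed state the third law guarantees; everything else is bookkeeping with \eq{eq:instrument-dilation} and positivity.
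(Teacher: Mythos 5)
Your proof is correct, but for the necessity direction it does not follow the route the appendix actually takes for \propref{prop:first-kindness}; rather, it is a rigorous version of the main-text sketch. The paper's appendix proof is purely algebraic: first-kindness together with $\ff(\ii_\xx^*)=\ff(\ii_\av^*)$ gives $\ii_\av^*(\E_x)=\E_x$, and combining \lemref{lemma:effect-factor-decomposition} (which yields $\E_x=\oplus_\alpha\one\sub{\kk_\alpha}\otimes E_{x,\alpha}$ with $\zero<E_{x,\alpha}<\one\sub{\rr_\alpha}$) with the explicit form \eq{eq:factor-av-channel} of $\ii_\av^*$ forces $\E_x=\oplus_\alpha\lambda_\alpha(x)\,\one\sub{\kk_\alpha}\otimes\one\sub{\rr_\alpha}$ with $\lambda_\alpha(x)=\tr[E_{x,\alpha}\,\omega_\alpha]\in(0,1)$ because the $\omega_\alpha$ are full-rank, so commutativity and complete unsharpness drop out simultaneously. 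You instead obtain commutativity from $\E\subset\ff(\ii_\xx^*)\subseteq\E'$, pass to the underlying sharp observable $\P$ (your justification is sound: the von Neumann algebra $\ff(\ii_\xx^*)$ contains the algebra generated by the $\E_x$, hence its minimal projections), use the multiplicative-domain identity $\ii_x^*(A)=A\E_x$ established in the proof of \lemref{lemma:effect-factor-decomposition}, and extract $0<p(x|y)<1$ from strict positivity of $\tr[(\P_y\otimes\Z_x)\,\ee(\rho\otimes\xi)]$ at a full-rank $\rho$, with $\Z_x\neq\zero$ and non-triviality (at least two outcomes) supplying the upper bound. Both arguments rest on the same two pillars---the full-rank fixed state guaranteed by item (ii) of \lemref{lemma:measurement-third-law}, which makes $\ff(\ii_\xx^*)$ a von Neumann algebra inside $\E'$, and the third-law constraint on $\xi$ and $\ee$---but yours avoids the central decomposition \eq{eq:factor-fixed-point-set} and the averaged channel entirely, at the cost of the post-processing detour through $\P$; the appendix version is more compact once \lemref{lemma:effect-factor-decomposition} is available, while yours stays closer to the operational picture of a joint measurement of $\E$ and $\P$. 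Your sufficiency argument coincides with the paper's, merely making explicit the computation that the L\"uders instrument of a commutative observable is of the first kind, with realisability supplied by \propref{prop:luders-completely-unsharp}.
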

\begin{proof}
Let us first show the only if statement. An $\E$-compatible instrument $\ii$ is a measurement of the first kind if $\E \subset \ff(\ii_\xx^*)$.  Now, recall that $\ff(\ii_\xx^*) = \ff(\ii_\av^*)$, so  that the first-kind condition also reads as $\ii_\av^*(\E_x) = \E_x$ for all outcomes $x$.  By \lemref{lemma:effect-factor-decomposition} and \eq{eq:factor-av-channel},  it follows that $\E_x = \oplus_\alpha \lambda_\alpha(x)  \one\sub{\kk_\alpha} \otimes \one\sub{\rr_\alpha}$ with $\lambda_{\alpha}(x):=  \tr[E_{x,\alpha} \omega_\alpha]$. Since $\zero < E_{x,\alpha} < \one\sub{\rr_\alpha}$ and $\omega_\alpha$ is full-rank  for all $\alpha$ and $x$, then $\lambda_{\alpha}(x) \in (0,1)$. The claim immediately follows. 

To show the  if statement, recall from \propref{prop:luders-completely-unsharp} that a completely unsharp observable admits a L\"uders instrument under the third law constraint, and that if such an observable is also commutative, then the L\"uders instrument is a first-kind measurement. 

\end{proof}
\begin{corollary}\label{cor:repeatability}
Let $\mm:=(\ha, \xi, \ee, \Z)$ be a measurement scheme for an $\E$-compatible instrument $\ii$. If $\mm$ is constrained by the third law, then $\ii$ cannot be repeatable. Moreover, for every pair of outcomes $x,y$, there exists a state $\rho$ such that $\tr[\E_y \ii_x(\rho)] >0$.
\end{corollary}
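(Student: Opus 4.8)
The plan is to obtain a contradiction: assuming a third-law-constrained scheme $\mm$ implements a repeatable $\E$-instrument $\ii$, I will show on the one hand that repeatability forces $\|\E_x\| = 1$ for every outcome $x$, and on the other hand that \propref{prop:first-kindness} forces $\E$ to be completely unsharp, so that $\|\E_x\| < 1$ for every $x$ by \defref{defn:complete-unsharp}. These two conclusions are incompatible.

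First I would record that repeatability implies first-kindness, exactly as observed in the main text: summing the repeatability identity $\tr[\E_y \ii_x(\rho)] = \delta_{x,y}\,\tr[\E_x\rho]$ over $x$ gives $\tr[\E_y \ii_\xx(\rho)] = \tr[\E_y\rho]$ for all states $\rho$, i.e.\ $\E_y \in \ff(\ii_\xx^*)$ for all $y$. Since $\mm$ is constrained by the third law, \propref{prop:first-kindness} then applies and yields that $\E$ is commutative and completely unsharp; in particular $\|\E_x\| < 1$ for every $x$.

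Next I would show that repeatability is incompatible with this. Fix any outcome $x$. By the standing assumption $\E_x \neq \zero$, there is a state $\rho$ with $\tr[\E_x\rho] > 0$, and $\E$-compatibility of $\ii$ gives $\tr[\ii_x(\rho)] = \tr[\E_x\rho] > 0$, so $\sigma := \ii_x(\rho)/\tr[\ii_x(\rho)]$ is a well-defined state. Repeatability gives $\tr[\E_x\ii_x(\rho)] = \tr[\E_x\rho] = \tr[\ii_x(\rho)]$, hence $\tr[\E_x\sigma] = 1$, which forces $\|\E_x\| = 1$ because $\E_x \leqslant \onesys$. This contradicts $\|\E_x\| < 1$, so no repeatable $\ii$ can be implemented by a third-law-constrained scheme.

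I do not anticipate any real obstacle: the corollary is essentially a repackaging of \propref{prop:first-kindness} together with the classical fact that repeatable measurements exist only for norm-1 observables. The only mild care needed is justifying that the normalised post-measurement state $\sigma$ exists, which is immediate from $\E_x \neq \zero$. As an alternative to the normalisation step one can argue purely within the third-law framework: $\tr[(\onesys - \E_x)\ii_x(\rho)] = \tr[\ii_x(\rho)] - \tr[\E_x\rho] = 0$ for all $\rho$; since $\E$ is completely unsharp, $\E_x$ is full rank so its support projection is $\onesys$, and \lemref{lemma:measurement-third-law}(iv) then lets us pick $\rho$ full-rank with $\ii_x(\rho)$ full-rank, whence $\onesys - \E_x \geqslant \zero$ together with $\ii_x(\rho) > \zero$ forces $\onesys - \E_x = \zero$, again contradicting complete unsharpness.
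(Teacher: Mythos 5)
Your proposal is correct and follows essentially the same route as the paper: repeatability implies first-kindness, \propref{prop:first-kindness} then forces $\E$ to be completely unsharp, and this clashes with the fact that repeatability requires the norm-1 property. The only difference is that you spell out the standard argument that repeatability forces $\|\E_x\|=1$ (via the normalised post-measurement state $\sigma$), whereas the paper simply invokes this known fact; your version is a harmless, correct elaboration.
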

\begin{proof}
If $\ii$ is repeatable, then it is also first-kind. By \propref{prop:first-kindness}, given the third law constraint only a completely unsharp observable admits a first-kind measurement. Since repeatability is only admitted for norm-1 observables, and completely unsharp observables lack the norm-1 property, then $\ii$ cannot be repeatable.  Now note that we may write $\ii_x^*(\E_y) = \Gamma_\xi^\ee(\E_y \otimes \Z_x)$. Since $\E_y \otimes \Z_x > \zero$ for all $x,y$, then by item (i) of \lemref{lemma:measurement-third-law} it holds that $\ii_x^*(\E_y) > \zero$, and so   $\tr[\E_y \ii_x(\rho)] = \tr[\ii_x^*(\E_y) \rho] >0$ for some $\rho$.
\end{proof}

Below we provide a model for a measurement scheme that is constrained by the third law, and which implements a first-kind measurement of a completely unsharp observable. Note that the model does not implement a L\"uders instrument. 
\begin{example}
Consider $\hs=\ha= \co^N$ with  an orthonormal basis $\{|n\> : n=0,\dots, N-1 \}$ for each system. Consider the unitary channel $\ee(\cdot) = U \cdot U^*$ acting in $\hs\otimes \ha$, with the unitary operator $U$ defined as 
\begin{align*}
U = \sum_{m,n} |n\>\<n| \otimes  |m \oplus n\>\<m|,  
\end{align*}
where $\oplus$ denotes addition modulo $N$. As shown in \corref{corollary:third-law-channel-examples} this channel is constrained by the third law. Consider a full-rank state on $\ha$ given as $\xi = \sum_n q(n) |n\rangle \langle n |$
 with $q(n) >0$ for all $n$.   Let $\Z_n = |n\rangle \langle n|$ be a pointer observable acting in $\ha$. The measurement scheme $\mm:= (\ha, \xi, \ee, \Z)$ is therefore  constrained by the third law. Moreover,  the operations of the instrument $\ii$ implemented by $\mm$  satisfy  
\begin{align*}
\ii_x^*(A)
=\Gamma^{\mathcal{E}}_{\xi}
(A\otimes \Z_x)
= \sum_n q(x\ominus n) \< n | A | n\> |n\>\< n|,    
\end{align*} 
where $\ominus$ denotes subtraction modulo $N$. The measured observable is therefore commutative and completely unsharp, with effects $\E_x= \sum_n q(x\ominus n) |n\rangle \langle n |$, 
whose eigenvalues are in $(0,1)$.  Additionally, the fixed-point set of the $\E$-channel is 
$\ff(\ii_\xx^*) = \oplus_n \co |n\>\<n| \subset \E'$,  which is nontrivial. In particular, we have $\E \subset \ff(\ii_\xx^*)$, and so  this model describes a first-kind measurement for $\E$. 
\end{example}

\section{Ideality}\label{app:ideality}

\begin{prop}\label{prop:appendix-ideality}
Let  $\mm:=(\ha, \xi, \ee, \Z)$ be  a measurement scheme for a non-trivial norm-1 observable   $\E$, with the instrument $\ii$,  acting in $\hs$. If $\mm$ is constrained by the third law, then $\ii$ cannot be ideal.   
\end{prop}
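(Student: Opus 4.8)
The plan is to assume that $\ii$ is ideal and derive a contradiction with the third-law constraint. The first step is purely definitional: since $\ii$ is ideal, for the fixed outcome $x$ there exists a state $\rho$ with $\tr[\E_x\rho]=1$, and ideality forces $\ii_x(\rho)=\rho$. Because $\sum_y\E_y=\onesys$ with every $\E_y\geqslant\zero$, the equality $\tr[\E_x\rho]=1$ implies $\tr[\E_y\rho]=0$, hence $\tr[\ii_y(\rho)]=\tr[\E_y\rho]=0$, for all $y\neq x$; as $\ii_y(\rho)\geqslant\zero$ this gives $\ii_y(\rho)=\zero$. Consequently $\ii_\xx(\rho)=\sum_y\ii_y(\rho)=\ii_x(\rho)=\rho$, i.e.\ $\rho\in\ff(\ii_\xx)$. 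The proposition therefore reduces to the claim: \emph{under the third-law constraint, no state $\rho$ satisfying $\tr[\E_x\rho]=1$ is a fixed point of $\ii_\xx$.}

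To prove that claim I would invoke the fixed-point analysis of \app{app:fixed-point-measurement} together with \lemref{lemma:effect-factor-decomposition}. Since $\mm$ is constrained by the third law, item (ii) of \lemref{lemma:measurement-third-law} supplies a full-rank fixed state of $\ii_\xx$, so $\ff(\ii_\xx^*)$ is a finite von Neumann algebra; decomposing it into factors on $P_\alpha\hs=\kk_\alpha\otimes\rr_\alpha$ yields $\ff(\ii_\xx)=\bigoplus_\alpha\lo(\kk_\alpha)\otimes\omega_\alpha$, and the third-law constraint forces each $\omega_\alpha$ to be \emph{full-rank} on $\rr_\alpha$. By \lemref{lemma:effect-factor-decomposition} the effects are block-diagonal in the same decomposition, $\E_x=\bigoplus_\alpha\one\sub{\kk_\alpha}\otimes E_{x,\alpha}$ with $\zero<E_{x,\alpha}<\one\sub{\rr_\alpha}$. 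Writing a generic fixed state as $\rho=\bigoplus_\alpha\sigma_\alpha\otimes\omega_\alpha$ with $\sigma_\alpha\geqslant\zero$ and $\sum_\alpha\tr[\sigma_\alpha]=1$, a short computation gives
\begin{align*}
\tr[\E_x\rho]=\sum_\alpha\tr[\sigma_\alpha]\,\tr[E_{x,\alpha}\,\omega_\alpha].
\end{align*}
Since $\omega_\alpha$ is full-rank and $\one\sub{\rr_\alpha}-E_{x,\alpha}>\zero$, each term obeys $\tr[E_{x,\alpha}\,\omega_\alpha]<1$, whence $\tr[\E_x\rho]<\sum_\alpha\tr[\sigma_\alpha]=1$. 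This contradicts $\tr[\E_x\rho]=1$, so no ideal $\ii$ can exist.

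The step I expect to be the crux is the second one --- promoting the impossibility from the channel $\ii_\xx$ to the effect $\E_x$ --- and it rests entirely on two consequences of the third-law constraint that must be stated carefully: first, that $\ff(\ii_\xx^*)$ is a genuine von Neumann algebra whose defining factors carry only \emph{full-rank} states $\omega_\alpha$ (not merely that $\ii_\xx$ has \emph{some} full-rank fixed state); and second, the strict bound $E_{x,\alpha}<\one\sub{\rr_\alpha}$ from \lemref{lemma:effect-factor-decomposition}, which uses non-triviality of $\E$. I would also note that the restriction to norm-1 $\E$ in the statement is only there to exclude the degenerate situation in which the antecedent of ideality is vacuous: if $\E$ is not norm-1 then $\tr[\E_x\rho]\leqslant\|\E_x\|<1$ for every $\rho$ and no state can make outcome $x$ certain. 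As a consistency check, \lemref{lemma:effect-factor-decomposition} already forces a third-law-constrained scheme to measure a completely unsharp observable (cf.\ \defref{defn:complete-unsharp}), so in fact $\|\E_x\|<1$ for all $x$ and the contradiction can be read off directly; but the fixed-point argument above is the more informative route and the one I would write out in full.
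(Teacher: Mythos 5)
Your main argument is correct and follows essentially the same route as the paper: both proofs pass through the fixed-point structure $\ff(\ii_\xx)=\bigoplus_\alpha\lo(\kk_\alpha)\otimes\omega_\alpha$ with full-rank $\omega_\alpha$ and the block form of the effects from \lemref{lemma:effect-factor-decomposition}, and both extract the contradiction from the fact that a faithful $\omega_\alpha$ together with $E_{x,\alpha}\ne\one\sub{\rr_\alpha}$ forces $\tr[E_{x,\alpha}\,\omega_\alpha]<1$; the paper merely phrases this as showing that a state achieving $\tr[\E_x\rho]=1$ cannot be a fixed state, while you show that no fixed state attains probability one.

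One caution: your closing ``consistency check'' is false. The condition $E_{x,\alpha}<\one\sub{\rr_\alpha}$ in \lemref{lemma:effect-factor-decomposition} only means $E_{x,\alpha}\leqslant\one\sub{\rr_\alpha}$ with $E_{x,\alpha}\ne\one\sub{\rr_\alpha}$; it does not exclude eigenvalue $1$, so a third-law-constrained scheme can perfectly well measure a norm-1 (even sharp) observable---for instance the swap-based scheme measuring $\E_x=\one\otimes|x\>\<x|$ in \app{app:non-disturbance}---and the paper's own proof of this proposition explicitly uses that $\|E_{x,\alpha}\|=1$ for some $\alpha$. Were your check correct, the proposition would be vacuous, since no norm-1 observable would admit any third-law-constrained scheme at all. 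The same over-reading appears in your phrase ``$\one\sub{\rr_\alpha}-E_{x,\alpha}>\zero$'': this operator is only guaranteed to be nonzero and positive, not positive definite; your inequality $\tr[E_{x,\alpha}\,\omega_\alpha]<1$ nevertheless survives because you also invoke full-rankness (faithfulness) of $\omega_\alpha$, which is the load-bearing ingredient and should be stated as such.
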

\begin{proof}
Assume that $\ii$ is ideal, so that for every state $\rho$ and for  every outcome $x$ such that $\tr[\E_x \rho]=1$, it holds that $\ii_x(\rho) = \rho$. Since $\tr[\E_x \rho] =1$ is equivalent to  $\tr[\E_y \rho] = \delta_{x,y}$, and hence $\ii_y(\rho) = \zero$ for all $y\ne x$,  this implies that $\ii_\xx(\rho) = \sum_y \ii_y(\rho) = \ii_x(\rho) = \rho$, and hence  $\rho \in \ff(\ii_\xx) = \ff(\ii_\av)$. But by \eq{eq:factor-av-channel},  we have
\begin{equation}\label{eq:ideal-proof-1}
 \rho =    \ii_\av(\rho) = \bigoplus_\alpha \sigma_\alpha \otimes \omega_\alpha, 
\end{equation}
where $\omega_\alpha$ are fixed full-rank states on $\rr_\alpha$ and $\sigma_\alpha := \tr\sub{\rr_\alpha}[P_{\alpha} \rho P_{\alpha}]$ are sub-unit-trace positive operators on $\kk_\alpha$.

Since $\E$ is norm-1,  then by \lemref{lemma:effect-factor-decomposition} it follows that for each outcome $x$,  there exists at least one $\alpha$ such that $\| E_{x,\alpha} \| =1$. Any state on $\kk_\alpha \otimes \rr_\alpha$ written as $\rho = \sigma_\alpha \otimes \mu_\alpha$, such that $\tr[E_{x,\alpha} \mu_\alpha] = 1$, will give  $\tr[\E_x \rho] = \tr[ (\one\sub{\kk_\alpha} \otimes E_{x,\alpha}) (\sigma_\alpha \otimes \mu_\alpha)] = \tr[E_{x,\alpha} \mu_\alpha] = 1$.  By \eq{eq:ideal-proof-1}, if $\ii$ is ideal we must have
\begin{align*}
    \sigma_\alpha \otimes \mu_\alpha = \sigma_\alpha \otimes \omega_\alpha,
\end{align*}
that is,    $\tr[\E_{x,\alpha} \mu_\alpha] =1 \iff \mu_\alpha = \omega_\alpha$. But given that $\omega_\alpha$ are full-rank states on $\rr_\alpha$, $\tr[\E_{x,\alpha} \mu_\alpha] =1$ if and only if   $\dim(\rr_\alpha) = 1$, so that $ E_{x,\alpha} = \one\sub{\rr_\alpha}$. But by \lemref{lemma:effect-factor-decomposition}, if $\E$ is non-trivial, $E_{x,\alpha} < \one\sub{\rr_\alpha}$ must hold.  We therefore have a contradiction, and so $\ii$ cannot be ideal. 
\end{proof} 

In particular, let us highlight the fact that if $\E$ is a non-trivial norm-1 observable, then for every outcome $x$,  and for every state $\rho$ such that $\tr[\E_x\rho] = 1$, then  a third law constrained measurement will give $\ii_x(\rho) \ne \rho$.

\section{Extremality}\label{app:extremality}

Let $\{K_i^{(x)} : i=1,\dots, M_x\}$ be a minimal Kraus representation for the operation $\ii_x$ of an $\E$-compatible instrument $\ii$, with $M_x$ the Kraus rank of $\ii_x$.  The following is a series of necessary conditions for extremality of such an instrument:

\begin{lemma}\label{lemma:extremality-conditions-general}
The instrument $\ii$ is extremal only if the following conditions are met:
\begin{enumerate}[(i)]
    \item The operations $\ii_x$ are all  extremal.
    \item $M_x \leqslant  \rank{\E_x}$ for all $x$.
     \item $\{K_i^{(x)*} K_j^{(x)} : x \in \xx; i, j=1,\dots, M_x\}$ are linearly independent (this condition is necessary and sufficient).
\end{enumerate}
\end{lemma}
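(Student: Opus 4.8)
The plan is to prove condition (iii) first, in both directions, and then read off (i) and (ii) as immediate consequences. The one tool I would invoke is the Arveson--Choi ``Radon--Nikodym'' theorem for completely positive maps (in the form used in the theory of extremal instruments, see Refs.~\cite{DAriano2011,Pellonpaa2013}): if a CP map $\Psi$ is dominated in the CP order by the operation $\ii_x$, and $\{K_i^{(x)}\}_{i=1}^{M_x}$ is a minimal (hence linearly independent) Kraus family for $\ii_x$, then there is a unique $M_x\times M_x$ matrix $B$ with $\zero\leqslant B\leqslant\one$ such that $\Psi(\cdot)=\sum_{i,j}B_{ij}K_i^{(x)}\cdot K_j^{(x)*}$, and minimality makes the assignment $B\mapsto\sum_{ij}B_{ij}K_i^{(x)}\cdot K_j^{(x)*}$ injective. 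I would also repeatedly use the elementary observation that, since $\E_x=\sum_i K_i^{(x)*}K_i^{(x)}$, every $K_i^{(x)}$ annihilates $\ker\E_x$ (if $\<\psi|\E_x|\psi\>=0$ then each $K_i^{(x)}\ket{\psi}=0$), so that $K_i^{(x)}=K_i^{(x)}P_x$ with $P_x$ the support projection of $\E_x$, and hence each $K_i^{(x)*}K_j^{(x)}$ belongs to $\lo(P_x\hs)$.

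For sufficiency in (iii), I would suppose the family $\{K_i^{(x)*}K_j^{(x)}\}_{x,i,j}$ is linearly independent and that $\ii=\lambda\,\ii^{(1)}+(1-\lambda)\,\ii^{(2)}$ with $\lambda\in(0,1)$. For each $x$ one has $\lambda\,\ii^{(1)}_x\leqslant\ii_x$, so the domination theorem gives $\lambda\,\ii^{(1)}_x(\cdot)=\sum_{ij}(B_x)_{ij}K_i^{(x)}\cdot K_j^{(x)*}$. Imposing that $\ii^{(1)}_\xx$ is a channel, i.e.\ $\sum_x\ii^{(1)*}_x(\onesys)=\onesys=\sum_{x,i}K_i^{(x)*}K_i^{(x)}$, yields $\sum_{x,i,j}\left((B_x)_{ij}-\lambda\delta_{ij}\right)K_j^{(x)*}K_i^{(x)}=\zero$, and linear independence forces $B_x=\lambda\one$ for every $x$, i.e.\ $\ii^{(1)}=\ii$; symmetrically $\ii^{(2)}=\ii$, so $\ii$ is extremal.

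For necessity in (iii), I would suppose the family is linearly dependent; passing to the Hermitian and anti-Hermitian parts of the coefficient matrices of a nontrivial relation (at least one of which is again a nontrivial relation) produces Hermitian matrices $\{H_x\}$, not all zero, with $\sum_{x,i,j}(H_x)_{ij}K_j^{(x)*}K_i^{(x)}=\zero$. Put $G_x:=\sum_{ij}(H_x)_{ij}K_j^{(x)*}K_i^{(x)}$, which is Hermitian with $\sum_x G_x=\zero$. Using $K_i^{(x)}=K_i^{(x)}P_x$ together with $\sum_y\E_y=\onesys$, one checks that $G_x$ is supported on $P_x\hs$ and vanishes on the eigenvalue-$1$ eigenspace of $\E_x$ (if $\E_x\ket{\psi}=\ket{\psi}$ then $\ket{\psi}\in\ker\E_y$ for all $y\neq x$, so $G_y\ket{\psi}=0$ for $y\neq x$, and $G_x\ket{\psi}=-\sum_{y\neq x}G_y\ket{\psi}=0$). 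On the orthocomplement of the $0$- and $1$-eigenspaces of $\E_x$ the spectrum of $\E_x$ is bounded away from $\{0,1\}$ in finite dimension, so for $\epsilon>0$ small enough the maps $\ii^{(\pm)}_x(\cdot):=\sum_{ij}(\delta_{ij}\pm\epsilon(H_x)_{ij})K_i^{(x)}\cdot K_j^{(x)*}$ are legitimate operations ($\one\pm\epsilon H_x\geqslant\zero$ gives complete positivity, and $\ii^{(\pm)*}_x(\onesys)=\E_x\pm\epsilon G_x$ is an effect), $\sum_x\ii^{(\pm)}_x$ is a channel because $\sum_x G_x=\zero$, one has $\ii=\tfrac{1}{2}\ii^{(+)}+\tfrac{1}{2}\ii^{(-)}$, and $\ii^{(+)}\neq\ii^{(-)}$ by injectivity of $B\mapsto\sum B_{ij}K_i^{(x)}\cdot K_j^{(x)*}$ together with $H_x\neq\zero$ for some $x$. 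Hence $\ii$ is not extremal, completing the equivalence in (iii).

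Finally I would deduce (i) and (ii). If $\ii$ is extremal then by (iii) the whole family is linearly independent, so in particular, for each fixed $x$, the $M_x^2$ operators $\{K_i^{(x)*}K_j^{(x)}\}_{i,j}$ are linearly independent; running the argument of (iii) restricted to that single $x$ shows $\ii_x$ is extremal in the convex set of operations $\Psi$ with $\Psi^{*}(\onesys)=\E_x$, which is (i). Since those $M_x^2$ operators lie in $\lo(P_x\hs)$, a space of dimension $\rank{\E_x}^2$, their linear independence forces $M_x^2\leqslant\rank{\E_x}^2$, i.e.\ (ii). The one genuinely delicate point will be the necessity half of (iii): one must confirm that the Hermitian perturbations $\pm\epsilon G_x$ leave every $\E_x\pm\epsilon G_x$ a bona fide effect, and this is exactly where the support analysis of $G_x$ and the global constraint $\sum_x G_x=\zero$ are needed; everything else is routine bookkeeping with the domination theorem.
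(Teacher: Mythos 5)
Your argument is correct, but it is organised quite differently from the paper's proof, which is essentially a pair of citations: the paper disposes of (i) as an immediate consequence of the definition (freeze the other operations and lift any convex split of a single $\ii_x$ to a split of $\ii$), obtains (ii) from Choi's extremality criterion for a single CP map together with the observation that $\ii_x^*(\cdot)=\sqrt{\E_x}\,\Phi^*(\cdot)\sqrt{\E_x}$ maps into $\lo(P_x\hs)$ so the dimension count runs over $\rank{\E_x}^2$, and simply cites Theorem 5 of Ref.~\cite{DAriano2011} for (iii). You instead prove the instrument-level criterion (iii) in both directions from the Arveson--Choi Radon--Nikodym theorem and then read off (i) and (ii) as corollaries: restricting the linear-independence of $\{K_i^{(x)*}K_j^{(x)}\}$ to a single $x$ yields extremality of $\ii_x$ in the set of operations with $\Psi^*(\onesys)=\E_x$ (which is the correct notion for (i)), and the support relation $K_i^{(x)}=K_i^{(x)}P_x$ places the $M_x^2$ independent operators inside $\lo(P_x\hs)$, reproducing the paper's dimension count by a slightly different route. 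What your approach buys is self-containedness (you in effect reprove the cited D'Ariano--Perinotti--Sedl\'{a}k result) and a clean logical hierarchy (iii) $\Rightarrow$ (i),(ii); what the paper's approach buys is brevity and a proof of (i) that needs no machinery at all. One small simplification of your "delicate point": the bound $\zero\leqslant\E_x\pm\epsilon G_x\leqslant\onesys$ needs no spectral analysis of $\E_x$, since complete positivity of each perturbed operation gives $\E_y\pm\epsilon G_y\geqslant\zero$ for every $y$, and the global constraint $\sum_y G_y=\zero$ then gives $\E_x\pm\epsilon G_x=\onesys-\sum_{y\neq x}(\E_y\pm\epsilon G_y)\leqslant\onesys$ automatically; only the positivity of $\one\pm\epsilon H_x$ for small $\epsilon$ and the vanishing of $\sum_y G_y$ are actually needed.
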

\begin{proof}
\begin{enumerate}[(i)]
\item This trivially follows from the definition of extremal instruments. 

\item By  Remark 6 of Ref. \cite{Choi1975}, a sub-unital CP map $\Phi^* : \lo(\kk) \to \lo(\h)$, with minimal Kraus representation $\{K_i : i=1,\dots, M\}$, is extremal only if $\{K_i^*K_j \in \lo(\h)\}$ are linearly independent. Since the cardinality of this set is $M^2$, while $\dim(\lo(\h)) = \dim(\h)^2$, it follows that  $M \leqslant \dim(\h)$ must hold. Now, since we may always write $\ii_x^*(\cdot) = \sqrt{\E_x} \Phi^*(\cdot) \sqrt{\E_x}$,  it holds that $\ii_x^* : \lo(\hs) \to \lo(P_x \hs)$, with $P_x$ the minimal  projection on the support of $\E_x$ so that $\dim( P_x \hs) = \rank{\E_x}$. By above, extremality of  $\ii_x$ implies that  $M_x \leqslant  \rank{\E_x}$. 

 \item Theorem 5 of Ref. \cite{DAriano2011}.

\end{enumerate}

\end{proof}

Now we shall provide some useful results that will allow us to investigate how the third law constrains extremality in the sequel:

\begin{lemma}\label{lemma:Gamma-independent-factorisation}
Let $\Gamma_\xi : \lo(\hs \otimes \ha) \to \lo(\hs)$ be a restriction map. Assume that  $\Gamma_\xi(A)$ is independent of the state $\xi$. It holds that $A = B \otimes \oneapp$.
\end{lemma}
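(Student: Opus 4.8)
The plan is to unpack the definition of the restriction map, convert the hypothesis into a statement about partial traces, and then read off the matrix elements of $A$. Recall that $\Gamma_\xi = \Lambda_\xi^*$, so that $\tr[\Gamma_\xi(A)\rho] = \tr[A(\rho\otimes\xi)]$ for every $\rho \in \lo(\hs)$ and every state $\xi$ on $\ha$. By hypothesis $\Gamma_\xi(A)$ does not depend on $\xi$; call this common operator $B \in \lo(\hs)$. Using the elementary identity $\tr[X(\onesys\otimes\xi)] = \tra[\tr\sub{\s}[X]\,\xi]$ with $X = A(\rho\otimes\oneapp)$, we get
\[
\tr[B\rho] \;=\; \tr[A(\rho\otimes\xi)] \;=\; \tra\!\bigl[\tr\sub{\s}[A(\rho\otimes\oneapp)]\,\xi\bigr]
\]
for all $\rho\in\lo(\hs)$ (by linearity it suffices to know it on states) and all states $\xi$ on $\ha$.

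Next I would fix $\rho$ and set $C_\rho := \tr\sub{\s}[A(\rho\otimes\oneapp)] \in \lo(\ha)$. The displayed equation says that $\tra[C_\rho\,\xi]$ is the same number for every state $\xi$; taking $\xi = \ket{\phi}\bra{\phi}$ shows that $\<\phi|C_\rho|\phi\>$ is constant over all unit vectors $\ket{\phi}$, hence $C_\rho$ is a multiple of $\oneapp$. (To see this for a not-necessarily-Hermitian $C_\rho$, split it into its Hermitian and anti-Hermitian parts; each has constant expectation on all unit vectors and so is a real multiple of $\oneapp$.) Writing $C_\rho = c(\rho)\,\oneapp$ and using $\tra[c(\rho)\oneapp\,\xi] = c(\rho)$ together with the display gives $c(\rho) = \tr[B\rho]$, i.e.
\[
\tr\sub{\s}[A(\rho\otimes\oneapp)] \;=\; \tr[B\rho]\,\oneapp \qquad \text{for all } \rho \in \lo(\hs).
\]

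Finally I would extract matrix elements. Fixing orthonormal bases $\{\ket{i}\}$ of $\hs$ and $\{\ket{k}\}$ of $\ha$, a short computation gives $\<k|\tr\sub{\s}[A(\rho\otimes\oneapp)]|l\> = \sum_{i,j}\<i,k|A|j,l\>\<j|\rho|i\>$; choosing $\rho = \ket{j_0}\bra{i_0}$ collapses the left side to $\<i_0,k|A|j_0,l\>$, while the right side of the last display becomes $\<i_0|B|j_0\>\,\delta_{k,l}$. Hence $\<i_0,k|A|j_0,l\> = \<i_0|B|j_0\>\,\delta_{k,l}$ for all indices, which is exactly $A = B\otimes\oneapp$.

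The argument is essentially a direct computation, so there is no serious obstacle; the only step requiring a little care is the passage from "$\tra[C_\rho\,\xi]$ independent of $\xi$" to "$C_\rho \propto \oneapp$", which rests on the elementary fact that an operator with constant expectation on every unit vector is a scalar, reduced to the Hermitian case as above.
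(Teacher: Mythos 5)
Your proof is correct, and it is essentially the paper's argument in slightly different packaging: both reduce the hypothesis to the statement that a certain apparatus-side operator has constant expectation value on all unit vectors, and then conclude it is a multiple of $\oneapp$ by polarization (the paper carries out the polarization explicitly with the vectors $\frac{1}{\sqrt{2}}(|b_j\rangle\pm|b_k\rangle)$ and $\frac{1}{\sqrt{2}}(|b_j\rangle\pm\imag|b_k\rangle)$, whereas you invoke the general fact and route the bookkeeping through the partial trace $\tr\sub{\s}[A(\rho\otimes\oneapp)]$). The one step worth making explicit is the extension from states $\rho$ to the non-Hermitian choice $\rho=|j_0\rangle\langle i_0|$, which is justified because both sides of your displayed identity are linear in $\rho$ and the states span $\lo(\hs)$ in finite dimensions.
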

\begin{proof}
Let $\{\ket{e_n}\}$ and $\{\ket{b_j}\}$ be orthonormal bases that span $\hs$ and $\ha$, respectively.  If $\Gamma_\xi(A)$ is independent of $\xi$, it follows that for any unit vector $|\psi\> \in \ha$,  
\begin{align*}
    \<e_m| \Gamma_{|\psi\>\<\psi|}(A)|e_n\> = \langle e_m \otimes \psi | A |e_n \otimes \psi\rangle 
= C_{m,n}. 
\end{align*}
Now, for  any choice of $j$ and $k$,  define  $|\phi_\pm\rangle := 
\frac{1}{\sqrt{2}}(|b_j \rangle \pm |b_k\rangle)$ and $|\varphi_\pm\rangle := 
\frac{1}{\sqrt{2}}(|b_j \rangle \pm \imag |b_k\rangle)$. We obtain 
\begin{align*}
&\<e_m| \Gamma_{|\phi_\pm\>\<\phi_\pm|}(A)|e_n\> = C_{m,n} \nonumber \\
 & \,  \pm \frac{1 - \delta_{j,k}}{2}\bigg( \langle e_m \otimes b_j |A|e_n \otimes b_k\rangle 
+ \langle e_m \otimes b_k |A|e_n \otimes b_j\rangle\bigg),    
\end{align*}
and 
\begin{align*}
&\<e_m| \Gamma_{|\varphi_\pm\>\<\varphi_\pm|}(A)|e_n\> = C_{m,n} \nonumber \\
 & \,  \pm \imag \frac{1 - \delta_{j,k}}{2}\bigg( \langle e_m \otimes b_j |A|e_n \otimes b_k\rangle 
- \langle e_m \otimes b_k |A|e_n \otimes b_j\rangle\bigg).    
\end{align*}
Thus we conclude that $\langle e_m \otimes b_j |A|e_n 
\otimes b_k \rangle = \delta_{j,k}C_{m,n}$. It follows that 
\begin{align*}
    A &= \sum_{m,n}\sum_{j,k} \langle e_m \otimes b_j |A|e_n 
\otimes b_k \rangle |e_m\>\<e_n| \otimes |b_j\>\<b_k|, \nonumber \\
& = \sum_{m,n}\sum_{j} C_{m,n} |e_m\>\<e_n| \otimes |b_j\>\<b_j| = B \otimes \oneapp.
\end{align*}.  
\end{proof}

\begin{lemma}\label{lemma:extremal-interaction-channel-identity}
Let  $\mm:=(\ha, \xi, \ee, \Z)$ be  a measurement scheme for an instrument $\ii$  acting in $\hs$. Assume that $\mm$ is constrained by the third law. If $\ii$ is extremal, then it holds that 
\begin{align*}
    \ee^*(\cdot \otimes \Z_x) = \ii_x^*(\cdot) \otimes \oneapp
\end{align*}
for all $x$. 
\end{lemma}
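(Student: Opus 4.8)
The plan is to let the apparatus preparation vary while holding $\ee$ and $\Z$ fixed. Write the dual operations realised by $\mm$ as $\ii_x^*(\cdot) = \Gamma_\xi\big(\ee^*(\cdot \otimes \Z_x)\big)$ (this is just \eq{eq:gamma-u} with $B = \cdot \otimes \Z_x$), and for an arbitrary state $\eta$ on $\ha$ let $\ii^{(\eta)}$ denote the instrument implemented by the measurement scheme $\mm_\eta := (\ha, \eta, \ee, \Z)$, which by \eq{eq:instrument-dilation} is a legitimate instrument with dual operations $\ii_x^{(\eta)*}(\cdot) = \Gamma_\eta\big(\ee^*(\cdot \otimes \Z_x)\big)$; note $\ii^{(\xi)} = \ii$. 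The target is to show that $\ii^{(\eta)} = \ii$ for \emph{every} state $\eta$ on $\ha$. Once this is done, $\Gamma_\eta\big(\ee^*(A \otimes \Z_x)\big) = \ii_x^*(A)$ is independent of $\eta$ for each fixed $A \in \lo(\hs)$ and each $x$, so \lemref{lemma:Gamma-independent-factorisation} applied to the operator $\ee^*(A\otimes\Z_x)$ gives $\ee^*(A \otimes \Z_x) = B_x(A) \otimes \oneapp$ for some $B_x(A) \in \lo(\hs)$; restricting both sides with any state $\eta$ identifies $B_x(A) = \ii_x^{(\eta)*}(A) = \ii_x^*(A)$, which is exactly the claimed identity $\ee^*(\cdot \otimes \Z_x) = \ii_x^*(\cdot) \otimes \oneapp$.

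To prove $\ii^{(\eta)} = \ii$ I would use the affine dependence of the restriction map on its state argument: if $\xi = \lambda \eta + (1-\lambda)\eta'$ then $\Gamma_\xi = \lambda \Gamma_\eta + (1-\lambda)\Gamma_{\eta'}$, hence $\ii_x^*(\cdot) = \lambda\, \ii_x^{(\eta)*}(\cdot) + (1-\lambda)\,\ii_x^{(\eta')*}(\cdot)$ for every $x$, i.e. $\ii = \lambda \ii^{(\eta)} + (1-\lambda)\ii^{(\eta')}$ in the convex set of instruments on the fixed value space $\xx$. Now invoke that $\xi$ is full-rank (clause (i) of \defref{defn:third-law-measurement}): by \lemref{lemma:full-rank-state-inequality} there is $\lambda \in (0,1)$ with $\xi \geqslant \lambda\eta$, and setting $\eta' := (1-\lambda)^{-1}(\xi - \lambda\eta)$, which is a genuine state since $\xi - \lambda\eta \geqslant \zero$ and $\tr[\eta'] = 1$, yields exactly such a convex decomposition with weight strictly inside $(0,1)$. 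Since $\ii$ is extremal, this forces $\ii = \ii^{(\eta)} = \ii^{(\eta')}$. As $\eta$ was an arbitrary state on $\ha$ — in particular an arbitrary pure state, which is all \lemref{lemma:Gamma-independent-factorisation} requires — the desired $\eta$-independence follows, and the reasoning of the previous paragraph finishes the proof.

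Conceptually the entire argument is the observation that extremality makes the implemented instrument insensitive to the choice of apparatus state, because a full-rank $\xi$ dominates a strictly positive multiple of \emph{every} state on $\ha$; after that, \lemref{lemma:Gamma-independent-factorisation} does the factorisation mechanically. I therefore do not expect a genuine obstacle, only routine verifications: that $\ii^{(\eta)}$ and $\ii^{(\eta')}$ are bona fide instruments (their operations are CP and sum to a channel, immediate from \eq{eq:instrument-dilation} because $\eta,\eta'$ are states), that the convex weight lies in the open interval $(0,1)$ so that extremality is actually applicable, and that only the apparatus-state clause of \defref{defn:third-law-measurement} is needed here — the full-rank-preserving property of $\ee$ plays no role in this particular lemma, although it is of course available.
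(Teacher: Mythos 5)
Your proposal is correct and follows essentially the same route as the paper's proof: use the full-rank $\xi$ together with \lemref{lemma:full-rank-state-inequality} to write $\xi$ as a nontrivial convex mixture involving an arbitrary apparatus state, invoke extremality of $\ii$ to conclude the implemented instrument is independent of that state, and then apply \lemref{lemma:Gamma-independent-factorisation} to factorise $\ee^*(\cdot\otimes\Z_x)$. The only cosmetic difference is that you decompose $\xi$ against an arbitrary state $\eta$ while the paper does so against an arbitrary pure state $|\phi\>\<\phi|$, which is all that \lemref{lemma:Gamma-independent-factorisation} needs anyway.
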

\begin{proof}
If $\mm$ is constrained by the third law, then $\xi$ is full-rank. By \lemref{lemma:full-rank-state-inequality},  for an arbitrary unit vector $\ket{\phi} \in \ha$, there exists a $0< \lambda < 1$ such that $\xi \geqslant \lambda |\phi\>\<\phi|$. Defining the state $\sigma:= (\xi - \lambda |\phi\>\<\phi|)/ (1 - \lambda)$, we may thus decompose $\xi$ as $\xi = \lambda |\phi\>\<\phi| + (1-\lambda) \sigma$. Using the map $\Gamma^\ee_\xi$ defined in \eq{eq:gamma-u}, we may thus write 
\begin{align*}
    \ii_x^*(\cdot) &= \Gamma_\xi^\ee(\cdot \otimes \Z_x), \nonumber \\
    & = \lambda \Gamma_{|\phi\>\<\phi|}^\ee(\cdot \otimes \Z_x) + (1-\lambda) \Gamma_\sigma^\ee(\cdot \otimes \Z_x).
\end{align*}
Since $\ii$ is assumed to be extremal, it follows that 
\begin{align*}
    \ii_x^*(\cdot) &= \Gamma_{|\phi\>\<\phi|}^\ee(\cdot \otimes \Z_x) = \Gamma_{|\phi\>\<\phi|} \circ \ee^*(\cdot \otimes \Z_x)
\end{align*}
must hold for arbitrary unit vectors $\ket{\phi} \in \ha$. The claim follows from  \lemref{lemma:Gamma-independent-factorisation}.
\end{proof}

We are finally ready to provide necessary conditions for extremality of an instrument constrained by the third law:

\begin{prop}\label{prop:extremality-third-law}
Let  $\mm:=(\ha, \xi, \ee, \Z)$ be  a measurement scheme for a non-trivial observable $\E$, with  the instrument $\ii$,  acting in $\hs$. Assume that $\mm$ is constrained by the third law, and that $\ii$ is extremal. The following hold:

\begin{enumerate}[(i)]
    \item $\rank{\E_x} \geqslant \sqrt{\dim(\hs)}$ for all $x$.

    \item $\ee$ cannot be a unitary channel.
    
\end{enumerate}
\end{prop}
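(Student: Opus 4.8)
The two items will be handled separately. For item (i) the idea is to pit a \emph{lower} bound on the Kraus rank $M_x$ of each operation $\ii_x$, forced by the third law, against the \emph{upper} bound $M_x \leqslant \rank{\E_x}$ supplied by extremality (item (ii) of \lemref{lemma:extremality-conditions-general}). Concretely: applying item (iv) of \lemref{lemma:measurement-third-law} to the full-rank state $\onesys/\dim(\hs)$ shows that $\ii_x(\onesys)$ has full rank in $\hs$. Fixing a minimal Kraus decomposition $\ii_x(\cdot) = \sum_{i=1}^{M_x} K_i^{(x)} \cdot \, K_i^{(x)*}$, one has $\ii_x(\onesys) = \sum_i K_i^{(x)} K_i^{(x)*}$, whose range equals $\sum_i \mathrm{Im}(K_i^{(x)})$, so full rank forces $\sum_{i=1}^{M_x} \rank{K_i^{(x)}} \geqslant \dim(\hs)$. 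On the other hand $\sum_i K_i^{(x)*} K_i^{(x)} = \E_x$, so every $K_i^{(x)}$ annihilates $\ker \E_x$ and hence $\rank{K_i^{(x)}} \leqslant \rank{\E_x}$. Combining, $\dim(\hs) \leqslant M_x \rank{\E_x} \leqslant \rank{\E_x}^2$, which is the claim.

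For item (ii) I will argue by contradiction, assuming $\ee(\cdot) = U \cdot U^*$ for a unitary $U$ on $\hs \otimes \ha$. Summing the identity of \lemref{lemma:extremal-interaction-channel-identity} over $x$ and using $\sum_x \Z_x = \oneapp$ gives $U^*(A \otimes \oneapp) U = \ii_\xx^*(A) \otimes \oneapp$ for every $A \in \lo(\hs)$. The left-hand side exhibits $A \mapsto \ii_\xx^*(A) \otimes \oneapp$ as the composite of the unital, injective $*$-homomorphism $A \mapsto A \otimes \oneapp$ followed by the $*$-automorphism $X \mapsto U^*XU$; since $B \mapsto B \otimes \oneapp$ is itself a faithful $*$-homomorphism, this forces $\ii_\xx^*$ to be a unital, injective $*$-homomorphism of $\lo(\hs)$ into itself, hence by dimension a $*$-automorphism, so $\ii_\xx^*(A) = V A V^*$ for some unitary $V$ on $\hs$, i.e. $\ii_\xx(\rho) = V^* \rho V$. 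Now any Kraus decomposition of the operations $\ii_x$ assembles into a Kraus family for $\ii_\xx$; comparing it with the one-element family $\{V^*\}$ via the uniqueness of Kraus representations forces every Kraus operator of every $\ii_x$ to be a scalar multiple of $V^*$. Hence $\ii_x(\cdot) = p_x V^* \cdot V$ with $p_x \geqslant 0$ and $\sum_x p_x = 1$, so $\E_x = \ii_x^*(\onesys) = p_x \onesys$ for all $x$; this makes $\E$ trivial, contradicting the hypothesis.

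I expect item (ii) to be the delicate part. The crux is the step from the operator identity $U^*(A \otimes \oneapp) U = \ii_\xx^*(A) \otimes \oneapp$ to the multiplicativity of $\ii_\xx^*$ — one must use that tensoring with $\oneapp$ is a faithful $*$-homomorphism so that the homomorphism property descends — after which the rigidity of $*$-automorphisms of a full matrix algebra, combined with the uniqueness of Kraus decompositions, closes the argument. For item (i) the only conceptual point is the realisation that, under the third law, each operation must spread onto all of $\hs$ even though it ``factors through'' the possibly low-rank effect $\E_x$, so its Kraus rank cannot be small; the remainder is the dimension count that pinches against \lemref{lemma:extremality-conditions-general}.
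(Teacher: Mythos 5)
Your argument is correct, and it splits naturally into a part that mirrors the paper and a part that genuinely deviates. For item (i) you do essentially what the paper does: both arguments boil down to the count $\dim(\hs) \leqslant M_x \rank{\E_x}$ (the paper gets it by expanding $\ii_x(\rho)$ for a full-rank state $\rho$ on $P_x\hs$ into at most $M_x\rank{\E_x}$ rank-one terms; you get it by bounding $\rank{K_i^{(x)}}\leqslant\rank{\E_x}$ via $\sum_i K_i^{(x)*}K_i^{(x)}=\E_x$ and using that the range of $\ii_x(\onesys)$ is the span of the images of the Kraus operators), and then both pinch against $M_x\leqslant\rank{\E_x}$ from \lemref{lemma:extremality-conditions-general}. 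For item (ii) you share the paper's key step — summing \lemref{lemma:extremal-interaction-channel-identity} over $x$ to get $U^*(A\otimes\oneapp)U=\ii_\xx^*(A)\otimes\oneapp$ and concluding that $\ii_\xx^*$ is unitarily implemented (the paper simply asserts this; your $*$-homomorphism justification fills in that step cleanly) — but your endgame differs. The paper proceeds structurally: from $L^*(A\otimes\oneapp)L=A\otimes\oneapp$ with $L=U(V^*\otimes\oneapp)$ it deduces $U=V\otimes W$, so the interaction factorises and a direct computation gives $\E_x=\tr[\Z_xW\xi W^*]\,\onesys$. You instead exploit that $\ii_\xx$ has Kraus rank one and invoke the uniqueness (unitary-freedom) theorem for Kraus decompositions applied to the refinement $\ii_\xx=\sum_x\ii_x$, forcing $\ii_x(\cdot)=p_xV^*\cdot V$ and hence $\E_x=p_x\onesys$. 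Both routes contradict non-triviality of $\E$; yours is somewhat more direct and avoids the commutant argument, while the paper's yields the extra structural fact that a unitary interaction compatible with extremality would have to be a product unitary $V\otimes W$.
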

\begin{proof}
Let us first prove (i). Let $P_x$ be the minimal projection on the support of $\E_x$, where we note that $\dim(P_x \hs) = \rank{\E_x}$. Consider a  full-rank state  $\rho = \sum_{n=1}^{\rank{\E_x}} p_n |\psi_n\>\<\psi_n|$ on $P_x\hs$, where $p_n>0$ and $\ket{\psi_n} \in P_x \hs$ for all $n$. By item (iv) of \lemref{lemma:measurement-third-law}, it holds that $\ii_x(\rho)$ has full-rank in $\hs$, i.e., $\rank{\ii_x(\rho)} = \dim(\hs)$. Now, for any unit vector $\ket{\psi} \in P_x\hs$ it holds that
\begin{align*}
    \ii_x(|\psi\>\<\psi|) = \sum_{i=1}^{M_x} |K_i^{(x)} \psi\>\<K_i^{(x)} \psi|.
\end{align*} 
We thus have $\dim(\hs) = \rank{\ii_x(\rho)} \leqslant M_x \rank{\E_x}$. By item (ii) of \lemref{lemma:extremality-conditions-general}, extremality implies that $M_x \leqslant \rank{\E_x}$. It follows that $\dim(\hs) \leqslant \rank{\E_x}^2$.

Now we shall prove (ii). If $\ii$ is extremal, then by \lemref{lemma:extremal-interaction-channel-identity} it holds that  $\ee^*(\cdot \otimes \oneapp)
= \ii_{\xx}^*(\cdot)\otimes \oneapp$. If $\ee^*(\cdot) = U^* \cdot U$ is a unitary channel, then $\ii_{\xx}^*$ is 
also a unitary channel, and so there exists a unitary operator $V \in \lo(\hs)$ such that $\ii_\xx^*(\cdot) = V^* \cdot V$, and we may thus write 
\begin{align*}
U^*(A\otimes \one)U=  V^* \otimes \oneapp (A \otimes \oneapp) V \otimes \oneapp
\end{align*}
for all $A \in \lo(\hs)$. This implies that  $L^*(A\otimes \oneapp)L = A \otimes \oneapp$, with $L = U(V^*\otimes \oneapp)$, for all $A$. As such, for all $A$ the following commutation relation must hold: 
\begin{align*}
[A\otimes \oneapp, U(V^*\otimes \oneapp)]=\zero.    
\end{align*} 
 That is, there exists a unitary operator $W$ such that $U=V\otimes W$. Therefore, the unitary channel $\ee$ is a product of unitary channels acting separately in $\hs$ and $\ha$, i.e., $\ee = \ee_1\otimes \ee_2$, where $\ee_1(\cdot) = V \cdot V^*$ acts in $\hs$ and $\ee_2(\cdot) = W \cdot W^*$ acts in $\ha$. It follows that 
\begin{align*}
\E_x &= \Gamma_\xi\circ \ee^*(\onesys \otimes \Z_x) \nonumber \\
&= \Gamma_\xi(\ee_1^*(\onesys) \otimes \ee_2^*(\Z_x)) \nonumber \\
& = \Gamma_\xi(\onesys \otimes W^* \Z_x W) \nonumber \\
&= \tr[ \Z_x W \xi W^*] \onesys,
\end{align*}
which is a trivial observable.

\end{proof}

To show that extremality is indeed possible, and that the bound $\rank{\E_x} \geqslant \sqrt{\dim(\hs)}$ is tight, let us consider the following model. 

Consider $\hs :=\h_1\otimes \h_2$ and $\ha :=\h_3$, where $\h_i = \co^2$. Choose the  apparatus state preparation as an arbitrary full-rank state $\xi$,   and choose a pointer observable  $\Z_x=|x\rangle \langle x|$. Let the measurement interaction be the channel $\ee := \ee_2 \circ \ee_1$, where 
\begin{align*}
\ee_1(A\otimes B \otimes C) &= A \otimes C \otimes B, \\
\ee_2(A\otimes B \otimes C) &= \Phi(A \otimes B) \otimes C.
\end{align*} 
That is, at first a unitary swap channel is applied in $\h_2\otimes \h_3$, and subsequently a channel $\Phi$ is applied in $\h_1 \otimes \h_2$. The operations of the instrument implemented by the measurement scheme $\mm:= (\ha, \xi, \ee, \Z)$ are thus 
\begin{equation}\label{eq:extremal-example-1}
\ii_x^*(A \otimes B) = \Gamma_\xi \circ \ee_1^* (\Phi^*(A \otimes B) \otimes |x\>\<x|).    
\end{equation}
 
 $\Phi$ is a channel defined by 
\begin{eqnarray*}
\Phi(A\otimes B)= \sum_{x=0}^1\sum_{f=0}^1 
K_{x,f} (A\otimes B) K_{x,f}^*,
\end{eqnarray*}
with the Kraus operators defined below: 
\begin{eqnarray*}
K_{x,f}:= V_f \otimes |\varphi_f\rangle \langle x|, 
\end{eqnarray*}
where $|\varphi_0\rangle = |0\rangle $, $|\varphi_1\rangle = |+\rangle
=\frac{1}{\sqrt{2}}(|0\rangle + |1\rangle)$,  and $\{V_f\}$ are Kraus operators for some channel acting in $\h_1$, and are
\begin{eqnarray*}
V_0&=& \left[ \begin{array}{cc}
\sqrt{\frac{1}{4}}&0\\
0& \sqrt{\frac{3}{4}}
\end{array}
\right], 
\\ 
V_1 &=&
\sigma_x \left[ 
\begin{array}{cc}
\sqrt{\frac{3}{4}}&0\\
0& \sqrt{\frac{1}{4}}
\end{array}
\right]
= \left[
\begin{array}{cc}
0& \sqrt{\frac{1}{4}}\\
\sqrt{\frac{3}{4}}& 0
\end{array}
\right].
\end{eqnarray*}

As $\mathcal{E}_1$ is a unitary  channel, it is constrained by the third law. To confirm that $\ee_2$ is constrained by the third law, it is enough to check if $\Phi$ is so constrained. We have 
\begin{eqnarray*}
\Phi\left(\frac{\one}{2}\otimes \frac{\one}{2}\right)
&&=\frac{1}{4}
\sum_x \sum_f 
V_f V_f^* 
\otimes |\varphi_f\rangle 
\langle \varphi_f|
\\
&&=
\frac{1}{2}
\sum_f 
V_f V_f^* 
\otimes |\varphi_f \rangle 
\langle \varphi_f|
\\
&&=\frac{1}{2} \left[\begin{array}{cc}
\frac{1}{4}&0\\
0&\frac{3}{4}
\end{array}\right]
\otimes 
(|0\rangle \langle 0 | 
+ |+ \rangle \langle +|). 
\end{eqnarray*}
As the complete mixture is mapped to a full-rank state, then  $\Phi$, and hence also $\ee_2$, satisfies property (ii) of \lemref{lemma:third-law-channel-condition} and is thus constrained by the third law. Finally, since third-law constrained channels are closed under composition, and  given that the apparatus preparation is full-rank, then $\mm$ is constrained by the third law. Moreover, note that $\ee$ is evidently not a unitary channel. 

Now note that  $\Phi^*$ may also be written as
\begin{align}\label{eq:extremal-example-2}
\Phi^*(A \otimes B ) &=  \sum_{x,f} K_{x,f}^*A \otimes B K_{x,f} \nonumber \\
&=\sum_{x,f}\<\varphi_f| B |\varphi_f\>
V_f^*A V_f \otimes |x\rangle \langle x| \nonumber
\\
&=\sum_f \<\varphi_f| B |\varphi_f\> V_f^* A  V_f \otimes \one. 
\end{align}

By \eq{eq:extremal-example-1} and \eq{eq:extremal-example-2}, we may thus write 

\begin{align*}
\ii_x^*(A \otimes B) & = \Gamma_{\xi}\circ \ee_1^*(\Phi^*(A\otimes B) \otimes 
|x\rangle \langle x|) \\
& = \sum_f \<\varphi_f| B |\varphi_f\> \Gamma_{\xi}\circ \ee_1^*(V_f^* A  V_f \otimes \one \otimes |x\rangle \langle x|) \\
& = \sum_f \<\varphi_f| B |\varphi_f\> \Gamma_{\xi}(V_f^* A  V_f \otimes 
|x\rangle \langle x|\otimes \one) \\
& =  \sum_f \<\varphi_f| B |\varphi_f\> V_f^* A  V_f \otimes 
|x\rangle \langle x| \\
& = \sum_f  K_{x,f}^*(A\otimes B)
K_{x,f}.
\end{align*}
In particular, the measured observable is 
\begin{align*}
\E_x &= \ii_x^*(\one \otimes 
\one) = \sum_f V_f^*V_f \otimes |x\>\<x|  =  \one \otimes |x\rangle 
\langle x|.
\end{align*}
We see that $\E$ is sharp, and that  $\rank{\E_x} = 2$. But since $\hs = \co^2 \otimes \co^2$, and so $\dim(\hs) = 4$, it holds that $\rank{ \E_x}= \sqrt{\dim{ \hs}}$. As such, the necessary condition for extremality is satisfied, with the model saturating the bound $\rank{ \E_x} \geqslant  \sqrt{\dim{ \hs}}$. But by item (iii) of \lemref{lemma:extremality-conditions-general}, $\ii$ is extremal if and only if 
$\{K_{x,f}^*K_{x,g}\}_{x,f,g}$ is a linearly independent set. These operators are written as 
\begin{align*}
K_{x,f}^*K_{x,g}= 
 \langle \varphi_f|\varphi_g\rangle V_f^*V_g
\otimes |x\rangle \langle x|,    
\end{align*} 
where 
\begin{eqnarray*}
V_0^*V_0=
\left[
\begin{array}{cc}
1/4&0\\
0&3/4
\end{array}
\right], 
\quad 
V_1^*V_1=
\left[
\begin{array}{cc}
3/4&0\\
0&1/4
\end{array}
\right], \\
V_0^*V_1=
\left[
\begin{array}{cc}
0&1/4\\
3/4&0
\end{array}
\right], 
\quad 
V_1^*V_0=
\left[
\begin{array}{cc}
0&3/4\\
1/4&0
\end{array}
\right]. 
\end{eqnarray*}
The above operators are linearly independent, and so  $\{K_{x,f}^*K_{x,g}\}_{x,f,g}$ is a linearly independent set. Therefore our model, which is constrained by the third law, implements an extremal instrument, with the rank of the measured observable's effects taking their smallest possible values.


\bibliography{Projects-Thermal-Measurement.bib}

\end{document}